     \newcommand{\dist}{{\operatorname{dist}}}
     \newcommand{\Ran}{{\operatorname{Ran}}}
     \newcommand{\N}{{\mathbb{N}}}
     \newcommand{\R}{{\mathbb{R}}}
\newcommand{\slim}{{\rm s-}\lim}
\newcommand{\s}{{\rm s}}
\newcommand{\e}{{\rm e}}
\renewcommand{\i}{{\rm i}}
\renewcommand{\d}{{\rm d}}
\newcommand{\D}{{\rm D}}
\newcommand{\aux}{{\rm aux}}
\renewcommand{\Re}{{\rm Re}\,}
\renewcommand{\Im}{{\rm Im}\,}
\newcommand\inp[2][]{#1 \langle #2#1\rangle}
\newcommand\parb[2][]{#1 \big ( #2#1\big )}
\newcommand\parbb[2][]{#1 \Big ( #2#1\Big )}
\newcommand{\grad}{{\rm grad }\,}
\renewcommand{\exp}{{\rm exp}}
\newcommand{\mand}{\text{ and }}
\newcommand{\mfor}{\text{ for }}
\newcommand{\mForall}{\text{For all }}
\newcommand{\mon}{\text{ on }}
\newcommand{\vB}{{\mathcal B}}
\newcommand{\vD}{{\mathcal D}}
\newcommand{\vG}{{\mathcal G}}
\newcommand{\vH}{{\mathcal H}}
\newcommand{\vM}{{\mathcal M}}
\newcommand{\vO}{{\mathcal O}}
     \theoremstyle{plain}%default
     \newtheorem{thm}{Theorem}[section]
     \newtheorem{proposition}[thm]{Proposition}
     \newtheorem{lemma}[thm]{Lemma}
      \newtheorem{corollary}[thm]{Corollary}
     \theoremstyle{definition}
     \newtheorem{cond}[thm]{Condition}
     \newtheorem{remark}[thm]{Remark}
     \newtheorem{remarks}[thm]{Remarks}
\newtheorem*{remarks*}{Remarks}
\newtheorem*{remark*}{Remark}
     \numberwithin{equation}{section}
\title[Scattering theory for Riemannian Laplacians]{Scattering theory for Riemannian Laplacians}
\thanks{
This work was essentially done during K.I.'s stay in Aarhus University (academic
year 2009-2010).
He would like to express his gratitude for financial support from  FNU 160377
(2009--2011) as
well as  from  JSPS Wakate (B) 21740090
(2009--2012). E.S. thanks H.D. Cornean and I. Herbst for many preliminary
discussions of scattering theory on manifolds \cite{CHS2}. 
We thank H. Kumura for bringing our attention on his  work \cite{K2}.}
\author{K. Ito}
\address[K. Ito ]{Graduate School of Pure and Applied Sciences, University of Tsukuba\\
1-1-1 Tennodai, Tsukuba Ibaraki, 305-8571 Japan.  
} 
\email{ito-ken@math.tsukuba.ac.jp}
\author{E. Skibsted}
\address[E. Skibsted]{Institut for  Matematiske
Fag \\
Aarhus Universitet\\ Ny Munkegade  8000 Aarhus C,
Denmark}
\email{skibsted@imf.au.dk}
\begin{document}
\begin{abstract}
In this paper we introduce a notion of scattering theory for the
Laplace-Beltrami operator on  non-compact,
connected  and complete  Riemannian manifolds. A principal  condition is  given by
a certain positive lower bound of the second fundamental form of
angular submanifolds at infinity. Another condition is  certain 
bounds of derivatives up to order one  of the trace of this quantity. These
conditions are shown to be optimal for existence and completeness of a
 wave
operator. Our theory does not involve prescribed asymptotic behaviour  of the
metric at infinity (like  asymptotic Euclidean or hyperbolic
metrics studied  previously in the literature). A consequence of the
theory is spectral theory for the
Laplace-Beltrami operator including identification of the continuous spectrum and absence of
singular continuous spectrum.
\end{abstract}
\maketitle
\tableofcontents

\section{Introduction and results}\label{sec:introduction}
In this paper we introduce a notion of
scattering theory for the
Laplace-Beltrami operator on a rather general type of  non-compact manifold. In particular
we do not  impose asymptotics of the metric at infinity. Immediate consequences
include  identification of the continuous spectrum and absence of
singular continuous spectrum. We also show that our wave operator
implements a certain family of commuting asymptotic observables. To our
knowledge  most previous works on spectral and
scattering theory for the
Laplace-Beltrami operator on manifolds require asymptotics of the
metric  at infinity (or at least asymptotics of the large ball volume), see for example \cite {Bo, Do, FH, IN, K1, K2, Me,
  MZ}. Among these   works probably \cite {Do} 
is the closest to our setup. In fact Donnelly's assumptions include
the existence of a certain {\it exhaustion function} $b$ resembling the
function $r$ appearing in our assumptions, see Conditions
\ref{cond:conv} and \ref{cond:10.9.2.11.13} below. However he needs
asymptotics of the Hessian of $b^2$ while our Condition 
\ref{cond:conv} is a lower bound only of the Hessian of the analogous 
function $r^2$. Moreover the main issue of our paper is scattering 
theory while \cite {Do} only deals with spectral theory. In a companion
paper \cite {IS} we prove  absence of embedded  eigenvalues under
weaker conditions than considered in the present paper. All of our
results generalize to Schr\"odinger operators on manifolds (with
short-range potentials). We state and prove our results in this more
general context.

The comparison dynamics used to define our wave operator is
constructed from a certain family of geodesics for the (full) metric
in the spirit of primarily  \cite {HS, CHS1}. In this sense it is
non-perturbatively constructed. Nevertheless it provides a simple
explicit description of the large time behaviour of continuous
spectrum wave packets which is a fundamental  goal of
scattering theory  \cite {DG1}.

Let $(M,g)$ be a connected  complete $d$-dimensional Riemannian manifold, $d\ge 2$.
In the present paper we 
discuss the scattering theory for the Schr\"odinger operator
\begin{align}
H=-\tfrac12\triangle +V=H_0+V
\label{eq:10.5.4.3.24}
\end{align}
on the Hilbert space ${\mathcal H}=L^2(M)=L^2(M,(\det g)^{1/2}\mathrm{d}x)$.
Here $\triangle$ is the Laplace-Beltrami operator:
In any local coordinates $x$, if $g=g_{ij}\mathrm{d}x^i\otimes \mathrm{d}x^j$, then
\begin{align*}
-\triangle=p_i^*g^{ij}p_j=\frac{1}{(\det g)^{1/2}}p_i (\det g)^{1/2}g^{ij}p_j,&&
p_i=-\mathrm{i}\partial_{i}.
\end{align*} 
Note that indeed $p_i^*=(\det g)^{-1/2}p_i (\det g)^{1/2}$ is the adjoint of $p_i$.
Since our conditions will include that the potential $V=V(x)$ is bounded (Condition
\ref{cond:10.6.1.16.24} given below states that it is bounded and
short-range) $H=H_0+V$ is essentially self-adjoint on $C_{\rm
  c}^\infty(M)$. Concerning geometric notions
appearing below 
we refer to \cite{Chavel} (see also \cite{Jost} or \cite{Mi}).

We first impose, cf.\ \cite{K1},
\begin{cond}\label{cond:diffeo} 
There exists a relatively compact open set $O\Subset M$ such that 
the boundary $\partial O$ is smooth and the exponential map restricted to outward normal vectors:
$\exp \colon N^+\partial O\to E:=M\setminus \overline{O}$ is diffeomorphic.
\end{cond}   
Then we call a component of $E$ an \textit{end} and such $M$ a \textit{manifold with ends}.

The distance function $r(x)=\dist(x,\partial O)$, $x\in E$, belongs to
$C^\infty(E)$. In a neighbourhood of $\partial
O$, say $\vO$, we have an extension of $r$, say $\tilde r$, with the
property $\tilde r\in C^\infty(\vO)$.  In fact we can choose this
extension as the ``signed distance function''. We can then  construct
an extension  $\tilde r\in C^\infty(M)$ for which we  may assume
(although these requirements are  not essential) $-1\le
\tilde r<0$ and $|\nabla \tilde r|\le 1$ on $O$ (this is by considering a
certain composition of functions).
In the following we use the notation $r$ for this extended
function. We point out that our main results will  be independent of the
extension procedure, however we prefer in proofs to work entirely with objects
defined on the whole of $M$ (rather than with some defined on $E$ only).

We denote the Levi-Civita connection on $TM$ by 
$\nabla\colon \Gamma(TM)\to\Gamma(TM\otimes T^*M)$.
In general, the connection $\nabla$ extends naturally to 
\begin{align*}
\nabla\colon \Gamma((TM)^{\otimes p}\otimes (T^*M)^{\otimes q})
\to\Gamma((TM)^{\otimes p}\otimes (T^*M)^{\otimes (q+1)})
\end{align*}
in the following way (cf. \cite[p. 31]{Chavel}): For any
\begin{align*}
t=(t^{i_1\dots i_p}_{j_1\dots j_q})
\in \Gamma((TM)^{\otimes p}\otimes (T^*M)^{\otimes q})
\end{align*}
$\nabla t\in \Gamma((TM)^{\otimes p}\otimes (T^*M)^{\otimes (q+1)})$
is given by 
\begin{align*}
(\nabla t)^{i_1\cdots i_p}_{j_0j_1\cdots j_q}
 =\partial_{j_{0}} t^{i_1\cdots i_p}_{j_1\cdots j_{q}}
+\sum_{s=1}^p \Gamma^{i_s}_{j_0 k}t^{i_1\cdots i_{s-1}ki_{s+1}\cdots  i_p}_{j_1\cdots j_{q}}
-\sum_{s=1}^q \Gamma^{k}_{j_0 j_s }t^{i_1\cdots i_p}_{j_1\cdots j_{s-1}kj_{s+1}\cdots j_{q}},
\end{align*}
where $\Gamma^k_{ij}=2^{-1}g^{kl}(\partial_i g_{lj}+\partial_j g_{li}-\partial_lg_{ij})$ are the Christoffel symbols.
For example, it follows that $\nabla r^2=\mathrm{d}r^2\in \Gamma(T^*M)$ and 
$\nabla^2r^2=\nabla\nabla r^2\in \Gamma(T^*M\otimes T^*M)$.
The operator $\nabla^2$ gives the geometric Hessian, and in local coordinates
\begin{align}
(\nabla^2 r^2)_{ij}
=\partial_i\partial_j r^2-\Gamma^k_{ij}\partial_kr^2.
\label{eq:10.12.20.14.56}
\end{align}
We note that $(\nabla^2 r^2)_{ij}$ are the coefficients of the
principal part of a Mourre type commutator,
 cf.\ Corollary~\ref{cor:10.10.13.15.00} and
Lemma~\ref{lemma:10.1.23.9.52} (for the analogous  statement in
Classical  Mechanics  see the end
of Subsection \ref{Classical mechanics considerations}).
\begin{cond}[Mourre type condition]\label{cond:conv}
There exist  $\delta\in (0,1]$ and $r_0\geq 0$ such that 
\begin{align}
\nabla^2 r^2 \ge (1+\delta)g\mfor 
r\geq r_0.\label{eq:10.2.8.3.55}
\end{align}
\end{cond}

Note that (\ref{eq:10.2.8.3.55}) is an inequality of quadratic forms on fibers of $TM$.
The condition (\ref{eq:10.2.8.3.55}) can also be formulated in terms of the second fundamental form of the 
\textit{angular manifolds}
$S_r=\{x\in E;\ r(x)=r\}\cong \partial O$.
We let $\iota_r\colon S_r\hookrightarrow M$ be the inclusion and 
$\mathrm{D}=\iota_r^* \circ\nabla$ (cf. \cite[Proposition 2.3]{Chavel}).
Then (\ref{eq:10.2.8.3.55}) is equivalent to the following inequality in the sense of quadratic forms on $TS_r$:
\begin{equation}\label{eq:1} 
\mathrm{D}\nabla r\ge \frac{(1+\delta)}{2r}\iota_r^* g\mfor 
r> r_0.
\end{equation} 
In fact, a computation in the geodesic spherical coordinates shows that 
\begin{align}\label{eq:17}
\nabla^2 r^2=2\mathrm{d}r\otimes\mathrm{d}r \oplus 2r\mathrm{D}\nabla r,&&
g=\mathrm{d}r\otimes\mathrm{d}r\oplus \iota_r^* g.
\end{align} Here the direct sum decompositions correspond to the
orthogonal splitting $TM_x\cong (NS_r)_x\oplus (TS_r)_x$ at any point $x\in
S_r$.

As we can see from (\ref{eq:10.12.20.14.56}) the inequality
(\ref{eq:10.2.8.3.55}) is a condition on derivatives  of the metric
tensor $g$ up to first order (as well as  on derivatives of the function $r^2$ of course).
The  condition \eqref{eq:10.9.2.23.19} below is on  derivatives up to
second  order.

\begin{comment}
We have the following explicit formulas for more general quantities
depending on derivatives of the metric tensor $g$ up to
second  order (the
second  quantity  is the curvature):
\begin{subequations}
 \begin{align}
\label{eq:11.3.23.2.49}
(\nabla^3r^2)_{ijk}={}&\partial_i\partial_j\partial_kr^2
-\Gamma^l_{ij}\partial_l\partial_k r^2-\Gamma^l_{jk}\partial_l\partial_i r^2-\Gamma^l_{ki}\partial_l\partial_j r^2
-(\partial_i\Gamma^l_{jk})\partial_lr^2\nonumber\\
&{}+\Gamma^l_{ij}\Gamma^m_{lk} \partial_mr^2+\Gamma^l_{ik}\Gamma^m_{lj} \partial_mr^2,\\
R^{l}{}_{ijk}:=&\partial_j\Gamma^l_{ik}-\partial_k\Gamma^l_{ij}
+\Gamma^l_{jm}\Gamma^m_{ik}-\Gamma^l_{km}\Gamma^m_{ij}.\label{eq:11}
\end{align}
 
\end{subequations}

  We also introduce formulas for curvatures
\begin{align}
\begin{split}
R^{l}{}_{ijk}:={}&\partial_j\Gamma^l_{ik}-\partial_k\Gamma^l_{ij}
+\Gamma^l_{jm}\Gamma^m_{ik}-\Gamma^l_{km}\Gamma^m_{ij},\qquad
\mathrm{Ric}_{ij}:=R^{l}{}_{ilj},\\&  R^r{}_{irj}:=(\nabla r)_k(\nabla
r)^lR^k{}_{ilj}.\label{eq:38}
\end{split}
\end{align}
 Note that $\mathrm{Ric}$ is the Ricci curvature,
and that  $R^r{}_{\cdot r\cdot}$  ($r$ is here not a summation index!)
is the radial
curvature $R(\dot x,\cdot, \dot x, \cdot)$ considered also in Remark
\ref{remarks:introduction-1} \ref{item:5} given below.
\end{comment}
 
\begin{subequations}
\begin{cond}[Quantum Mechanics bound]\label{cond:10.9.2.11.13} 
There exists $\kappa\in (0,1/2)$ such that 
\begin{equation}
  |\d \triangle r^2|\leq C \langle r\rangle^{-1/2-\kappa}.
\label{eq:10.9.2.23.19}
\end{equation}
\end{cond}
\begin{comment}
|R^r{}_{\cdot r\cdot}|\le C \langle r\rangle^{-1-\kappa}g.
\label{eq:10.11.30.5.48}
\begin{subequations}
  \begin{align*}
|\nabla^3r^2|^2=
g^{il}g^{jm}g^{kn}(\nabla^3 r^2)_{ijk}(\nabla^3 r^2)_{lmn}
&{}\le C \langle r\rangle^{-1-2\kappa},
\label{eq:10.9.2.23.19}\\
\mathrm{Ric}&{}\le C \langle r\rangle^{-1+\kappa}g,
\label{eq:10.9.2.23.20}\\
\pm\mathrm{i}\bigl(\mathrm{Ric}_{ij}g^{jk}(\nabla^2 r^2)_{kl}-(\nabla^2 r^2)_{ij}g^{jk}\mathrm{Ric}_{kl}\bigr)_{i,l}
&{}\le C \langle r\rangle^{-1-\kappa}g.
\label{eq:10.11.30.5.48}
\end{align*}
\end{subequations}
\end{comment}

We used the standard notation $\langle r\rangle =(1+r^2)^{1/2}$. Due
to \eqref{eq:10} given below we have $\partial_i \triangle
r^2=g^{jk}(\nabla^3 r^2)_{ijk}$.
%Whence indeed \eqref{eq:10.9.2.23.19} is a partial bound of the quantity \eqref{eq:11.3.23.2.49}. 
We notice
that it is a consequence of Condition~\ref{cond:10.9.2.11.13} that
\begin{equation}
  \label{eq:49}
  \triangle r=O( r^{-1/2-\kappa})\mand |\d  \triangle r|=O( r^{-3/2-\kappa}),
\end{equation} in fact \eqref{eq:10.9.2.23.19} and \eqref{eq:49} are 
equivalent  for any  $\kappa\in (0,1/2)$. Whence yet another
equivalent condition (given in terms of the mean curvature) is 
\begin{equation}
  \label{eq:49m}
  \mathop{\mathrm{tr}}{}(\mathrm{D}\nabla r)=O( r^{-1/2-\kappa})\mand |\d  \mathop{\mathrm{tr}}{}(\mathrm{D}\nabla r)|=O( r^{-3/2-\kappa}).
\end{equation} 
 \end{subequations} 
 The first bound of \eqref{eq:49m} implies an upper bound of the ball volume
 growth of the form $\exp\parb{Cr^{1/2-\kappa}}$, $C>0$,  and in general no
 better. Similarly Condition \ref{cond:conv} implies the 
 power type   
 lower bound of the ball volume
 growth $c r^\sigma$ with  $\sigma=(\delta+1)(d-1)/2+1$ and $c>0$.

%Note that the left-hand side of (\ref{eq:10.9.2.23.19}) is the square
%of the norm of $\nabla^3r^2\in T^*M^{\otimes 3}$. 
\begin{comment}
The condition \eqref{eq:10.11.30.5.48} seems to be more technical,
and this is because the nature of  quantum mechanics requires
estimates for certain  second order commutators.
Note to the contrary that the so-called Mourre estimate (see \eqref{eq:7})  is a first order 
 commutator estimate.
\end{comment}
In the analysis of the Classical Mechanics in Section~\ref{Classical mechanics considerations}
we do not need 
Condition~\ref{cond:10.9.2.11.13}.
% The condition (\ref{eq:10.11.30.5.48}) is  used only in the our
%application of  Proposition~\ref{prop:10.11.28.4.36}.

%In Section~\ref{Examples} we shall give some sufficient examples for our model manifolds.

Finally we impose a short-range condition on $V$:
\begin{cond}\label{cond:10.6.1.16.24}
The potential $V\in L^\infty (M;\mathbb{R})$ satisfies for some $\eta\in (0,1]$
\begin{equation}\label{eq:60}
|V(x)|\le C\langle r\rangle^{-1-\eta}.
\end{equation}
\end{cond}   

Under the above setting we prove the existence and the completeness of the wave operator.
Define the free propagator $U(t)$, $t>0$, by
\begin{subequations}
\begin{align}
U(t)&=\mathrm{e}^{\mathrm{i}K(t,\cdot)}\mathrm{e}^{-\mathrm{i}\frac{\ln t}{2}A};\label{eq:10.5.4.22.21}\\
K(t,x)&=\tfrac{r(x)^2}{2t}, \label{eq:59}\\
A&=\i[H_0,r^2]=\tfrac{1}{2}\{(\partial_i r^2)g^{ij}p_j+p_i^* g^{ij}(\partial_j r^2)\}.\label{eq:6}
\end{align}
  \end{subequations}
Here $\mathrm{e}^{-\mathrm{i}\frac{\ln t}{2}A}$ is 
called a dilation with respect to $r$.
If we define a  flow $\omega=\omega(t,x)$, $(t,x)\in (0,\infty)\times M$, by
\begin{align}
\partial_t \omega^i=-\tfrac{1}{2t}g^{ij}(\omega)(\partial_jr^2)(\omega),&&
\omega(1,x)=x,
\label{09.12.17.22.23}
\end{align}
then for $u\in \vH$
\begin{align}\label{eq:61}
\mathrm{e}^{-\mathrm{i}\frac{\ln t}{2}A}u(x)=J(\omega(t,x))^{1/2}
\Bigl(\frac{\det g(\omega(t,x))}{\det g(x)}\Bigr)^{1/4}u(\omega(t,x)),
\end{align}
where $J$ is the relevant Jacobian.
In fact,
using (\ref{09.12.17.22.23}) and the relation
\begin{equation}\label{eq:10}
\triangle u =\mathop{\mathrm{tr}}{}(\nabla^2 u)=g^{ij}(\nabla^2 u)_{ij},
\end{equation}
we can show 
\begin{align*}
J(\omega(t,x))^{1/2}
\Bigl(\frac{\det g(\omega(t,x))}{\det g(x)}\Bigr)^{1/4}
=\exp\Bigl(\int^t_1\tfrac{1}{4s}(-\triangle r^2)(\omega(s,x))\,\mathrm{d}s\Bigr).
\end{align*} The right hand side of this identity is a geometric 
invariant, and indeed it shows in  combination with the group property
$\omega(t,\omega(s,x))=\omega(ts,x)$  the formula
\eqref{eq:61}. Note, as a consequence of \eqref{eq:61}, that
$C^\infty_{\rm c}(M)$ is left invariant under  dilations (in particular
the generator $A$ is essentially self-adjoint on $C_{\rm
  c}^\infty(M)$).
We also note that $\omega$ fixes $\partial O$ and, moreover,
\begin{align}
\omega(t,x)=\exp|_{N^+\partial O}\bigl[\tfrac{1}{t}(\exp|_{N^+\partial O})^{-1}(x)\bigr]
\mfor (t,x)\in (0,\infty)\times E.
\label{11.6.6.8.3}
\end{align}
Hence $\mathrm{e}^{-\mathrm{i}\frac{\ln t}{2}A}$ is unitary on $\vH_\aux:=L^2(E)\subset \vH$
and $(\vH_\aux)^{\perp}=L^2(O)\subset \vH$, respectively. 
By (\ref{11.6.6.8.3}) 
$\mathrm{e}^{-\mathrm{i}\frac{\ln t}{2}A}|_{\vH_\aux}$ is the ``geodesic
dilation'' on $E$ (since the composition part is given in geodesic spherical coordinates by $r\to r/t$),
while $\mathrm{e}^{-\mathrm{i}\frac{\ln t}{2}A}|_{(\vH_\aux)^{\perp}}$
does not have a similar geometric meaning.
Moreover, due to the eikonal equation 
\begin{equation}\label{eq:3}
|\mathop{\mathrm{grad}} r|^2_g=g^{ij}(\partial_ir)(\partial_jr)=1\mon  E
\end{equation}
it follows that $K$ is
a solution to the Hamilton-Jacobi  equation
\begin{align}
\partial_tK=-\tfrac{1}{2}g^{ij}(\partial_i K)(\partial_j K)\mon E.\label{eq:9.12.19.054}
\end{align}

\begin{thm}\label{thm:10.5.31.23.23}
Let $(M,g)$ be a connected complete Riemannian manifold satisfying Conditions \ref{cond:diffeo}--\ref{cond:10.9.2.11.13}, 
and $V$ a potential satisfying Condition~\ref{cond:10.6.1.16.24}.
Then, for the Schr\"odinger propagator $\mathrm{e}^{-\mathrm{i}tH}$ for (\ref{eq:10.5.4.3.24}) and the free propagator (\ref{eq:10.5.4.22.21})
there exists the wave operator
\begin{align*}
\Omega_+=\slim_{t\to+\infty} \e^{\i tH}U(t)P_{\mathrm{aux}},
\end{align*}
where $P_{\mathrm{aux}}$ is the projection onto ${\mathcal H}_{\mathrm{aux}}$.
Moreover 
there exists the limit
\begin{align*}
\widetilde\Omega_+=\slim_{t\to+ \infty}U(t)^*\e^{-\mathrm{i}tH}P_{\mathrm{c}},
\end{align*}
where $P_{\mathrm{c}}$ is the projection onto the continuous subspace
${\mathcal H}_{\mathrm{c}}(H)=\chi_{(0,\infty)}(H)\vH
$ for $H$.

Finally $\Omega_+$ is complete, i.e. 
\begin{equation}
  \label{eq:7}
  \widetilde\Omega_+=\Omega_+^*,\;\;
\Omega_+^*\Omega_+=P_{\aux} \mand \Omega_+\Omega_+^*=P_{\mathrm{c}}.
\end{equation}

\end{thm}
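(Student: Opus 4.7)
The overall plan is to establish the two strong limits $\Omega_+$ and $\widetilde\Omega_+$ separately by Cook's method and then to deduce the three identities in \eqref{eq:7} algebraically, using that $U(t)$ is unitary on $\vH$ and preserves $\vH_\aux$. The central computation in both directions is the Cook derivative
\begin{equation*}
\frac{\d}{\d t}\bigl(\e^{\i tH}U(t)\bigr)=\i\e^{\i tH}(H-\i\partial_t)U(t),
\end{equation*}
together with the conjugation identity $\e^{-\i K}H_0\e^{\i K}=H_0+\tfrac12|\d K|_g^2+\tfrac{A}{2t}$ (which follows from $[p_j,\e^{\i K}]=(\partial_j K)\e^{\i K}$ combined with the definition \eqref{eq:6} of $A$) and $\partial_t U(t)=\i(\partial_t K)U(t)-\tfrac{\i}{2t}\e^{\i K}AW(t)$. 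These give
\begin{equation*}
(H-\i\partial_t)U(t)=\e^{\i K}H_0W(t)+\e^{\i K}\bigl(\tfrac12|\d K|_g^2+\partial_t K\bigr)W(t)+VU(t),
\end{equation*}
where the middle term vanishes on $E$ by the eikonal equation \eqref{eq:3} and the Hamilton--Jacobi identity \eqref{eq:9.12.19.054}.

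For existence of $\Omega_+$, I would take $u\in C_{\rm c}^\infty(E)$ so that $W(t)u$ and $U(t)u$ are supported in $E$ in the region $r\sim t$ (by the geodesic scaling \eqref{11.6.6.8.3}), whence the middle term drops identically. Condition~\ref{cond:10.6.1.16.24} then yields $\|VU(t)u\|=O(t^{-1-\eta})$ directly. For the kinetic term $\|H_0W(t)u\|$ I would rescale in geodesic spherical coordinates, where $W(t)$ implements $r\mapsto r/t$: the radial contribution is $O(t^{-2})$, the angular Laplacian is controlled by Condition~\ref{cond:conv} (which integrates to $g_{S_r}\gtrsim r^{1+\delta}g_{S_{r_0}}$, giving $O(t^{-(1+\delta)})$ for angular derivatives), and Condition~\ref{cond:10.9.2.11.13} handles the $\triangle r$-term and the Jacobian derivatives. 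Together this gives $\|H_0W(t)u\|=O(t^{-1-\varepsilon})$ for some $\varepsilon>0$. Integrability in $t$ delivers Cauchy convergence on $C_{\rm c}^\infty(E)$, and the uniform bound $\|\e^{\i tH}U(t)u\|\le\|u\|$ extends $\Omega_+$ to all of $\vH_\aux$.

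The existence of $\widetilde\Omega_+$ is the hard step and requires spectral and propagation information on $H$. I would first establish a Mourre estimate on any compact $I\subset(0,\infty)$ with conjugate operator $A$: the principal part of $\i[H_0,A]$ equals $2p_i^*g^{ik}(\nabla^2 r^2)_{kl}g^{lj}p_j$ (cf.\ Corollary~\ref{cor:10.10.13.15.00} and Lemma~\ref{lemma:10.1.23.9.52}), so Condition~\ref{cond:conv} gives $\chi_I(H)\i[H,A]\chi_I(H)\ge 4(1+\delta)\chi_I(H)H_0\chi_I(H)$ modulo compact and lower-order terms, while Condition~\ref{cond:10.9.2.11.13} controls the second-order commutators needed to run the Mourre machinery. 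Standard consequences yield the limiting absorption principle, absence of singular continuous spectrum in $(0,\infty)$, and maximal/minimal radial velocity estimates localising $\e^{-\i tH}\phi$ in $\{c_1t\le r\le c_2t\}$ for $\phi\in\chi_I(H)\Ran P_c$ with $0<c_1<c_2$. Taking adjoints in the Cook identity on $E$ yields
\begin{equation*}
\frac{\d}{\d t}\bigl(U(t)^*\e^{-\i tH}\phi\bigr)=-\i U(t)^*\bigl(H+\tfrac12|\d K|_g^2-\tfrac{A}{2t}\bigr)\e^{-\i tH}\phi;
\end{equation*}
on an asymptotically outgoing state of energy $E\in I$ one has $r\sim t\sqrt{2E}$, and the classical counterparts $H\sim E$, $|\d K|_g^2/2\sim E$, $A/(2t)\sim 2E$ cancel to leading order, so the right-hand side is $O(t^{-1-\varepsilon})$ by the minimal velocity bound (pushing $V$ into the short-range regime) together with a Mourre-type commutator estimate for the sub-leading terms. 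Cauchy convergence on the dense subset $\bigcup_I\chi_I(H)\Ran P_c$ plus uniform boundedness of $U(t)^*\e^{-\i tH}P_c$ delivers $\widetilde\Omega_+$.

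The completeness identities \eqref{eq:7} then follow algebraically. Since $A$ is self-adjoint and $K$ is real, $U(t)$ is unitary on $\vH$ and preserves $\vH_\aux$, so $U(t)^*U(t)P_\aux=P_\aux$; for $u\in\vH_\aux$ this gives $\|\Omega_+u\|=\lim\|U(t)u\|=\|u\|$, hence $\Omega_+^*\Omega_+=P_\aux$. The propagation estimates above show $\widetilde\Omega_+\psi\in\vH_\aux$ for $\psi\in\Ran P_c$ (the localisation $r\sim ct$ is contracted by $W(t)^*$ to bounded $r\sim c>0$), so unitarity of $U(s)$ gives
\begin{equation*}
\bigl\|\e^{\i sH}U(s)\widetilde\Omega_+\psi-\psi\bigr\|=\bigl\|\widetilde\Omega_+\psi-U(s)^*\e^{-\i sH}\psi\bigr\|\longrightarrow 0,
\end{equation*}
that is, $\Omega_+\widetilde\Omega_+=P_c$; combined with the isometry of $\Omega_+$ on $\vH_\aux$ this forces $\widetilde\Omega_+=\Omega_+^*$ and $\Omega_+\Omega_+^*=P_c$. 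The main obstacle throughout is the propagation package: deriving the Mourre estimate with $A$ as conjugate operator purely from Conditions~\ref{cond:conv}--\ref{cond:10.9.2.11.13} (without any asymptotic metric assumption) and extracting the minimal radial velocity lower bound needed to make the reverse Cook estimate for $\widetilde\Omega_+$ absolutely convergent.
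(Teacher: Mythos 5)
Your plan founders at both of its load-bearing steps, and the paper explicitly flags both obstructions. For the existence of $\Omega_+$ you invoke the Cook method, which requires the norm bound $\|(H-G(t))U(t)u\|=O(t^{-1-\varepsilon})$ and hence $\|H_0\,\e^{-\i\frac{\ln t}{2}A}u\|=O(t^{-1-\varepsilon})$. Applying $H_0$ to the dilated state means differentiating the Jacobian factor $\exp\bigl(\int_1^t\tfrac{1}{4s}(-\triangle r^2)(\omega(s,x))\,\d s\bigr)$ twice, which demands pointwise control of second derivatives of $\triangle r^2$; Condition~\ref{cond:10.9.2.11.13} bounds only $|\d\triangle r^2|$, one derivative short, and no other hypothesis supplies the missing regularity. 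Your parenthetical claim that this condition ``handles the Jacobian derivatives'' is exactly where the argument breaks, and the paper states in so many words that ``the Cook-Kuroda method does not apply even for the existence part.'' Instead it proves convergence only in a weak, quadratic-form sense: it sandwiches the dynamics between localization operators $Q_{\mathrm{f}}(t)=(Q_2Q_1)^*Q_2Q_1$ with $Q_2=(I+t^{1+\delta_1}W(t))^{-1/2}$, shows $\|(W+\alpha+V)Q_{\mathrm{f}}\|=O(t^{-1-\delta'})$ (the resolvent-type factor $Q_2$ absorbs the non-integrable part of $W$), and extracts the Cauchy property from the near-positivity of the Heisenberg derivative $\D_{G}Q_{\mathrm{f}}$ via the Cauchy--Schwarz argument of Lemma~\ref{lemma:reduct-proof-theor}.

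For $\widetilde\Omega_+$ you propose to ``run the Mourre machinery'' and deduce a limiting absorption principle by ``standard consequences.'' This is not available here: by \eqref{eq:53} the principal part $p_i^*(\nabla^2r^2)^{ij}p_j$ of $\i[H,A]$ carries a weight that may grow like $\inp{r}^{1/2-\kappa}$ relative to $H_0$, so the commutator is not bounded relative to $H$ even in the form sense; the paper says precisely this before Lemma~\ref{lemma:10.1.23.9.52} and stresses that neither classical Mourre theory nor its refinements, nor any limiting absorption bound, is used. What replaces them is a self-contained propagation package (maximal velocity, Graf/Sigal--Soffer phase-space, and minimal velocity estimates, Lemmas~\ref{lemma:10.1.16.14.53}--\ref{lemma:mini}) proved with positive propagation observables and the same regularizing factor $Q_2=Q_6$, after which the second limit is again obtained in the weak symmetric form of Lemma~\ref{lemma:10.3.23.17.20}. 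Your closing algebra for \eqref{eq:7} is essentially the paper's (which uses the RAGE theorem to place $\Ran\widetilde\Omega_+$ inside $\vH_{\aux}$), but it rests on the two limits you have not actually established.
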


 Here we used the notation $\chi_\vO$ to denote the characteristic
 function of a  subset $\vO\subseteq \R$. 
 Note that $U(t)P_{\aux}$ and $\Omega_+$ are independent of the extension of $r$ to $O$.
 The fact that $H$ does not have positive eigenvalues is proved under
 weaker conditions in \cite{IS}  and will not be discussed in this
 paper. 
 It follows  by a standard local
 compactness argument that the negative spectrum of $H$ (if not empty)
 consists
 of finite multiplicity eigenvalues  accumulating at most at zero.

Note that  $t\cdot r(\omega(t,x))=r(x)$ for $(t,x)\in (0,\infty)\times E$. By this formula it follows
readily that 
\begin{equation}\label{eq:2}
  \Omega_+^*H\Omega_+=M_fP_{\mathrm{aux}};\;f=2^{-1} r(\cdot)^2.
\end{equation} Here $M_f$ means the operator given by multiplication
by $f$ (defined maximally on $\vH$).
 Consequently we immediately deduce 
\begin{corollary}[Spectrum]\label{cor:asymwb} The continuous spectrum
$\sigma_{\rm c}(H)=\sigma(H_{\rm c})=[0,\infty)$ and the
  singular continuous spectrum of $H$ is absent (i.e. 
$\sigma_{\mathrm{sc}}(H)=\emptyset$).
\end{corollary}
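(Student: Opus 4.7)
The plan is to deduce Corollary~\ref{cor:asymwb} directly from the content of Theorem~\ref{thm:10.5.31.23.23}, specifically by combining the completeness identities in \eqref{eq:7} with the intertwining relation \eqref{eq:2}. The completeness part says that $\Omega_+$, restricted to its initial space $\mathcal{H}_{\mathrm{aux}} = L^2(E)$, is a unitary operator onto the continuous subspace $\mathcal{H}_{\mathrm{c}}(H)$. Together with \eqref{eq:2}, this yields a unitary equivalence
\begin{equation*}
  H|_{\mathcal{H}_{\mathrm{c}}(H)} \;\cong\; M_f \text{ on } \mathcal{H}_{\mathrm{aux}} = L^2(E),
\end{equation*}
where $M_f$ is multiplication by $f = r^2/2$. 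Thus the corollary reduces to showing that $M_f$ on $L^2(E)$ has purely absolutely continuous spectrum equal to $[0,\infty)$.

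For this reduction I would work in geodesic normal coordinates provided by Condition~\ref{cond:diffeo}: the map $\exp|_{N^+\partial O}$ identifies $E$ with $(0,\infty)\times\partial O$, with $r$ as the radial coordinate and Riemannian volume $\mathrm{d}\mu_g = J(r,\theta)\,\mathrm{d}r\,\mathrm{d}\theta$ for a smooth strictly positive density $J$. For any $u\in L^2(E)$, the spectral measure of $M_f$ at $u$ is the pushforward of $|u|^2\,\mathrm{d}\mu_g$ under $f$, and the substitution $s=r^2/2$ (valid since $\partial_r f = r > 0$ on $E$) expresses it as $\rho_u(s)\,\mathrm{d}s$ for a locally integrable density $\rho_u$ on $(0,\infty)$. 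Hence each spectral measure of $M_f$ is absolutely continuous with respect to Lebesgue measure, so $M_f$ is purely absolutely continuous, while the essential range of $f$ on $E$ being all of $[0,\infty)$ gives $\sigma(M_f)=[0,\infty)$.

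Transferring these conclusions back through the unitary equivalence yields $\sigma(H_{\mathrm{c}})=[0,\infty)$ and $\mathcal{H}_{\mathrm{sc}}(H)=\{0\}$; combined with the standard identity $\sigma_{\mathrm{c}}(H)=\sigma(H_{\mathrm{c}})$ this proves the corollary. There is no real analytic obstacle at this stage, since all the genuine work is already packaged in Theorem~\ref{thm:10.5.31.23.23}. The only thing that warrants any care is the elementary claim that multiplication by a smooth function with nonvanishing gradient on a Riemannian manifold is purely absolutely continuous, which is handled by the explicit change of variables indicated above.
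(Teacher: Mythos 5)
Your proof is correct and follows essentially the same route as the paper: both deduce the corollary from the intertwining relation \eqref{eq:2} together with the completeness identities \eqref{eq:7}, reducing everything to the spectrum of $M_f$ on $\mathcal{H}_{\mathrm{aux}}$. The only difference is that you make explicit, via the change of variables $s=r^2/2$ in geodesic normal coordinates, the elementary fact that $M_f$ on $L^2(E)$ is purely absolutely continuous with spectrum $[0,\infty)$, which the paper simply asserts.
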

Note that under Conditions \ref{cond:diffeo} and
\ref{cond:10.9.2.11.13} the essential spectrum $\sigma_{\rm ess}
(H_0)=[0,\infty)$, see \cite [Theorem 1.2]{K1}. On the other hand the second  
part of the corollary on 
the  singular continuous spectrum of $H$ is new.

 As another  corollary, the existence of ``the asymptotic speed'' follows (see for example \cite{DG1} for notation).
\begin{corollary}[Asymptotic observables]\label{cor:asymw}
In the space ${\mathcal H}_{\mathrm{c}}(H)$ there exists the $*$-representation
\begin{equation*}
   \omega_{\infty}^+  =\s -C_{\mathrm{c}}(M)-\lim_{t\to
   +\infty}\e^{\i tH}\omega(t,\cdot )\e^{-\i tH}.
\end{equation*} In particular the asymptotic speed
\begin{equation*}
   r(\omega_{\infty}^+) = \s -C_{\mathrm{c}}(\R)-\lim_{t\to
   +\infty}\e^{\i tH}\;\tfrac{r(\cdot )}{t}\;\e^{-\i tH}
\end{equation*} exists as a self-adjoint operator on $\vH_{\mathrm{c}}(H)$. This
operator is positive   with zero kernel.

Moreover, for all $\phi\in C_{\mathrm{c}}(M)$
\begin{equation}\label{eq:2bb}
  \phi(\omega_{\infty}^+)  =\Omega_+M_{\phi}\Omega_+^*\mand H_{\mathrm{c}}=2^{-1}r(\omega_{\infty}^+ )^2.
\end{equation}
\end{corollary}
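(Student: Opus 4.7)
The plan is to deduce the corollary from the completeness relations \eqref{eq:7} and the spectral identity \eqref{eq:2} via a simple algebraic intertwining satisfied by $U(t)$. From the explicit formula \eqref{eq:61}, for any bounded Borel $\phi$ on $M$ and any $v\in\vH_{\aux}$,
\begin{equation*}
  \phi(\omega(t,\cdot))\,U(t)v \;=\; U(t)\,M_\phi v\quad \text{in }\vH_{\aux},
\end{equation*}
since the factor $\phi(\omega(t,x))$ passes through the phase $\e^{\i K(t,x)}$ and the Jacobian and acts on $v(\omega(t,x))$ precisely as $(\phi v)(\omega(t,x))$. Combined with the unitarity of $U(t)$ on $\vH$, the completeness $\widetilde\Omega_+=\Omega_+^*$ from Theorem~\ref{thm:10.5.31.23.23} yields
$\|\e^{-\i tH}P_{\mathrm{c}}u - U(t)\Omega_+^* u\|\to 0$ for every $u\in\vH$.

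Take $\phi\in C_{\mathrm{c}}(M)$ and $u\in\vH_{\mathrm{c}}(H)$. Since $\Omega_+$ is a partial isometry with initial space $\vH_{\aux}$ and final space $\vH_{\mathrm{c}}$, its adjoint $\Omega_+^*$ has range $\vH_{\aux}$; thus $\Omega_+^* u\in\vH_{\aux}$ and the intertwining applies, giving
\begin{equation*}
\phi(\omega(t,\cdot))\,\e^{-\i tH}u = U(t)\,M_\phi \Omega_+^* u + o(1),
\end{equation*}
the error controlled by $\|\phi\|_\infty$ times the preceding $o(1)$. Since $\Omega_+^* u$ is supported in $E$, so is $M_\phi\Omega_+^* u$; in particular $M_\phi\Omega_+^* u\in\vH_{\aux}$. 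Applying $\e^{\i tH}$ and invoking $\e^{\i tH}U(t)P_{\aux}\to\Omega_+$ strongly, we obtain
\begin{equation*}
\e^{\i tH}\phi(\omega(t,\cdot))\e^{-\i tH}u \to \Omega_+ M_\phi \Omega_+^* u.
\end{equation*}
This simultaneously establishes existence of the strong $C_{\mathrm{c}}(M)$-limit and yields the identity $\phi(\omega_\infty^+)=\Omega_+ M_\phi \Omega_+^*$. Multiplicativity of this $*$-representation follows from $M_\phi M_\psi=M_{\phi\psi}$ together with $\Omega_+^*\Omega_+=P_{\aux}$ commuting with all multiplication operators, and $*$-preservation from $M_{\bar\phi}=M_\phi^*$.

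For the asymptotic speed, set $r(\omega_\infty^+):=\Omega_+ M_r \Omega_+^*$; this is a positive self-adjoint operator on $\vH_{\mathrm{c}}$ because $\Omega_+|_{\vH_{\aux}}\to\vH_{\mathrm{c}}$ is unitary and $M_r$ is positive self-adjoint on $L^2(E)$. Its Borel functional calculus satisfies $\psi(r(\omega_\infty^+))=\Omega_+ M_{\psi\circ r}\Omega_+^*$. For $\psi\in C_{\mathrm{c}}(\R)$ the composition $\psi\circ r$ lies in $C_{\mathrm{c}}(M)$ (since $\{r\le a\}$ is compact for every $a\in\R$), and the identity $r(\omega(t,x))=r(x)/t$ on $E$ gives $(\psi\circ r)(\omega(t,\cdot))=\psi(r(\cdot)/t)$ on $E$; the discrepancy on $O$ is bounded and, when applied to $\e^{-\i tH}u$, vanishes in the limit because $\|\chi_O\e^{-\i tH}u\|\to 0$, itself a direct consequence of $U(t)\Omega_+^* u\in L^2(E)$ and the completeness estimate above. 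Applying the preceding convergence with $\phi=\psi\circ r$ therefore yields $\e^{\i tH}\psi(r/t)\e^{-\i tH}u\to\psi(r(\omega_\infty^+))u$, confirming the asymptotic speed limit. The identity $H_{\mathrm{c}}=\tfrac{1}{2}r(\omega_\infty^+)^2$ is immediate from \eqref{eq:2} by conjugation with $\Omega_+$, and the zero kernel follows because $\vH_{\mathrm{c}}=\chi_{(0,\infty)}(H)\vH$ excludes $\ker H$, making $H_{\mathrm{c}}$ (hence $r(\omega_\infty^+)^2$) injective. The only technical subtlety is the mismatch between the geometric flow formula on $E$ and the arbitrary extension of $r$ over $O$; this is absorbed by the support estimate above and the decomposition $\e^{-\i tH}u=U(t)\Omega_+^* u + o(1)$ living in $L^2(E)$.
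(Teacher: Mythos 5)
Your proposal is correct and follows essentially the same route as the paper: the intertwining $\phi(\omega(t,\cdot))=U(t)M_\phi U(t)^*$ combined with the completeness relations \eqref{eq:7} yields $\phi(\omega_\infty^+)=\Omega_+M_\phi\Omega_+^*$, and the remaining claims are read off from \eqref{eq:2} via functional calculus. The only (harmless) divergence is that the paper treats the unbounded function $r^2/2$ through explicit cutoffs $\psi_N(s)=\tfrac{s^2}{2}\chi_{-,0,N}(s)$, whereas you conjugate directly by the unitary $\Omega_+|_{\vH_{\aux}}$; both are valid.
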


\begin{remarks}\label{remarks:intro}
\begin{enumerate}[\normalfont 1)]
\item\label{item:2}
 A principal  virtue of the formula (\ref{eq:10.5.4.22.21}) is
its intrinsic ``position space''
nature. In Euclidean scattering such formula appeared first in \cite
{Y}; for later developments see \cite {DG2, HS,  CHS1}. It was
conceived in \cite { CHS2} for  a general geometric
setting.  Another virtue of
(\ref{eq:10.5.4.22.21}) is that time reversal invariance applies yielding
existence and completeness of a similar   wave operator $\Omega_-$
(constructed by taking $t\to -\infty$). Whence  Theorem
\ref{thm:10.5.31.23.23} defines a scattering theory that includes  a unitary
scattering operator, however we shall not elaborate  here.

\item \label{item:1}We note that Conditions \ref{cond:conv}--\ref{cond:10.6.1.16.24} in some sense are optimal, see Subsection
\ref{Generalizations} for  counter examples to Theorem
\ref{thm:10.5.31.23.23} under the  slight relaxation of conditions 
given by allowing either $\delta=0$ in \eqref{eq:10.2.8.3.55}, $\kappa=0$ in 
\eqref{eq:10.9.2.23.19}  or $\eta=0$ in \eqref{eq:60}.

\item \label{item:3}
If we denote the \textit{time-dependent generator} of $U(t)$ by $G(t)$ 
then in the geodesic spherical coordinates
\begin{align*}
G(t)=\tfrac{1}{2}p_r^*p_r-\tfrac{1}{2}\Bigl(p_r-\frac{r}{t}\Bigr)^*\Bigl(p_r-\frac{r}{t}\Bigr)\mon E;\;p_r:=(\partial_kr)g^{kl}p_l.
\end{align*}
By arguments motivated by classical mechanics the second term is
\textit{short-range}. In fact  we also  have that
$G(t)=H_0-W(t)-\alpha(t)$ where 
\begin{align*}
  W(t):=\tfrac{1}{2}(p_i-\partial_i K)^*g^{ij}(p_j-\partial_j K)\mand
  \alpha(t):=(\partial_tK)+\tfrac{1}{2}g^{ij}(\partial_iK)(\partial_jK),
\end{align*} and we prove in  this paper, more generally,  that $W(t)$ 
    is
short-range. This is in fact the heart of the proof of Theorem
\ref{thm:10.5.31.23.23}.  
Whence the generator of $U(t)$  differs 
from the  one-dimensional radial Laplacian by a  short-range term, see
 \cite{IN} for a similar relationship.
   
\item \label{item:4} The subset $O$ is not uniquely determined in Condition~\ref{cond:diffeo}, 
but the wave operator is nevertheless (at least
partially) in some sense 
unique. 
We will discuss this issue  in
Subsection~\ref{subsec:Uniqueness}. We show that there is
an explicit dependence on the  one-parameter family of sets $O_a\supseteq
O$ induced
by the outward geodesic flow ($\partial O_a=\{r=a\}$; $a\geq 0$). This idea is exploited in Subsection
\ref{sec:manifold-with-pole} where we introduce a stronger condition
than Condition~\ref{cond:diffeo} (regularity of the geodesic flow from
a point rather than from a submanifold). Our main results are easily
implemented in this setting, although it is also possible (as an
alternative way of showing  results in this setting) to mimic the
procedure in the bulk of the paper.

\end{enumerate}
\end{remarks}

\subsection{Uniqueness of the wave operator}\label{subsec:Uniqueness}

Let us assume Conditions~\ref{cond:diffeo}--\ref{cond:10.6.1.16.24}.
We set for $a\ge 0$
\begin{align*}
O_a=\{x\in E;\, r(x)<a\}\cup \overline{O},&&r_a(x)=r(x)-a,
\end{align*} 
and decorate various  quantities defined previously 
with respect to $O_a$ and $r_a$ by the subscript $a$.
In particular we discuss the strong limits
\begin{align*}
V_a&=V_a^0=\slim_{t\to+\infty}U(t)^*U_a(t)P_{\mathrm{aux}}^{(a)}, \\
V^a&=V_0^a=\slim_{t\to+\infty}U_a(t)^*U(t)P_{\mathrm{aux}}.
\end{align*}

For $u\in {\mathcal H}$
\begin{align*}
U_a(t)u(x)
=\mathrm{e}^{\mathrm{i}r_a(x)^2/2t}\Bigl(\det\frac{\partial \omega_a(t,x)}{\partial x}\Bigr)^{1/2}
\Bigl(\frac{\det g(\omega_a(t,x))}{\det g(x)}\Bigr)^{1/4}u(\omega_a(t,x))
\end{align*}
and since $\omega(t,\cdot)^{-1}(x)=\omega(1/t,x)$
\begin{align*}
U_a(t)^*u(x)
=\mathrm{e}^{-\mathrm{i}tr_a(x)^2/2}\Bigl(\det\frac{\partial \omega_a(1/t,x)}{\partial x}\Bigr)^{1/2}
\Bigl(\frac{\det g(\omega_a(1/t,x))}{\det g(x)}\Bigr)^{1/4}
u(\omega_a(1/t,x)).
\end{align*}

We note that the flow $\omega_a$ satisfies 
\begin{align}
r_a(\omega_a(t,x))=r_a(x)/t\mbox{ for } x\in E_a.
\label{eq:11.5.23.7.35}
\end{align} 

Let $u\in {\mathcal H}$.
Then for $x\in O_{a/t}$
\begin{align*}
U(t)^*U_a(t)P_{\mathrm{aux}}^{(a)}u(x)=0,
\end{align*}
and for $x\in E_{a/t}$
\begin{align*}
&U(t)^*U_a(t)P_{\mathrm{aux}}^{(a)}u(x)\\
&=\mathrm{e}^{\mathrm{i}r_a(\omega(1/t,x))^2/2t-\mathrm{i}tr(x)^2/2}
\Bigl(\det\frac{\partial \omega_a(t,\cdot)}{\partial x}(\omega(1/t,x))\Bigr)^{1/2}
\Bigl(\det\frac{\partial \omega(1/t,x)}{\partial x}\Bigr)^{1/2}\\
&\phantom{{}={}}\Bigl(\frac{\det g(\omega_a(t,\omega(1/t,x)))}{\det g(\omega(1/t,x))}\Bigr)^{1/4}
\Bigl(\frac{\det g(\omega(1/t,x))}{\det g(x)}\Bigr)^{1/4}
u(\omega_a(t,\omega(1/t,x)))\\
&=\mathrm{e}^{-\mathrm{i}ar(x)+\mathrm{i}a^2/2t}
\Bigl(\det\frac{\partial \tau_{a(1-1/t)}(x)}{\partial x}\Bigr)^{1/2}
\Bigl(\frac{\det g(\tau_{a(1-1/t)}(x))}{\det g(x)}\Bigr)^{1/4}
u(\tau_{a(1-1/t)}(x)),
\end{align*}
where for the first factor we have used by (\ref{eq:11.5.23.7.35})
\begin{align*}
r_a(\omega(1/t,x))=r(\omega(1/t,x))-a=tr(x)-a,
\end{align*}
and for the others we have set 
\begin{align*}
\omega_a(t,\omega(1/t,x))=\tau_{a(1-1/t)}(x).
\end{align*}
In fact  $\tau_b$ is  radial translation given in spherical
coordinates by $\tau_bx(r,\sigma)= x(r+b,\sigma)$ for $r>\max(0,-b)$.
Note 
$r(\omega(1/t,x))=tr$, 
so that 
\begin{align*}
r_a(\omega_a(t,\omega(1/t,x)))
=(tr-a)/t=r_a+a(1-1/t).
\end{align*}
Hence the limit $V_a$ exists, $\mathop{\mathrm{Ran}} V_a\subseteq {\mathcal H}_{\mathrm{aux}}$ and for $x\in E$
\begin{align}\label{eq:62}
V_{a}u(x)=\mathrm{e}^{-\mathrm{i}ar(x)}
\Bigl(\det\frac{\partial \tau_{a}(x)}{\partial x}\Bigr)^{1/2}
\Bigl(\frac{\det g(\tau_{a}(x))}{\det g(x)}\Bigr)^{1/4}u(\tau_{a}(x)).
\end{align} Using \eqref{eq:62}  we see that in fact  $V_{a}$ is a unitary  map 
${\mathcal H}_{\mathrm{aux}}^{(a)}\to {\mathcal H}_{\mathrm{aux}}$.
From this unitarity property it follows that also the  limit
$V^a$ exists,  that $V^{a}$ is a unitary  map 
${\mathcal H}_{\mathrm{aux}}\to {\mathcal H}_{\mathrm{aux}}^{(a)}$ and
that $V^a=V_a^*$.

Thus we have the following relationship between the wave operators $\Omega_a$ and
$\Omega_b$:
\begin{equation}\label{eq:333}
  \Omega_a=\Omega_bV_{a}^b;\ V_a^b:=V^bV_a.
\end{equation} In fact, more generally,  the existence of $\Omega_b$ implies the
existence of $\Omega_a$ and \eqref{eq:333} is then valid (here we use
 that the limits $V^b$ and $V_a$ exist and the intertwining rule for wave operators).

\subsection{Manifold with a pole}\label{sec:manifold-with-pole}
Let us consider an ``extreme  case'' of the previous setting.
We assume, instead of Condition~\ref{cond:diffeo}:
\begin{cond}\label{cond:diffeob}
The manifold $M$ has a \textit{pole} $o$, that is, 
there exists a point $o\in M$ such that 
the exponential map:
$\exp_o\colon TM_o\to M$ 
is diffeomorphic.
\end{cond}
Note Condition~\ref{cond:diffeob} is indeed stronger than Condition~\ref{cond:diffeo},
because under Condition~\ref{cond:diffeob} we can choose any geodesic
ball for $O$. 

We consider the distance function  $r(x)=\dist(x,o)$. It is not smooth at $o$, but $r^2$ is.
Hence Condition~\ref{cond:conv}  makes sense with $r_0=0$ for the
function  $r^2$. Throughout this subsection, when we refer to
Condition~\ref{cond:conv}  we mean Condition~\ref{cond:conv} with $r_0=0$.

Define the free propagator $U(t)$, $t>0$, by
\begin{align*}
U(t)&=\mathrm{e}^{\mathrm{i}K(t,\cdot)}\mathrm{e}^{-\mathrm{i}\frac{\ln t}{2}A}%;\label{eq:10.5.4.22.21}
\end{align*}
with $K$ and $A$ given by (\ref{eq:59})   and (\ref{eq:6}),
respectively, in terms of the above $r^2$.
Then $\mathrm{e}^{-\mathrm{i}\frac{\ln t}{2}A}$ is the geodesic dilation with respect to $o$,
and we have the formula
\begin{align*}
U(t)u(x)=\mathrm{e}^{\mathrm{i}r(x)^2/2t}
\exp\Bigl(\int^t_1\tfrac{1}{4s}(-\triangle r^2)(\omega(s,x))\,\mathrm{d}s\Bigr)u(\omega(t,x)),
\end{align*}
where 
\begin{align*}
\omega(t,x)=\exp_o\bigl[\tfrac{1}{t}(\exp_o)^{-1}(x)\bigr]
\mfor (t,x)\in (0,\infty)\times M.
%\label{11.6.6.8.3}
\end{align*}
\begin{thm}\label{thm:10.5.31.23.23b}
Suppose Conditions \ref{cond:diffeob} and  
\ref{cond:conv}--\ref{cond:10.6.1.16.24}.
Then there exist the strong limits
\begin{align*}
\Omega_+=\slim_{t\to+\infty} \e^{\i tH}U(t),\mand
\widetilde\Omega_+=\slim_{t\to+ \infty}U(t)^*\e^{-\mathrm{i}tH}P_{\mathrm{c}},
\end{align*}
where $P_{\mathrm{c}}$ is the projection onto 
${\mathcal H}_{\mathrm{c}}(H)=\chi_{(0,\infty)}(H)\vH$,
and the wave operator $\Omega_+$ is complete, i.e. 
\begin{equation*}
%  \label{eq:7}
  \widetilde\Omega_+=\Omega_+^*,\;\;
\Omega_+^*\Omega_+=I \mand \Omega_+\Omega_+^*=P_{\mathrm{c}}.
\end{equation*}
\end{thm}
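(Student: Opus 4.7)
The plan is to reduce Theorem \ref{thm:10.5.31.23.23b} to Theorem \ref{thm:10.5.31.23.23} by the device of Subsection \ref{subsec:Uniqueness}. Fix any $a>0$ and set $O := \overline{B_a(o)}$; Condition \ref{cond:diffeob} ensures this satisfies Condition \ref{cond:diffeo}, with $r_a(x) := r(x) - a$ on $E_a := M \setminus O$. I would first verify that the pole-case Conditions \ref{cond:conv}--\ref{cond:10.6.1.16.24} imply the corresponding general-case conditions for $r_a$: the identity $\nabla^2 r_a^2 = \nabla^2 r^2 - 2a\,\nabla^2 r$ combined with the decay $|\nabla^2 r| = O(r^{-3/2-\kappa})$ implicit in \eqref{eq:49} preserves the Mourre bound for sufficiently large $r$, while Conditions \ref{cond:10.9.2.11.13} and \ref{cond:10.6.1.16.24} transfer verbatim. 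Theorem \ref{thm:10.5.31.23.23} then supplies a complete wave operator $\Omega_{+,a}$ on $\vH_{\aux}^{(a)} = L^2(E_a)$, together with the adjoint $\widetilde\Omega_{+,a} = \Omega_{+,a}^*$.

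I would next construct the connecting unitaries. Repeating the kernel computation of Subsection \ref{subsec:Uniqueness} with the pole playing the role of the absent base set ``$O_0$'' yields the strong limit
\begin{align*}
 (V_{\mathrm{pole}}^a)^* := \slim_{t\to+\infty} U(t)^* U_a(t) P_{\aux}^{(a)},
\end{align*}
given explicitly by the pole-case analogue of \eqref{eq:62} (radial translation $\tau_a$ together with the phase $\e^{-\i a r(x)}$ and the same Jacobian and metric factors). Since $\{o\}$ has measure zero and $r^2$ is globally smooth, this defines a unitary $\vH_{\aux}^{(a)} \to \vH$. A short argument using unitarity of $U(t)$ and the ``spreading'' identity $\|\chi_{O_a} U(t)\psi\| = \|\chi_{O_{a/t}}\psi\| \to 0$ then shows that the adjoint limit $V_{\mathrm{pole}}^a := \slim_{t\to+\infty} U_a(t)^* U(t)$ also exists and defines a unitary $\vH \to \vH_{\aux}^{(a)}$.

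With $V_{\mathrm{pole}}^a$ in hand, writing for $\psi\in\vH$
\begin{align*}
 \e^{\i tH} U(t)\psi = \e^{\i tH} U_a(t) V_{\mathrm{pole}}^a \psi + \e^{\i tH} U_a(t) \bigl(U_a(t)^*U(t) - V_{\mathrm{pole}}^a\bigr)\psi,
\end{align*}
the first term converges to $\Omega_{+,a} V_{\mathrm{pole}}^a \psi$ (since $V_{\mathrm{pole}}^a \psi \in \vH_{\aux}^{(a)}$) and the second vanishes because $\|\e^{\i tH} U_a(t)\| \le 1$. Hence $\Omega_+ := \Omega_{+,a} V_{\mathrm{pole}}^a = \slim \e^{\i tH} U(t)$ exists on all of $\vH$, and as a composition of two partial isometries onto their respective ranges it satisfies $\Omega_+^*\Omega_+ = I$ and $\Omega_+\Omega_+^* = P_{\mathrm{c}}$. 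The existence of $\widetilde\Omega_+$ and the identity $\widetilde\Omega_+ = \Omega_+^*$ follow symmetrically. The principal obstacle is the previous step: verifying that the kernel calculation of Subsection \ref{subsec:Uniqueness} survives when the reference set degenerates to a point. This amounts to checking that the explicit Jacobian and phase factors are uniformly bounded and define a genuine $L^2$-isometry in a neighborhood of $o$, which is ensured by the global smoothness of $r^2$ under Condition \ref{cond:diffeob} together with the measure-zero character of $\{o\}$.
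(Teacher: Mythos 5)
Your argument is essentially the paper's own proof: the paper likewise deduces Theorem \ref{thm:10.5.31.23.23b} by applying Theorem \ref{thm:10.5.31.23.23} to a geodesic ball about the pole and extending the explicit computation of Subsection \ref{subsec:Uniqueness} (the unitaries $V_a$, $V^a$ of \eqref{eq:62}) to the degenerate base $O=\{o\}$, exactly as you do. One small repair: \eqref{eq:49} gives only $0\le \nabla^2 r\le (\triangle r)\,g=O(r^{-1/2-\kappa})\,g$ (using the positivity coming from Condition \ref{cond:conv}), not $O(r^{-3/2-\kappa})$, but this weaker decay still forces $\nabla^2 r_a^2=\nabla^2 r^2-2a\,\nabla^2 r\ge(1+\delta')g$ for large $r$, so your verification of Condition \ref{cond:conv} for $r_a$ and the rest of the reduction go through unchanged.
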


The result follows by combining  Theorem~\ref{thm:10.5.31.23.23} and Subsection~\ref{subsec:Uniqueness}.
In fact the arguments in Subsection~\ref{subsec:Uniqueness} extend
and are  valid including the   ``degenerate'' situation $O=\{o\}$
(even though this set  is not open).

Finally we write down the corresponding corollaries:
Noting 
\begin{equation*}%\label{eq:2}
  \Omega_+^*H\Omega_+=M_f;\;f=2^{-1} r(\cdot)^2,
\end{equation*}
we have 
\begin{corollary}[Spectrum]\label{cor:asymwbb} The
  singular continuous spectrum of $H$ is absent, i.e. 
$\sigma_{\mathrm{sc}}(H)=\emptyset$, and the continuous spectrum
$\sigma_{\rm c}(H)=\sigma(H_{\rm c})=[0,\infty)$.
\end{corollary}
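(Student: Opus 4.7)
The plan is to transfer the spectral information for $H_{\mathrm{c}}$ from the multiplication operator $M_f$, $f=r(\cdot)^2/2$, through the intertwining identity $\Omega_+^*H\Omega_+=M_f$ recorded immediately before the corollary, and then to supplement this with a local compactness argument to handle the non-positive part of $\sigma(H)$.

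First I would use Theorem~\ref{thm:10.5.31.23.23b}: the completeness identities $\Omega_+^*\Omega_+=I$ and $\Omega_+\Omega_+^*=P_{\mathrm{c}}$ make $\Omega_+$ a unitary map $\mathcal{H}\to\mathcal{H}_{\mathrm{c}}(H)$, and the relation $t\cdot r(\omega(t,x))=r(x)$ together with the strong limit defining $\Omega_+$ yields the intertwining identity exactly as in the derivation of \eqref{eq:2}. Consequently $H_{\mathrm{c}}$ is unitarily equivalent to $M_f$ on $L^2(M)$, so it suffices to compute $\sigma(M_f)$ and its spectral type. Since $r=\dist(\cdot,o)$ is continuous on the connected complete non-compact manifold $M$ and every interval $(c-\varepsilon,c+\varepsilon)\cap[0,\infty)$ pulls back to a set of positive Riemannian volume, the essential range of $f$ equals $[0,\infty)$, so $\sigma(M_f)=[0,\infty)$. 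Smoothness of $f$ on $M\setminus\{o\}$ combined with $|\mathop{\mathrm{grad}} f|_g=r>0$ there (by the eikonal equation \eqref{eq:3}) lets the co-area formula establish, for every $u\in\mathcal{H}$, absolute continuity of the spectral measure of $M_f$ with respect to Lebesgue measure on $\mathbb{R}$. Hence $\sigma_{\mathrm{sc}}(M_f)=\emptyset$, and by unitary equivalence $\sigma(H_{\mathrm{c}})=[0,\infty)$ and $\sigma_{\mathrm{sc}}(H_{\mathrm{c}})=\emptyset$, which already gives $\sigma_{\mathrm{c}}(H)=\sigma(H_{\mathrm{c}})=[0,\infty)$.

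To upgrade the second conclusion to $\sigma_{\mathrm{sc}}(H)=\emptyset$ on all of $\mathbb{R}$ I would consider the complementary subspace $\chi_{(-\infty,0]}(H)\mathcal{H}$. Here the essential spectrum statement $\sigma_{\mathrm{ess}}(H_0)=[0,\infty)$ from \cite{K1}, combined with the relative compactness of $V$ guaranteed by the decay in Condition~\ref{cond:10.6.1.16.24}, forces $\sigma(H)\cap(-\infty,0)$ to consist of isolated eigenvalues of finite multiplicity accumulating at worst only at $0$; since a single point cannot carry singular continuous measure, this together with the previous paragraph yields the claim. The main obstacle, in my view, is really just to verify that Theorem~\ref{thm:10.5.31.23.23b} delivers the intertwining identity unchanged in the pole setting where $O$ collapses to the single point $\{o\}$ (not an open set); but this is precisely what the reduction described in Subsection~\ref{subsec:Uniqueness} is designed to achieve, after which the remaining ingredients are standard multiplication-operator theory and a routine essential-spectrum argument.
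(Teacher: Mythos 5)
Your proposal is correct and follows essentially the same route as the paper: the corollary is read off from the intertwining identity $\Omega_+^*H\Omega_+=M_f$ together with the completeness relations, the fact that $M_f$ has purely (absolutely) continuous spectrum $[0,\infty)$, and the standard local-compactness argument showing the negative spectrum consists of discrete eigenvalues. You merely make explicit (via the essential range and co-area arguments) details that the paper treats as immediate.
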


\begin{corollary}[Asymptotic observables]\label{cor:asymwbbb}
In the space ${\mathcal H}_{\mathrm{c}}(H)$ there exists the $*$-representation
\begin{equation*}
   \omega_{\infty}^+  =\s -C_{\mathrm{c}}(M)-\lim_{t\to
   +\infty}\e^{\i tH}\omega(t,\cdot )\e^{-\i tH}.
\end{equation*} In particular the asymptotic speed
\begin{equation*}
   r(\omega_{\infty}^+) = \s -C_{\mathrm{c}}(\R)-\lim_{t\to
   +\infty}\e^{\i tH}\;\tfrac{r(\cdot )}{t}\;\e^{-\i tH}
\end{equation*} exists as a self-adjoint operator on $\vH_{\mathrm{c}}(H)$. This
operator is positive   with zero kernel.

Moreover, for all $\phi\in C_{\mathrm{c}}(M)$
\begin{equation*}%\label{eq:2bb}
  \phi(\omega_{\infty}^+)  =\Omega_+M_{\phi}\Omega_+^*\mand H_{\mathrm{c}}=2^{-1}r(\omega_{\infty}^+ )^2.
\end{equation*}
\end{corollary}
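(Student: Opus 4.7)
The plan is to deduce Corollary \ref{cor:asymwbbb} from Theorem \ref{thm:10.5.31.23.23b} by exploiting the intertwining of $U(t)$ with multiplication operators pulled back by the flow $\omega(t,\cdot)$. The central identity I would first verify is
\[M_{\phi\circ\omega(t,\cdot)}\,U(t) = U(t)\,M_\phi\quad\text{for all }\phi\in C_{\mathrm{c}}(M),\]
which is immediate from the explicit formula for $\e^{-\i (\ln t) A/2}$ as a Jacobian-weighted pullback by $\omega(t,\cdot)$ (the manifold-with-pole analogue of (\ref{eq:61})) together with the fact that $\e^{\i K(t,\cdot)}$ is itself a multiplication operator and commutes with every $M_\phi$.

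Next, Theorem \ref{thm:10.5.31.23.23b} provides $\Omega_+=\slim \e^{\i tH}U(t)$ with $\Omega_+\colon\vH\to\vH_{\mathrm{c}}(H)$ unitary, and unitarity of $\e^{-\i tH}$ yields $\slim_{t\to+\infty}\parb{U(t)-\e^{-\i tH}\Omega_+}=0$. Given $\psi\in\vH_{\mathrm{c}}(H)$, I would write $\psi=\Omega_+\tilde\psi$ with $\tilde\psi=\Omega_+^*\psi$ and decompose
\[\e^{\i tH}M_{\phi\circ\omega(t,\cdot)}\e^{-\i tH}\psi = \e^{\i tH}M_{\phi\circ\omega(t,\cdot)}\parb{\e^{-\i tH}\Omega_+\tilde\psi-U(t)\tilde\psi} + \e^{\i tH}M_{\phi\circ\omega(t,\cdot)}U(t)\tilde\psi.\]
The first term is bounded in norm by $\|\phi\|_\infty\,\|\e^{-\i tH}\Omega_+\tilde\psi-U(t)\tilde\psi\|\to 0$, while the second equals $\e^{\i tH}U(t)M_\phi\tilde\psi$ by the intertwining identity and therefore converges strongly to $\Omega_+M_\phi\tilde\psi = \Omega_+M_\phi\Omega_+^*\psi$. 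This simultaneously establishes the existence of the strong limit $\phi(\omega_\infty^+)$ and the formula $\phi(\omega_\infty^+) = \Omega_+M_\phi\Omega_+^*$. Using $\Omega_+^*\Omega_+=I$, the linearity, multiplicativity and $*$-property of the assignment $\phi\mapsto\Omega_+M_\phi\Omega_+^*$ reduce to the corresponding properties of $\phi\mapsto M_\phi$ on $L^2(M)$, giving the claimed $*$-representation.

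To extract the asymptotic speed I would directly set $r(\omega_\infty^+) := \Omega_+M_r\Omega_+^*$, where $M_r$ is the maximally defined self-adjoint multiplication by the distance function on $L^2(M)$; this operator is self-adjoint on $\vH_{\mathrm{c}}(H)$ by unitarity of $\Omega_+$, and it is positive with trivial kernel because $r\ge 0$ and $\{r=0\}=\{o\}$ has measure zero. For $\phi\in C_{\mathrm{c}}(\R)$ the composition $\phi\circ r$ lies in $C_0(M)$ and is approximable in uniform norm by functions in $C_{\mathrm{c}}(M)$; combining $r(\omega(t,x)) = r(x)/t$ (the $a=0$, pole-at-$o$ analogue of (\ref{eq:11.5.23.7.35})) with the uniform-in-$t$ bound $\|M_{\psi\circ\omega(t,\cdot)}\|\le\|\psi\|_\infty$, the argument of the previous paragraph extends by density to yield
\[\phi\parb{r(\omega_\infty^+)} = \Omega_+M_{\phi\circ r}\Omega_+^* = \slim_{t\to+\infty}\e^{\i tH}\phi\parb{r(\cdot)/t}\e^{-\i tH},\]
which is the asserted $C_{\mathrm{c}}(\R)$-limit. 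Finally, conjugating $\Omega_+^*H\Omega_+ = M_{r^2/2}$ (recorded just before Corollary \ref{cor:asymwbb}) by the unitary $\Omega_+\colon\vH\to\vH_{\mathrm{c}}(H)$ gives $H_{\mathrm{c}} = 2^{-1}r(\omega_\infty^+)^2$. I anticipate no serious obstacle: the whole argument rides on the intertwining identity and on the unitarity of $\Omega_+$ granted by Theorem \ref{thm:10.5.31.23.23b}; the only mild subtlety, the unboundedness of $r$, is handled by the uniform-norm density of $C_{\mathrm{c}}(M)$ in $C_0(M)$.
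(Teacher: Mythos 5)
Your proposal is correct and follows essentially the same route as the paper: the key intertwining identity $\phi(\omega(t,\cdot))=U(t)M_\phi U(t)^*$ combined with the existence and completeness of $\Omega_+$ from Theorem \ref{thm:10.5.31.23.23b}, and the identity $\Omega_+^*H\Omega_+=M_{r^2/2}$. The only (harmless) variations are that you define $r(\omega_\infty^+)$ directly as $\Omega_+M_r\Omega_+^*$ and obtain $H_{\mathrm{c}}=2^{-1}r(\omega_\infty^+)^2$ by conjugating with the unitary $\Omega_+\colon\vH\to\vH_{\mathrm{c}}(H)$, whereas the paper extracts $R=r(\omega_\infty^+)$ from the non-degenerate $*$-representation and passes to the limit over the truncations $\psi_N(s)=\tfrac{s^2}{2}\chi_{-,0,N}(s)$; these are equivalent in the pole setting where $\Omega_+^*\Omega_+=I$.
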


\begin{remark}\label{remarks:introb}
Under Condition~\ref{cond:diffeob}
a sufficient condition for Condition \ref{cond:conv} is the
  following:
Suppose there exists
    $\delta\in (0,1)$ such that the radial
    curvature $R=R(\dot x,\cdot, \dot x, \cdot)$
    satisfies the upper bound
    \begin{equation}
      \label{eq:21}
      R\leq \tfrac {1-\delta^2}{4r^2}g.
    \end{equation} This is along any unit-speed  geodesic $x(\cdot)$ emanating
    from the pole $o$ of Condition \ref{cond:diffeob} and with
    $r=r(x)$. (Alternatively $R_{ij}=(\nabla r)_k(\nabla
r)^lR^k{}_{ilj}$ in terms of  the curvature tensor as defined in
\cite{Jost,Mi}.)

Then  by a standard comparison argument, see for
    example \cite[proof of Theorem 4.1.1]{GLLT},  indeed \eqref{eq:1}
    holds true with this $\delta$. In particular if $M$ has non-positive sectional
    curvatures \eqref{eq:1} is valid for $\delta=1$. For these
     considerations Condition~\ref{cond:10.9.2.11.13} 
     is irrelevant. Note that \eqref{eq:21} involves second order
     derivatives of  the metric. In some  principal examples, see Subsubsection
     \ref{sec:10.12.25.18.37}, we shall use a different criterion
     involving  derivatives of  the metric up to first order only.  
\end{remark}
\begin{comment}
  \footnote[12]{It does not seem to be really irrelevant, since we have 
     $-\partial_r\triangle r=|\nabla^2r|^2+\mathrm{Ric}_{rr}$. The control of curvature
     might result in that of the $r$-component of $\mathrm{d}\triangle r$, but I do not know if it is true.}
\end{comment}

\section{Geometric setting considerations}\label{sec:geometric-setting
  considerations}
We shall  explore the generality and limitations of our conditions in terms of
various examples. The fact that these conditions are invariant under
change of variables will facilitate  the construction of examples. Secondly
we shall explore the consequences of our conditions in Classical
Mechanics. 

\subsection{Examples}\label{Examples} We give various examples. For
convenience we assume Condition~\ref{cond:diffeob} instead of Condition~\ref{cond:diffeo},
and take henceforth $r_0=0$ in Condition~\ref{cond:conv} and $V=0$ in Condition \ref{cond:10.6.1.16.24}.

\subsubsection{Warped product manifold}\label{sec:10.12.25.5.35}
Under Condition~\ref{cond:diffeob} 
we can write
\begin{align*}
g=\mathrm{d}r\otimes\mathrm{d}r+g_{\alpha\beta}(r,\sigma)\mathrm{d}\sigma^\alpha\otimes\mathrm{d}\sigma^\beta;&&
g_{rr}=1,\quad g_{r\alpha}=g_{\alpha r}=0,
\end{align*}
where $\sigma^\alpha$ are local coordinates on the geodesic unit sphere $S_1$ and the Greek indices run on $2,\dots,d$.
A \textit{warped product manifold} is a  connected complete
Riemannian manifold fulfilling  Condition~\ref{cond:diffeob} with a 
Riemannian metric of the form
\begin{align*}
g=\mathrm{d}r\otimes\mathrm{d}r+f(r)h_{\alpha\beta}(\sigma)\mathrm{d}\sigma^\alpha\otimes\mathrm{d}\sigma^\beta
\end{align*}
in the geodesic spherical coordinates. Note that this in particular means (due to a  regularity
consideration at the
pole $o$) that
$h$ is the standard 
  Euclidean metric density  of the unit sphere and that $\lim_{r\to
    0}r^{-2}f(r)=1$. In the framework of  Condition~\ref{cond:diffeo} such restriction on $h$ is not
  needed.

Let us assume $(M^d,g)$ is a warped product manifold.
Then, if we set $f=\mathrm{e}^{2\varphi}$, \eqref{eq:10.2.8.3.55} is equivalent to 
\begin{align}
2r\varphi'\ge 1+\delta,\label{eq:10.12.21.20.45}
\end{align}
and \eqref{eq:10.9.2.23.19} to
\begin{equation}
  \label{eq:40}
  |(r\varphi')'|\le C\langle r\rangle^{-1/2-\kappa}.
\end{equation}
In fact, by direct computations,
\begin{equation}\label{eq:9}
  (\nabla^2 r^2)_{rr}=2,\qquad
(\nabla^2 r^2)_{r\alpha}=(\nabla^2 r^2)_{\alpha r}=0,\qquad 
(\nabla^2 r^2)_{\alpha\beta}=r f'h_{\alpha\beta}.
  \end{equation} 
Clearly the lower bound  \eqref{eq:10.12.21.20.45} results from 
\eqref{eq:9}. Similarly the bound \eqref{eq:40} results by taking
the trace of \eqref{eq:9}, cf. \eqref{eq:10},  to obtain that  $\triangle
r^2=2+2(d-1)r\varphi'$,  and then noting that this quantity is radially symmetric.

We  see that the inequalities  \eqref{eq:10.12.21.20.45}  and \eqref{eq:40}  allow, for example, 
\begin{align*}
f_{1,\mu}(r)=r^2\langle r\rangle^{2\mu},\  \mu\ge (\delta-1)/2,&&
f_{2,\nu}(r)=r^2\e^{-2}\exp{}(2\langle r\rangle^{\nu}), \ 0\le \nu\le 1/2-\kappa.
\end{align*}
The Euclidean space corresponds to $f_{1,0}(r)=f_{2,0}(r)=r^2$.

\subsubsection{Ultra-long-range perturbation of Euclidean space}
Though we have formulated our conditions  in a coordinate invariant way, 
our first motivation was the
  example $M=\mathbb{R}^d$ with a Riemannian metric $g$ satisfying Conditions~\ref{cond:diffeob},
\ref{cond:conv} and 
\begin{cond}\label{cond:10.12.25.3.36}
There exists $c>0$ such that for the standard coordinates $x$
\begin{align*}
g\ge c\delta_{ij}\mathrm{d}x^i\otimes\mathrm{d}x^j
\end{align*}
and that for $r=\mathop{\mathrm{dist}}_g(x,0)$
\begin{align*}
|\partial^\alpha_x g_{ij}|&{}\le C_\alpha\langle x\rangle^{-|\alpha|}\mfor |\alpha|\leq 2,\\
|\partial_x^\alpha (\nabla^2 r^2)_{ij}|&{}\le C_\alpha \langle x\rangle^{-|\alpha|}
\mfor |\alpha|\le 1.
\end{align*}
\end{cond}
%The reader who is  not very familiar with geometric notions may assume Condition~\ref{cond:10.12.25.3.36}
%instead of Condition~\ref{cond:10.9.2.11.13}  since our proof applies
%under both conditions. In fact 
The Condition~\ref{cond:10.12.25.3.36} is
stronger than Condition~\ref{cond:10.9.2.11.13}. Note  also that 
 Condition~\ref{cond:10.12.25.3.36} is  manifestly  not coordinate
 invariant 
requiring $g$ to be comparable with the Euclidean
metric.

An  example of a model
  satisfying Conditions~\ref{cond:diffeob},
\ref{cond:conv} and \ref{cond:10.12.25.3.36} is
  given as follows: Let $m$ be a  real symmetric $d\times
  d$--matrix-valued function on $\R^d$. Suppose in addition that all
  entries $m_{ij}\in C^\infty(\R^d)$ and obey
  \begin{equation}\label{eq:m17}
   |\partial^\alpha_x m_{ij}|\le C_\alpha\langle
   x\rangle^{-|\alpha|}\mfor |\alpha|\leq 3. 
  \end{equation} Then for any $\epsilon\in \R$ with  $|\epsilon|$
  being sufficiently
  small  the metric $g$ given as a matrix by 
  $g_{ij}=\delta_{ij} +\epsilon m_{ij}$ fulfills
  Conditions~\ref{cond:diffeob}, 
\ref{cond:conv} and \ref{cond:10.12.25.3.36}. We refer to \cite{CS} for details. 

In fact
  there is a more general example from  \cite{CS}: 
Take any  ``unperturbed''
  metric $g$ on $\R^d$ obeying 
  \begin{align}
g\ge c\delta_{ij}\mathrm{d}x^i\otimes\mathrm{d}x^j,&&
|\partial^\alpha_x g_{ij}|&{}\le C_\alpha\langle x\rangle^{-|\alpha|}\mfor |\alpha|\leq 3
\end{align}
and identified in terms of the Euclidean metric on $\R^d$ as a matrix of
  the form (for $x\neq0$)
\begin{equation} 
\label{ortdec22}
G(x)=P+P_\perp G(x) P_\perp,
\end{equation}
where $P$ denotes, in the Dirac notation,
the orthogonal projection $P=P(\hat x)=|\hat x\rangle\langle\hat x|$
parallel to $\hat x=x/|x|$ and $P_\perp=P_\perp(\hat x)=I-P$
the orthogonal projection onto $\{\hat x\}^\perp$. Suppose in addition
that
\begin{equation}
\label{ortdec2}
P_\perp \parb {(1-\delta)G(x) +x \cdot \nabla G(x)} P_\perp\geq 0.
\end{equation} Then a computation shows that the conditions of
\cite[Theorem 1.4 ii)]{CS} as well as Condition
\ref{cond:conv} (with this  $\delta$ in Condition
\ref{cond:conv} and with $\bar c:=(1+\delta)/2$ in \cite[(1.13)]{CS})
are fulfilled. In fact using \eqref{ortdec22} we compute 
\begin{equation*}
  \nabla^2 r^2(x)(y,y)=2yG(x)y +yP_\perp\nabla G(x)\cdot xP_\perp y,
\end{equation*} showing  the equivalence of   Condition
\ref{cond:conv} and \eqref{ortdec2} for a metric of the form
\eqref{ortdec22}. (Note at this point the consistency with
\eqref{eq:10.12.21.20.45}.) 

If we again
let $m$ be given by \eqref{eq:m17} and similarly define 
$(g_\epsilon)_{ij}=g_{ij}+\epsilon m_{ij}$ then a computation using \cite[Theorems 1.4
ii) and 1.6]{CS} shows that indeed  $g_\epsilon$ for any sufficiently
  small $|\epsilon|$  fulfills Conditions~\ref{cond:diffeob},
\ref{cond:conv} and \ref{cond:10.12.25.3.36}. For some examples constructed in this way
  we refer to Subsubsection~\ref{sec:10.12.25.18.37}. As the reader
  will see  the geometric invariance  is
  exploited explicitly.

\subsubsection{Conformally flat manifold}\label{sec:10.12.25.18.37}
The Laplace-Beltrami operator, the comparison dynamics
\eqref{eq:10.5.4.22.21} and Conditions
\ref{cond:diffeo}--\ref{cond:10.9.2.11.13} (as well as Condition
\ref{cond:10.6.1.16.24}) are    cleanly
geometrically  invariant,
while Condition~\ref{cond:10.12.25.3.36} is not that appealing. One way
to 
circumvent this for  given  $(M,g_M)$, $M$ being connected, complete  and
$d$-dimensional, is by postulating the  existence of a  diffeomorphism $\Psi:M\to
\R^d$ with the property that 
\begin{equation}
  \label{eq:20}
  g:=(\Psi^{*})^{-1}g_M\text{ obeys Condition \ref{cond:10.12.25.3.36}.
  }
\end{equation} Clearly Conditions
\ref{cond:diffeob}, \ref{cond:conv} and \eqref{eq:20} constitutes an
 invariant theory.

We will in this subsubsection give
an 
 example of the how to use \eqref{eq:20} concretely. Our discussion is
 based on \cite[Section 7]{CS}. Consider  a  radial function
$V=V(z)=V(|z|)$ of class $C^\infty$  on $\R^d$, $d\geq 2$,   for which
 there are constants  $ \mu\in (-\infty,2)$, $a,A, \sigma>0$  and $\delta\in
 (0,1]$ such that
 \begin{subequations}
   \begin{align}
\label{bound1v}
-A \langle z\rangle^{-\mu}\leq V(z)&\leq -a \langle z\rangle^{-\mu},
\\
z\cdot \nabla V(z)+2V(z)&\leq \sigma V(z)\label{bound1vbb},\\
\partial^\alpha V(z)&=O\parb{\inp{z}^{-(\mu+|\alpha|)}}\mfor |\alpha|\leq 3,\label{bound1vbbcc}
\\(1+\delta)& \leq \inf_{r>0} h(r);\,
h(r):=\frac{\big (2+\tfrac{-rV'(r)}{-V(r)}\big )\int_0^r\sqrt{-V(s)}\d s}{\sqrt{-V(r)}r}.\label{bound1vbbqq}
\end{align} 
 \end{subequations}

Note that for any $ \mu\in [0,1)$ obviously the function $V(z)=-\inp{z}^{-\mu}$ is an example in
this class. A more careful (but elementary) consideration shows that  this is  the case for any $ \mu\in
(-\infty,4/3)$. It could be true for any $ \mu\in
(-\infty,2)$ (note that $h(0)=h(\infty)=2$). If $W=W(z)$ is of
class $C^\infty$  on $\R^d$, possibly non-radial, we are interested in
studying the metric (conformally) generated by $V_\epsilon=V+\epsilon W$ for
$|\epsilon|\geq 0$
sufficiently small, that is the metric
\begin{equation}
  \label{eq:22}
  -2V_\epsilon\,\d z^2.
\end{equation} We shall impose a  condition on $W$  similar to \eqref{bound1vbbcc},
\begin{equation}
  \partial^\alpha W(z)=O\parb{\inp{z}^{-(\mu+|\alpha|)}}\mfor |\alpha|\leq 3.\label{bound1vbbccb}
\end{equation} Under the conditions
\eqref{bound1v}--\eqref{bound1vbbcc} and \eqref{bound1vbbccb}  a
diffeomorphism $\Psi$ as in \eqref{eq:20} is constructed in
\cite[Subsections 7.1--2]{CS}
so that in the new coordinates in fact  Conditions
\ref{cond:diffeob} and \ref{cond:10.12.25.3.36} (and therefore Conditions
\ref{cond:diffeob}  and \ref{cond:10.9.2.11.13}) are fulfilled. The new condition \eqref{bound1vbbqq} is introduced, as we will
show below, to verify the remaining Condition
\ref{cond:conv} (in our discussion the potential  in \eqref{eq:10.5.4.3.24} is for simplicity taken absent). Following  \cite{CS} we  define $\Psi$ by specifying
its inverse,
\begin{equation}
  \label{eq:23}
  \Psi^{-1}(x)=\exp_0\big (x/(-2V(0)\big),
\end{equation} where the exponential mapping is defined in terms of
the unperturbed metric, i.e. by  \eqref{eq:22} with
$\epsilon=0$. Concretely 
\begin{equation}
  \label{eq:23b}
  x=\Psi(z)=\rho(|z|)\tfrac{z}{|z|}\text{ where }\rho(r)=\int_0^r\sqrt{-2V(s)}\d s.
\end{equation} Letting $r=r(\rho)$ denote the inverse of this
function $\rho$ we define 
\begin{equation}
  \label{eq:25}
  f(\rho)=\frac{\sqrt{-2V(r(\rho))} r(\rho)}{\rho},
\end{equation} and  
we have 
\begin{equation}
  \label{eq:24}
  g_\epsilon:=(\Psi^{*})^{-1}\parb{-2V_\epsilon\,\d^2z}= P+f^2P_\perp +O(\epsilon),
\end{equation} where $P$ and $P_\perp$ are given as in
\eqref{ortdec22}. It remains (for Condition
\ref{cond:conv}) to show that indeed \eqref{ortdec2} holds for the
unperturbed part, $G:=P+f^2P_\perp$, given the condition
\eqref{bound1vbbqq}: We compute
\begin{equation*}
  \rho\tfrac{\d}{\d
    \rho}f^2+(1-\delta)f^2=f^2\parbb{\parb{2+\frac{-rV'(r)}{-V(r)}}/f-1-\delta}=f^2\parb{h(r)-1-\delta};
\end{equation*} whence indeed by \eqref{bound1vbbqq}
\begin{equation*}
  P_\perp \parb {(1-\delta)G(x) +x \cdot \nabla G(x)} P_\perp\geq 0,
\end{equation*} and therefore Condition
\ref{cond:conv} holds for the metrics $g_\epsilon$ and
$-2V_\epsilon\,\d z^2$ for any slightly smaller $\delta >0$ 
provided $|\epsilon|\geq0$ is taken small enough. In particular our results
apply to the metric $-2V_\epsilon\,\d z^2$ although this example from
the outset does
not conform with Condition~\ref{cond:10.12.25.3.36} 
 (unless $\mu=0$).

We refer the reader to \cite[Subsection 7.3]{CS} for an example with
$\mu=0$ in  the previous scheme for which  the geodesics
of the perturbed metric emanating from $0\in \R^d$  are  
 attracted to logarithmic spirals.

\subsection{Counter examples, borderlines  of conditions}\label{Generalizations}
We construct warped product manifolds to illustrate the optimality of the
conditions of
Theorem \ref{thm:10.5.31.23.23}.
\begin{proposition}
  \label{prop:count-exampl-bord} Suppose Condition
  \ref{cond:diffeob}. Suppose that exactly one of the conditions
  $\delta>0$, $\kappa>0$ and $\eta>0$ in Conditions
  \ref{cond:conv}--\ref{cond:10.6.1.16.24}  is replaced 
  by either $\delta=0$, $\kappa=0$ or $\eta=0$, respectively.  Then
  the exists a warped product manifold fulfilling this slightly more
  general set of conditions for which not all of the analogous conclusions of
  Theorem \ref{thm:10.5.31.23.23} are  true.
\end{proposition}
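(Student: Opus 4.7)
The plan is to work within the warped product class of Subsubsection~\ref{sec:10.12.25.5.35} throughout. Writing $f=\e^{2\varphi}$, Conditions \ref{cond:conv}--\ref{cond:10.9.2.11.13} translate into the scalar inequalities $2r\varphi'\ge 1+\delta$ and $|(r\varphi')'|\le C\langle r\rangle^{-1/2-\kappa}$, while Condition \ref{cond:10.6.1.16.24} is the pointwise bound $|V(r)|\le C\langle r\rangle^{-1-\eta}$. For a radial potential $V=V(r)$, separation of variables over spherical harmonics reduces $H$ on each eigensector to a one--dimensional Schr\"odinger operator on $(0,\infty)$; after conjugation by $f^{(d-1)/4}$ the drift term is eliminated and one is left with a standard radial operator on $L^2((0,\infty),\d r)$, for which the relevant spectral and scattering information can be read off the asymptotics of solutions of the radial ODE. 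Each of the three counter-examples below is built within this class.

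For the case $\eta=0$ I would take the Euclidean metric $f(r)=r^2$ (so that Conditions \ref{cond:conv} and \ref{cond:10.9.2.11.13} hold with $\delta=1$ and any $\kappa<1/2$) together with $V(r)=-c\langle r\rangle^{-1}$, $c>0$. This is the borderline Coulomb--type perturbation. By Dollard's classical theorem the propagator $U(t)$ of (\ref{eq:10.5.4.22.21}), when composed with $\e^{\i tH}$, does not have a strong limit as $t\to+\infty$; a logarithmic phase modification is mandatory. Hence $\Omega_+$ as formulated in Theorem~\ref{thm:10.5.31.23.23} fails to exist, and the stated conclusions cannot hold.

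For the case $\delta=0$ I would choose $f(r)=r$ for large $r$, smoothly matched to the pole so that Condition~\ref{cond:diffeob} is satisfied. Then $2r\varphi'\equiv1$ and $\triangle r^2=d+1$ is constant, so Condition~\ref{cond:10.9.2.11.13} holds trivially for any $\kappa>0$, while Condition \ref{cond:conv} is saturated at $\delta=0$. After the unitary reduction on a spherical sector with spherical eigenvalue $\lambda_\ell>0$, the effective half--line operator contains a long--range Coulombic term of the form $\lambda_\ell/(2r)$; exactly as in the previous case this forces a modified phase in the asymptotic dynamics, so the unmodified $\Omega_+$ cannot exist on these sectors and the completeness in (\ref{eq:7}) must fail.

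The main obstacle, and the most delicate of the three constructions, is the case $\kappa=0$, for which I would use a Wigner--von Neumann mechanism in disguise. Fix some $\delta>0$ and choose $\varphi$ so that $r\varphi'(r)=\tfrac12(1+\delta)+q(r)$ with an oscillatory function $q$ of amplitude of order $r^{-1/2}$; then $(r\varphi')'$ is of exact order $r^{-1/2}$, saturating Condition~\ref{cond:10.9.2.11.13} at $\kappa=0$. The oscillation of $q$ is to be tuned, following the standard Wigner--von Neumann recipe, so that the effective radial potential produced after the unitary reduction on a fixed spherical sector contains a term of the form $c\sin(2\sqrt{2E}\,r+\phi)/r$ at a prescribed positive energy $E$, yielding a genuine $L^2$ eigenfunction at energy $E$. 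The resulting positive embedded eigenvalue contradicts the identification $\Omega_+^*H\Omega_+=M_{r^2/2}P_{\mathrm{aux}}$ from (\ref{eq:2}) combined with $\sigma_{\mathrm{sc}}(H)=\emptyset$ in Corollary~\ref{cor:asymwb}. The hard step here is to carry out the nonlinear passage $f=\e^{2\varphi}\mapsto$ effective potential in such a way as to produce a Wigner--von Neumann term with the correct amplitude, phase and decay, while keeping all other geometric bounds in Conditions~\ref{cond:conv}--\ref{cond:10.9.2.11.13} safely inside their non-borderline ranges.
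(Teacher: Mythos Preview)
Your treatment of the cases $\eta=0$ and $\delta=0$ is essentially the paper's: reduce to a half-line Schr\"odinger operator via conjugation by $f^{(d-1)/4}$, identify an effective Coulomb-type term (coming from $V$ itself when $\eta=0$, and from $\lambda_\ell/(2f)\sim\lambda_\ell/(2r)$ when $\delta=0$), and invoke the Dollard obstruction. The paper makes this last step explicit by constructing a Dollard-modified wave operator $\tilde\Omega_l$ on each spherical sector and showing that, if the conclusions of Theorem~\ref{thm:10.5.31.23.23} held, then $\Omega_+^*$ applied to a state in $\Ran\,\vM^{-1}\tilde\Omega_l$ would equal a non-convergent oscillatory limit; but the underlying idea is yours.

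Your approach to $\kappa=0$, however, has a genuine gap. You propose $q:=r\varphi'-\tfrac12(1+\delta)$ of amplitude $O(r^{-1/2})$ oscillating at a fixed frequency, hoping to manufacture a Wigner--von~Neumann term $c\sin(2\sqrt{2E}\,r)/r$ in the effective potential. But after the unitary reduction the effective potential on a spherical sector is
\[
\tilde V_\ell(r)=\tfrac{(d-1)^2}{8}\,\frac{\psi(r)^2}{r^2}+\tfrac{d-1}{4}\,\frac{\psi'(r)}{r}-\tfrac{d-1}{4}\,\frac{\psi(r)}{r^2}+\frac{\lambda_\ell}{2f(r)},\qquad \psi:=r\varphi'.
\]
With $\psi$ bounded and $|\psi'|\le Cr^{-1/2}$ (which is what your choice gives) every term here is $O(r^{-3/2})$, hence integrable: one gets a short-range half-line problem with no positive embedded eigenvalues and an \emph{existing} unmodified wave operator. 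To get any $1/r$ behaviour at all one must let $\psi$ grow, and the constraint $|\psi'|\le Cr^{-1/2}$ then forces $\psi(r)\sim c\sqrt r$ with a non-oscillatory leading part; an oscillatory component of $\psi$ at fixed frequency $\omega$ contributes $\omega$ times its amplitude to $\psi'$, so its amplitude is at most $O(r^{-1/2})$ and its contribution to $\tilde V_\ell$ is at most $O(r^{-3/2})$. The $\kappa=0$ constraint is therefore incompatible with a resonant oscillatory potential of strength $1/r$, and the Wigner--von~Neumann mechanism cannot be triggered.

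The paper's route for $\kappa=0$ is instead the same as for the other two cases: take $f(r)=r^2\e^{-2}\e^{2\sqrt{\langle r\rangle}}$, for which $\triangle r\sim (d-1)/(2\sqrt r)$ and hence $\tfrac18(\triangle r)^2\sim c/r$ is a \emph{smooth} Coulomb-type term. All three borderline cases then reduce to one and the same Dollard obstruction.
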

 
In the case of $\delta=0$ we can choose 
the density factor  $f(r)=r^2\inp{r}^{-1}$ in Subsubsection \ref{sec:10.12.25.5.35}.
In the case of $\kappa=0$ we can choose 
the density factor $f(r)=r^2\e^{-2}\e^{2\sqrt{\inp{r}}}$. While for
$\eta=0$ we can choose 
the density factor  $f(r)=r^2$ (Euclidean model) and
$V=c\inp{r}^{-1}$, $c\neq 0$.

To see this we need some preparation. We introduce the Hilbert space
$\tilde \vH=L^2(\R_+,\vG,\d r)$ where $\vG =L^2(S_1,\d \sigma)$ where $S_1$
is the unit sphere in $\R^d$ and $\d \sigma$ the induced Euclidean
measure. For any warped product model $f=\e^{2\phi}$  we introduce a
unitary operator
$\vM:\vH\to \tilde \vH$ by
\begin{equation*}
  \tilde u=\vM u=f(r)^{\frac{d-1}{4}}u=f(1)^{\frac{d-1}{4}}\e^{\int_1^r\triangle r \,\d r/2}u.
\end{equation*}  
We have the formula in spherical coordinates $(r,\sigma)$ 
\begin{equation*}
  \tilde U(t)\tilde u:=\vM U(t)\vM ^{-1}\tilde u=\e^{\i \tfrac{r^2}{2t}}t^{-1/2}\tilde u(\tfrac{r}{t},\sigma).
\end{equation*} Note also the formula 
\begin{equation*}
  \tilde H \tilde u:=\vM H\vM ^{-1}\tilde u=\parb{\tfrac 12
    p_r^2+\tilde V-\tfrac 12 f^{-1}\triangle _1}\tilde u,
\end{equation*} where
\begin{equation*}
   p_r=-\i\tfrac{\d}{\d r},\quad\tilde V=V+\tfrac 18 (\triangle r)^2 +\tfrac 14 \partial
    \triangle r,
\end{equation*}
 and $\triangle_1$ denotes the Laplace-Beltrami
operator on $\vG$. 

The generator of $\tilde U(t)$ is given  by
\begin{equation*}
  \tilde G(t)=\tfrac{1}{2}p_r^2-\tfrac{1}{2}\bigl(p_r-\frac{r}{t}\bigr)^2,
\end{equation*}  cf. 
Remark \ref{remarks:intro} \ref{item:3}.  Whence we have for all
$\tilde u\in C^\infty_c(\R_+\times S_1)$
\begin{equation*}
  \parb {\tilde H-\tilde G(t)}\tilde U(t)\tilde u=t^{-1/2}\e^{\i \tfrac{r^2}{2t}}\parb {-\tfrac{1}{2}p_r^2+\tilde V-\tfrac 12 f^{-1}\triangle _1}\tilde u(\tfrac{r}{t},\sigma).
\end{equation*} Given the  Conditions
\ref{cond:conv}--\ref{cond:10.6.1.16.24} for the model, and therefore
  \eqref{eq:49}, \eqref{eq:10.12.21.20.45} and \eqref{eq:40}, the Cook method
 and this computation yields the existence of the limit
\begin{equation}
  \label{eq:58}
  \tilde \Omega\tilde u=\lim_{t\to \infty}\e^{\i \tilde H t}\tilde
  U(t)\tilde u;\; \tilde u\in \tilde \vH.
\end{equation} Moreover this argument does not work if $\delta=0$ in Condition
\ref{cond:conv} (since then $f^{-1}$ might not decay fast enough), or
if $\kappa=0$ in Condition \ref{cond:10.9.2.11.13} (since  then
$(\triangle r)^2$ might not decay fast enough) nor if $\eta=0$ in
Condition \ref{cond:10.6.1.16.24} (since  then
$V$ might not decay fast enough). This provides some   intuition about
Proposition \ref{prop:count-exampl-bord}. 

To come closer to a 
proof of Proposition \ref{prop:count-exampl-bord} let us note that these borderline cases can be ``repaired'' by
modified evolutions in the spirit of the Dollard evolution for Schr\"odinger
operators. Thus, for the example $f(r)=r^2\inp{r}^{-1}$ the factor
$f^{-1}\approx r^{-1}$, and if we take $\tilde u=v(r)Y_l(\sigma)$ for
a spherical harmonic $Y_l$ we have
\begin{equation*}
  -\tfrac 12 f^{-1}\triangle _1\tilde u(\tfrac{r}{t},\cdot)\approx   cr^{-1}\tilde u(\tfrac{r}{t},\cdot);\;c=c(l)=-\tfrac 12 (l+d/2-1).
\end{equation*} This  motivates us to introduce 
\begin{equation*}
  \tilde U_l(t)\tilde u:=\e^{\i \parb{\tfrac{r^2}{2t}-c\tfrac t{\inp{r}}
    \ln \inp{r}}}t^{-1/2}\tilde u(\tfrac{r}{t},\sigma),
\end{equation*} whose generator is 
\begin{equation*}
  \tilde
  G_l(t)=\tfrac{1}{2}p_r^2-\tfrac{1}{2}\bigl(p_r-\frac{r}{t}\bigr)^2+c\tfrac
  {\ln \inp{r}}{\inp{r}^3}+c\tfrac {r^2}{\inp{r}^3}.
\end{equation*} Note that $c\tfrac {r^2}{\inp{r}^3}\approx cr^{-1}$. Whence, by the arguments above for this example,  we obtain the existence of the limit
\begin{equation}
  \label{eq:58b}
  \tilde \Omega_l\tilde u=\lim_{t\to \infty}\e^{\i \tilde H t}\tilde
  U_l(t)\tilde u;\; \tilde u\in \tilde \vH_l:=L^2(\R_+)\otimes Y_l.
\end{equation}
We note the property 
\begin{equation*}
  \tilde \Omega_l^*\tilde H\tilde \Omega_l=M_\lambda,
\end{equation*} where $M_\lambda$ denotes multiplication by the
function  $r\to r^2/2$, $v(r)\otimes Y_l\to \tfrac{r^2}{2}v(r)\otimes
Y_l$. (The reader may at this point consult the end of
Subsection \ref{lemma:10.5.31.23.24}.)
 In particular $\Ran \vM^{-1}\tilde \Omega_l\subseteq \vH_c(H)$.

\proof[Proof of Proposition \ref{prop:count-exampl-bord}] First we
continue our discussion of the  example $f(r)=r^2\inp{r}^{-1}$  for
which $\delta=0$. Suppose on the contrary that the conclusions of
  Theorem \ref{thm:10.5.31.23.23} are all true for this example. Let
   $ \Omega_+$ be given accordingly and $\tilde \Omega_l$ be given by \eqref{eq:58b}. We
  derive a contradiction by taking an arbitrary nonzero $\tilde v\in
  L^2(\R_+)$, define $\tilde u=\tilde v\otimes Y_l$ with any $l$ for
  which $c=c(l)\neq 0$ and compute 
  \begin{align*}
    \Omega^*_+\vM^{-1}\tilde \Omega_{l}\tilde
    u&=\Omega^*_+P_c\vM^{-1}\tilde \Omega_{l}\tilde u=\lim_{t\to \infty}U(t)^*
  \vM^{-1}\tilde U_l(t)\tilde u\\ &=\vM^{-1}\lim_{t\to \infty}\parb{\e^{-\i \tfrac c{r}
    \ln t}
  \tilde w}\otimes Y_l;\;\tilde w(r)=\e^{-\i \tfrac c{r}
    \ln r}\tilde v(r).
  \end{align*} Since $\tilde w\neq 0$ and $c\neq 0$  the factor $\e ^{-\i \tfrac c{r}
    \ln t}$ does not have a  limit when applied to $\tilde w$, cf.   the Riemann-Lebesgue lemma \cite{RS}). This is a contradiction.

For  the  example $f(r)=r^2\e^{-2}\e^{2\sqrt{\inp{r}}}$, for
which $\kappa=0$, we proceed
similarly. The term $\tfrac 18 (\triangle r)^2 \approx c r^{-1}
$,  so we can repeat the above arguments.

Finally  the potential $V=c\inp{r}^{-1}$, for
which $\eta=0$, provides (with $f=r^2$) a counter example. The arguments are the same.
\endproof

\subsection{Classical Mechanics under Conditions \ref{cond:diffeob}\ and \ref{cond:conv}}\label{Classical mechanics considerations}
We outline proofs of analogues of
Theorem \ref{thm:10.5.31.23.23b}  and Corollary \ref{cor:asymwbbb} in
Classical Mechanics.
As we pointed out before  the  Classical Mechanics considerations only
require Conditions~\ref{cond:diffeo} and \ref{cond:conv}.
But, for convenience, we consider Conditions~\ref{cond:diffeob} and \ref{cond:conv} with $r_0=0$, instead.
 If we adopt Conditions~\ref{cond:diffeo} and \ref{cond:conv} not necessarily with $r_0=0$,
 then all the geodesics appearing below need to be non-trapped.  Our  proofs of Theorem \ref{thm:10.5.31.23.23} and
Corollary \ref{cor:asymw} are strongly motivated by   these considerations.

\subsubsection{Regularity of classical
  dilation}\label{sec:regul-inverse-dilat}
First we prove an estimate for the  geodesic dilation $\omega(t,x)$.
Recall 
\begin{align*}
\omega(t,x)=\exp_o \Bigl[\tfrac{1}{t}\exp_o^{-1}(x)\Bigr],&& (t,x)\in (0,\infty)\times M.
\end{align*}
 In any local coordinates $\omega$ satisfies, cf. \eqref{09.12.17.22.23},
\begin{align}
\partial_t \omega^i=-\tfrac{r}{t^2}(\grad r)(\omega)=-\tfrac{1}{2t}g^{ij}(\omega)(\partial_jr^2)(\omega),&&
\omega(1,x)=x.
\label{09.12.17.22.23b}
\end{align}
\begin{lemma} For all $(t,x)\in (0,\infty)\times M$ and independently of choice of coordinates
\begin{align}
g^{ij}(x)g_{kl}(\omega(t,x))[\partial_i\omega^k(t,x)][\partial_j\omega^l(t,x)]\le d t^{-(1+\delta)}.\label{9.12.19.1.58}
\end{align}
\end{lemma}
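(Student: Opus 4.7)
My plan is to interpret the left-hand side as the $g$-trace of the pullback metric, $F(t,x):=g^{ij}(x)g_{kl}(\omega)(\partial_i\omega^k)(\partial_j\omega^l)$, equivalently $\tr_g\bigl(\omega(t,\cdot)^*g\bigr)(x)$, and derive a differential inequality in $t$. Since $\omega(1,x)=x$, the initial value is $F(1,x)=g^{ij}g_{ij}=d$, so the estimate is saturated at $t=1$; the target reduces to showing $t^{1+\delta}F(t,x)$ is non-increasing in $t$.

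First, I would differentiate $F$ in $t$ using \eqref{09.12.17.22.23b}. Three contributions arise---from $g_{kl}(\omega)$ and from each of $\partial_i\omega^k$, $\partial_j\omega^l$---and after relabelling dummy indices they combine into
\begin{equation*}
\partial_t F=-\tfrac{1}{2t}\,g^{ij}(x)(\mathcal{L}_V g)_{kl}(\omega)(\partial_i\omega^k)(\partial_j\omega^l),\qquad V:=\grad r^2.
\end{equation*}
Because $V^\flat=\mathrm{d}(r^2)$ is closed, $\nabla V^\flat=\nabla^2 r^2$ is automatically symmetric, so $\mathcal{L}_V g=2\nabla^2 r^2$ as $(0,2)$-tensors. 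Thus
\begin{equation*}
\partial_t F=-\tfrac{1}{t}\,g^{ij}(x)(\nabla^2 r^2)_{kl}(\omega)(\partial_i\omega^k)(\partial_j\omega^l).
\end{equation*}

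Next, I would exploit Condition~\ref{cond:conv} (with $r_0=0$ in this subsection). The matrix $B^{kl}:=g^{ij}(x)(\partial_i\omega^k)(\partial_j\omega^l)$ is symmetric and positive semidefinite because $g^{ij}(x)$ is; by the elementary fact that $\tr(AB)\ge 0$ whenever $A,B$ are symmetric positive semidefinite, the Mourre-type lower bound $\nabla^2 r^2\ge (1+\delta)g$ at $\omega(t,x)$ yields
\begin{equation*}
\partial_t F\le -\tfrac{1+\delta}{t}\,F.
\end{equation*}

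Finally I would integrate this: the inequality is equivalent to $\tfrac{\mathrm{d}}{\mathrm{d}t}(t^{1+\delta}F)\le 0$, so $t^{1+\delta}F(t,x)$ is non-increasing, and comparison with its value $d$ at $t=1$ gives $F(t,x)\le d\,t^{-(1+\delta)}$ in the relevant regime $t\ge 1$. The only non-routine steps are the identity $\mathcal{L}_V g=2\nabla^2 r^2$ (immediate once $V$ is recognised as a gradient) and the positive-semidefinite contraction argument used to pass from the quadratic-form inequality of Condition~\ref{cond:conv} to a scalar inequality on $F$.
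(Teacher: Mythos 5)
Your proof is correct and follows essentially the same route as the paper's: differentiate the contracted pullback metric in $t$, identify the derivative with a contraction against $\nabla^2r^2$ evaluated along the flow (the paper packages this via metric compatibility of $\nabla$ along the flow lines, computing $\nabla_{\partial_t\omega}\partial_i\omega^\bullet=-\tfrac{1}{2t}g^{\bullet l}(\partial_i\omega^k)(\nabla^2r^2)_{kl}$, rather than via $\mathcal{L}_{\grad r^2}\,g=2\nabla^2r^2$, but these are the same computation), apply Condition~\ref{cond:conv} to obtain $\partial_tF\le-\tfrac{1+\delta}{t}F$, and integrate from $F(1,\cdot)=d$. Your explicit restriction to $t\ge1$ is in fact the honest range of this Gronwall argument (for $t<1$ the monotonicity of $t^{1+\delta}F$ gives the reverse inequality); the paper's own proof yields no more, and $t\ge1$ is the only range used subsequently.
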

\proof
The left hand side of \eqref{9.12.19.1.58} is indeed independent of
coordinates. Fix $x\in M$ and choose coordinates such that $g_{ij}(x)=\delta_{ij}$. 
Consider the 
vector fields along $\{\omega(t,x)\}_{t\in\mathbb{R}}$ given by $\partial_i\omega^\bullet(t,x)$ and $\partial_j\omega^\bullet(t,x)$.
Since the Levi-Civita connection $\nabla$ is compatible with the metric, 
\begin{align*}
\tfrac{\partial }{\partial t}g_{kl}(\omega)(\partial_i\omega^k)(\partial_j\omega^l)
=\tfrac{\partial }{\partial t}\langle \partial_i\omega^\bullet,\partial_j\omega^\bullet\rangle
=\langle \nabla_{\partial_t\omega} \partial_i\omega^\bullet,\partial_j\omega^\bullet\rangle
+\langle \partial_i\omega^\bullet,\nabla_{\partial_t\omega} \partial_j\omega^\bullet\rangle.
\end{align*}
(The definition of $\nabla_{\partial_t\omega}$ is given below.)
From (\ref{09.12.17.22.23b}) it follows that
\begin{align*}
\nabla_{\partial_t \omega}\partial_i \omega^\bullet
&{}=\partial_t\partial_i \omega^\bullet+(\partial_t \omega^k)\Gamma^\bullet_{kl}\partial_i \omega^l\\
&{}=-\tfrac{1}{2t}(\partial_i \omega^k)\partial_k(g^{\bullet l}\partial_lr^2) 
-\tfrac{1}{2t}( g^{km}\partial_mr^2)\Gamma^\bullet_{kl}\partial_i \omega^l\\
&{}=-\tfrac{1}{2t}\nabla_{\partial_i \omega}(g^{\bullet l}\partial_l r^2)\\
&{}=-\tfrac{1}{2t}g^{\bullet l}(\partial_i \omega^k) (\nabla^2 r^2)_{kl}.
\end{align*}
Thus, taking summation in $i,j$, we obtain
\begin{align*}
\tfrac{\partial }{\partial t}g^{ij}(x)g_{kl}(\omega)(\partial_i\omega^k)(\partial_j\omega^l)
\le -\tfrac{1+\delta}{t}g^{ij}(x)g_{kl}(\omega)(\partial_i\omega^k)(\partial_j\omega^l).
\end{align*}
Noting $g^{ij}(x)g_{kl}(\omega)(\partial_i\omega^k)(\partial_j\omega^l)\bigr|_{t=1}=d$, 
we have (\ref{9.12.19.1.58}).
\endproof

\subsubsection{Propagation estimates}\label{sec:prop-estim} 
Set
\begin{align*}
K(t,x)=\tfrac{r^2}{2t}, &&h_0(x,\xi)=\tfrac{1}{2}g^{ij}\xi_i\xi_j,&&
w(t,x,\xi)=\tfrac{1}{2}g^{ij}(\xi_i-\partial_i K)(\xi_j-\partial_j K)
\end{align*}
for $t>0$, $x\in M$ and $(x,\xi)\in T^*M$.
\begin{lemma}
For any  Hamiltonian trajectory $(x(t),\xi(t))$ there exists $C>0$ such that 
\begin{align}
w(t,x(t),\xi(t))\le Ct^{-(1+\delta)}.
\label{9.12.19.1.57}
\end{align}
\end{lemma}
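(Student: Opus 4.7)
The plan is to convert the claim into a Riccati-type differential inequality along the trajectory and then integrate. The key identity is
\begin{equation*}
w(t,x,\xi)=h_0(x,\xi)-\tfrac{d}{dt}K(t,x(t)),
\end{equation*}
which follows by expanding $w=h_0(\xi-\d K)=h_0(\xi)-g^{ij}\xi_i\partial_jK+h_0(\d K)$, using the Hamilton--Jacobi equation \eqref{eq:9.12.19.054} to rewrite $h_0(\d K)=-\partial_tK$, and observing that $g^{ij}\xi_i\partial_jK=\dot x^i\partial_iK$ is precisely the spatial part of $\tfrac{d}{dt}K(t,x(t))$. Since $h_0$ is conserved along the Hamilton trajectory, we obtain
\begin{equation*}
\dot w=-\ddot K\quad\text{along }(x(t),\xi(t)).
\end{equation*}

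Next I would evaluate $\ddot K$ explicitly. Writing $R(t):=r^2(x(t))$ and $K=R/(2t)$, we have $\ddot K=\tfrac{\ddot R}{2t}-\tfrac{\dot R}{t^2}+\tfrac{R}{t^3}$. Along a geodesic, $\ddot R=(\nabla^2 r^2)(\xi^\sharp,\xi^\sharp)$. Setting $\eta=\d K=\tfrac{r}{t}\d r$ and $\pi=\xi-\eta$, and using the eikonal equation $|\d r|_g^2=1$ on $E$ (equivalently $|\d r^2|_g^2=4r^2$, whence $\mathrm{Hess}(r^2)(\operatorname{grad}r^2,Y)=2Y(r^2)$), one checks the two identities
\begin{equation*}
(\nabla^2r^2)(\eta^\sharp,\eta^\sharp)=2R/t^2,\qquad (\nabla^2r^2)(\eta^\sharp,\pi^\sharp)=2\langle\pi,\eta\rangle_g.
\end{equation*}
Expanding $\xi^\sharp=\pi^\sharp+\eta^\sharp$ in $(\nabla^2r^2)(\xi^\sharp,\xi^\sharp)$ and combining with $\dot R=2t\langle\xi,\eta\rangle_g=2t\langle\pi,\eta\rangle_g+2R/t$, the terms in $\langle\pi,\eta\rangle_g$ and those in $R/t^3$ cancel cleanly, leaving
\begin{equation*}
\ddot K=\frac{(\nabla^2 r^2)(\pi^\sharp,\pi^\sharp)}{2t}.
\end{equation*}

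Now Condition \ref{cond:conv} (valid with $r_0=0$ in the present setting) yields $(\nabla^2 r^2)(\pi^\sharp,\pi^\sharp)\ge(1+\delta)|\pi|_g^2=2(1+\delta)w$, so
\begin{equation*}
\dot w=-\ddot K\le -\frac{1+\delta}{t}\,w.
\end{equation*}
Equivalently, $\tfrac{d}{dt}\bigl(t^{1+\delta}w\bigr)\le 0$, and integrating from any fixed positive time gives $w(t,x(t),\xi(t))\le Ct^{-(1+\delta)}$.

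The only nontrivial step is the cancellation in the computation of $\ddot K$. It is not deep, but it is the place where the specific structure of the comparison dynamics is used: the combination of the Hamilton--Jacobi equation for $K$ and the eikonal identity $|\d r|_g=1$ is exactly what makes the spurious $\langle\pi,\eta\rangle_g$ and $R/t^3$ contributions vanish, leaving a pure Hessian-of-$r^2$ term to which the Mourre-type bound \eqref{eq:10.2.8.3.55} can be applied.
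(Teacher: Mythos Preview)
Your proof is correct and arrives at exactly the same differential inequality
\[
\dot w=-\tfrac{1}{2t}(\nabla^2 r^2)(\pi^\sharp,\pi^\sharp)\le -\tfrac{1+\delta}{t}\,w
\]
as the paper, but by a different and somewhat more conceptual route. The paper computes $\dot w=\partial_t w+\{h_0,w-h_0\}$ directly in local coordinates, repeatedly invoking the compatibility condition $(\nabla g)^{ij}_k=0$ to reorganize the partial derivatives of $g^{ij}$ into covariant Hessians; after two pages of bookkeeping it obtains $\dot w=-(\xi-\d K)^\sharp\cdot(\nabla^2K)\cdot(\xi-\d K)^\sharp$. Your approach bypasses that computation entirely: the identity $w=h_0-\tfrac{d}{dt}K$ (from the Hamilton--Jacobi equation) plus conservation of $h_0$ gives $\dot w=-\ddot K$ in one line, and then the second derivative of a function along a geodesic is just its Hessian, so the only work left is the algebraic cancellation you spell out. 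The eikonal identity is used in the same place in both arguments---in the paper it enters through (\ref{09.11.29.3.18}), in yours through $\mathrm{Hess}(r^2)(\operatorname{grad}r^2,\cdot)=2\,d(r^2)$---but your version makes it clearer why the cross terms vanish. The payoff is a coordinate-free derivation; the paper's local-coordinate computation has the mild advantage of parallelling the quantum commutator calculation in Corollary~\ref{cor:09.11.29.16.7}.
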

\proof
 We compute
\begin{align*}
\tfrac{\mathrm{d}}{\mathrm{d}t} w
=\tfrac{\partial}{\partial t}w +\{h_0,w-h_0\}
=\tfrac{\partial}{\partial t}w +\tfrac{\partial h}{\partial \xi} \tfrac{\partial}{\partial x}(w-h_0)
-\tfrac{\partial h}{\partial x}\tfrac{\partial}{\partial \xi} (w-h_0).
\end{align*}
By (\ref{eq:9.12.19.054}) 
\begin{align*}
\tfrac{\partial}{\partial t}w
=\frac{1}{2}g^{ij}(\partial_i g^{kl}(\partial_kK)(\partial_lK))(\xi_j-\partial_jK).
\end{align*}
Noting that by the compatibility condition $(\nabla g)^{ij}_k=0$, we have
\begin{align}
0=\partial_kg^{ij}+\Gamma^i_{kl}g^{lj}+\Gamma^j_{kl}g^{il},
\label{09.11.28.20.51}
\end{align}
so that 
\begin{align}
\partial_i g^{kl}(\partial_k K)(\partial_l K)=2(\nabla ^2 K)_{ik}g^{kl}(\partial_l K).
\label{09.11.29.3.18}
\end{align}
Thus
\begin{align*}
\tfrac{\partial}{\partial t}w
=(\partial_l K)g^{lk}(\nabla^2K)_{ki}g^{ij}(\xi_j-\partial_j K).
\end{align*}
On the other hand, by (\ref{09.11.28.20.51}) and (\ref{09.11.29.3.18}) again we have 
\begin{align*}
&{}\{h_0,w-h_0\}\\
&{}=
g^{ij}\xi_j
\Bigl[-(\partial_ig^{kl})\xi_k(\partial_l K)
-g^{kl}\xi_k(\partial_i\partial_l K)
+(\nabla ^2 K)_{ik}g^{kl}(\partial_l K)\Bigr]
+\tfrac{1}{2}(\partial_k g^{ij})\xi_i\xi_jg^{kl}(\partial_l K)\\
&{}=g^{ij}\xi_j(\Gamma^k_{im}g^{ml}+\Gamma^l_{im}g^{km})\xi_k(\partial_l K)
-g^{ij}\xi_jg^{kl}\xi_k(\partial_i\partial_l K)
+g^{ij}\xi_j(\nabla ^2 K)_{ik}g^{kl}(\partial_l K)\\
&\phantom{{}={}}{}
-
\tfrac{1}{2}(\Gamma^i_{km}g^{mj}+\Gamma^j_{km}g^{im})\xi_i\xi_j
g^{kl}(\partial_l K)\\
&{}
=g^{ij}\xi_j\Gamma^l_{im}g^{km}\xi_k(\partial_l K)
-g^{ij}\xi_jg^{kl}\xi_k(\partial_i\partial_l K)
+g^{ij}\xi_j(\nabla ^2 K)_{ik}g^{kl}(\partial_l K)\\
&{}
=-
\xi_jg^{ji}
(\nabla^2 K)_{ik}g^{kl}(\xi_l-\partial_l K).
\end{align*}
Hence, summing up and using Condition~\ref{cond:conv}, we obtain
\begin{align*}
\tfrac{\mathrm{d}}{\mathrm{d}t} w
=-(\xi_l-\partial_l K)g^{lk}(\nabla^2K)_{ki}g^{ij}(\xi_j-\partial_j K)
\le -\tfrac{1+\delta}{t}w.
\end{align*}
\endproof

\begin{proposition} For any  geodesic $x(t)$ there exists the limit
\begin{equation}\label{aswww}
\omega_\infty=\lim_{t\to\infty}\omega(t,x(t)).
\end{equation}
\end{proposition}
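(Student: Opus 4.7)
The plan is to show that the curve $t\mapsto\omega(t,x(t))$ has finite length on $[1,\infty)$, so that it is Cauchy and converges by completeness of $M$.  This reduces to bounding $|\tfrac{\d}{\d t}\omega(t,x(t))|_{g(\omega)}$ by an integrable function of $t$.

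Decompose the velocity along and perpendicular to $\grad r$ by $\dot x = p_r\,\grad r(x) + v_\perp$, where $p_r=\dot x\cdot \d r$ and $v_\perp$ is $g(x)$-orthogonal to $\grad r(x)$.  Since $\d K(t,x)=\tfrac{r(x)}{t}\,\d r(x)$ is purely radial, the propagation estimate \eqref{9.12.19.1.57} decomposes orthogonally as
\begin{equation*}
(p_r-r(x)/t)^2+|v_\perp|_{g(x)}^2 = |\xi-\d K|_g^2 = 2w\le C t^{-(1+\delta)}.
\end{equation*}

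The geometric cancellation rests on the identity $\d\omega(\grad r(x)) = t^{-1}\grad r(\omega)$, which follows because $\omega(t,\cdot)$ preserves each radial geodesic from $o$ and rescales its arclength by $1/t$.  Combined with $\partial_t\omega = -t^{-2}r(x)\grad r(\omega)$, the chain rule yields
\begin{equation*}
\tfrac{\d}{\d t}\omega(t,x(t)) = t^{-1}\bigl(p_r-r(x)/t\bigr)\,\grad r(\omega) + \d\omega(v_\perp).
\end{equation*}
The first term has $g(\omega)$-norm at most $t^{-1}\sqrt{2w}\le Ct^{-(3+\delta)/2}$.  For the second, \eqref{9.12.19.1.58} is the trace (relative to $g(x)$) of the positive-semidefinite form $Y\mapsto|\d\omega(Y)|_{g(\omega)}^2$, so each eigenvalue is bounded by $dt^{-(1+\delta)}$, giving $|\d\omega(v_\perp)|_{g(\omega)}\le Ct^{-(1+\delta)}$.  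Both bounds are integrable for $\delta>0$, hence $\dist(\omega(t,x(t)),\omega(s,x(s)))$ is controlled by the tail of the integral and tends to $0$, producing $\omega_\infty$.

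The main obstacle is the cancellation step: bounding $|\d\omega(\dot x)|_{g(\omega)}$ directly via \eqref{9.12.19.1.58} yields only $O(t^{-(1+\delta)/2})$, which fails to be integrable when $\delta\le 1$.  Isolating the radial component of $\dot x$ and observing its exact cancellation against $\partial_t\omega$, up to the small error $p_r - r(x)/t$, is what closes the argument.
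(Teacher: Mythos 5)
Your proof is correct and takes essentially the same route as the paper's: there the group law of the flow yields $\partial_t\omega^i=-g^{kl}(x)(\partial_k\omega^i)(\partial_lK)$, so that $\tfrac{\d}{\d t}\omega(t,x(t))=\d\omega(\dot x-\grad K)$ is bounded in a single stroke by combining \eqref{9.12.19.1.58} with \eqref{9.12.19.1.57} to get $|\dot\omega|_{g(\omega)}\le Ct^{-(1+\delta)}$, and then integrating. Your radial/transverse decomposition implements exactly the same cancellation --- your identity $\d\omega(\grad r(x))=t^{-1}\grad r(\omega)$ is precisely the radial content of the paper's identity --- so the two arguments coincide in substance.
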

\proof
\begin{comment}
  Multiply (\ref{09.12.17.22.23b}) by $\partial_i\omega^k(\omega)$ and take summation in $i$,
and then
\begin{align*}
\partial_t\omega^k(\omega)=-(\partial_i\omega^k)(\omega)g^{ij}(\omega)(\partial_jK)(\omega),
\end{align*}
or 
\begin{align*}
\partial_t\omega^k=-(\partial_i\omega^k)g^{ij}(\partial_jK).
\end{align*}
Note that
\begin{equation*}
  \omega(t,\omega(s,x))=\omega(ts,x),
\end{equation*} and hence, cf. \eqref{09.12.17.22.23b},  that
\begin{equation}
  \partial_t \omega^i(t,x)=-g^{kl}(x)\partial_k
  \omega^i(t,x)\partial_l K(t,x).
\label{09.12.19.2.3}
\end{equation}
\end{comment} 
Due to  the flow equation  \eqref{09.12.17.22.23b} we have  the group property
$\omega(t,\omega(s,x))=\omega(ts,x)$. 
Differentiate 
\begin{equation*}
  \omega(t,\omega(s,x))=\omega(s,\omega(t,x))
\end{equation*} in $t$, and use then \eqref{09.12.17.22.23b} to obtain
\begin{equation*}
  \partial_t \omega^i(t,\omega(s,x))=-(\partial_k
  \omega^i)(t,\omega(s,x))g^{kl}(\omega(s,x))(\partial_l K)(t,\omega(s,x)).
\end{equation*}
Putting $s=1$, we obtain
\begin{equation}
  \partial_t \omega^i(t,x)=-g^{kl}(x)(\partial_k
  \omega^i)(t,x)(\partial_l K)(t,x).
\label{09.12.19.2.3}
\end{equation}

By applying first (\ref{09.12.19.2.3}), and then  (\ref{9.12.19.1.58}) and (\ref{9.12.19.1.57}), we obtain
\begin{align*}
g_{ij}\dot{\omega}^i\dot{\omega}^j
&{}=g_{ij}[\partial_t \omega^i+(\partial_\xi h_0)(\partial_x \omega^i)][\partial_t \omega^j+(\partial_\xi h_0)(\partial_x \omega^j)]\\
&{}=g_{ij}g^{kl}(\partial_k\omega^i)(\xi_l-\partial_lK)g^{mn}(\partial_m\omega^j)(\xi_n-\partial_nK)\\
&{}\le Ct^{-2(1+\delta)},
\end{align*}
and the assertion follows.
\endproof

\subsubsection{Mourre estimate}\label{sec:10.12.22.12.20}
We also note that the classical Mourre estimate holds.
Since the geodesics equation is given by 
$\ddot{x}^i+\Gamma^{i}_{jk}\dot{x}^j\dot{x}^k=0$,
the following result is true.
\begin{lemma}
For any geodesic $x(t)$ the following inequality holds:
\begin{align*}
\tfrac{\mathrm{d}^2}{\mathrm{d}t^2}r^2
\ge 2(1+\delta)h_0.
\end{align*}
\end{lemma}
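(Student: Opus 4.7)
The plan is to identify $\tfrac{\d^2}{\d t^2} r^2(x(t))$ with the Hessian of $r^2$ evaluated at the tangent vector of the geodesic, and then invoke Condition \ref{cond:conv} directly. The key computational identity, valid for any smooth function $f$ on $M$ and any geodesic $x(t)$, is
\begin{equation*}
\tfrac{\d^2}{\d t^2}f(x(t))=(\nabla^2 f)(\dot x,\dot x)=(\nabla^2 f)_{ij}\dot x^i\dot x^j.
\end{equation*}
This identity follows from the local-coordinate expression (\ref{eq:10.12.20.14.56}) together with the geodesic equation stated just before the lemma, since
\begin{equation*}
\tfrac{\d^2}{\d t^2}f(x(t))=(\partial_i\partial_j f)\dot x^i\dot x^j+(\partial_k f)\ddot x^k=(\partial_i\partial_j f)\dot x^i\dot x^j-(\partial_k f)\Gamma^k_{ij}\dot x^i\dot x^j,
\end{equation*}
which is exactly $(\nabla^2 f)_{ij}\dot x^i\dot x^j$.

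Specializing this to $f=r^2$, one obtains
\begin{equation*}
\tfrac{\d^2}{\d t^2}r^2(x(t))=(\nabla^2 r^2)_{ij}\dot x^i\dot x^j.
\end{equation*}
Now I would invoke Condition \ref{cond:conv} (with $r_0=0$, as assumed in this subsection) as a pointwise inequality of quadratic forms on each fiber of $TM$, applied to the vector $\dot x(t)$, to conclude
\begin{equation*}
(\nabla^2 r^2)_{ij}\dot x^i\dot x^j\ge (1+\delta)\,g_{ij}\dot x^i\dot x^j.
\end{equation*}

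Finally, I would interpret $h_0$ along the geodesic: with momentum $\xi_i=g_{ij}\dot x^j$ we have $h_0=\tfrac12 g^{ij}\xi_i\xi_j=\tfrac12 g_{ij}\dot x^i\dot x^j$, so that $g_{ij}\dot x^i\dot x^j=2h_0$ (which is conserved along the flow, hence this is just the constant energy). Combining the previous two displays yields $\tfrac{\d^2}{\d t^2}r^2\ge 2(1+\delta)h_0$, as claimed. No serious obstacle is expected; the only thing to be careful about is matching the sign conventions and symmetrization in the Christoffel symbols when applying the geodesic equation, which is straightforward since $\Gamma^k_{ij}$ is symmetric in $i,j$.
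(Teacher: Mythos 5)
Your proof is correct and is exactly the argument the paper intends: the paper states the lemma without proof, remarking only that it follows from the geodesic equation $\ddot{x}^i+\Gamma^i_{jk}\dot{x}^j\dot{x}^k=0$, and your computation $\tfrac{\d^2}{\d t^2}r^2=(\nabla^2 r^2)_{ij}\dot x^i\dot x^j\ge(1+\delta)g_{ij}\dot x^i\dot x^j=2(1+\delta)h_0$ via (\ref{eq:10.12.20.14.56}) and Condition \ref{cond:conv} (with $r_0=0$) is precisely the intended one-line verification.
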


\section{Reduction of the proof of Theorem~\ref{thm:10.5.31.23.23}}
\subsection{Reduction to existence of localization operators}
Since we do not have enough regularities for the derivatives of $g$ and $\nabla^2r^2$,
the Cook-Kuroda method does not apply even for the existence part of Theorem~\ref{thm:10.5.31.23.23}.
We shall prove Theorem \ref{thm:10.5.31.23.23} in a symmetric manner for
the existence and the completeness parts.
In this subsection we reduce the proof to the construction of
$Q_{\mathrm{f}}(t)$ and $Q_{\mathrm{p}}(t)$ which are 
time-dependent localization operators for the {\it free} and the  {\it
  perturbed} dynamics, respectively.

We denote the time-dependent generator of $U(t)$ by $G(t)$, i.e.,
\begin{align*}
\tfrac{\mathrm{d}}{\mathrm{d}t}U(t)=-\mathrm{i}G(t)U(t).
\end{align*} It will not be important to known the domain of the
generator but rather a convenient subspace. For that  we observe that
$U(t)$ and $U(t)^{-1}$ preserve the subspace $C_{\rm c}^\infty(M)\subseteq
\vH$,  and hence $C_{\rm c}^\infty(M)\subseteq \vD(G(t))$, and that the
propagator acting on this subspace is explicitly given as follows
(recall $H_0=-\frac12 \triangle$):
\begin{subequations}
\begin{align}
G(t)&=-\partial_tK+\mathrm{e}^{iK}\tfrac{1}{2t}A\mathrm{e}^{-iK}=H_0-W(t)-\alpha(t)
;\label{10.5.14.15.19}\\
W(t)&=\tfrac{1}{2}(p_i-\partial_i K)^*g^{ij}(p_j-\partial_j K)=\e ^{\i K}H_0\e ^{-\i K},\\
\alpha(t)&=\partial_tK+\tfrac{1}{2}g^{ij}(\partial_iK)(\partial_jK)
\end{align}\end{subequations}
Note by (\ref{eq:9.12.19.054}) that $\alpha(t)\equiv 0$ on $E$, and thus 
\begin{align}
\forall n\in\N :|\alpha|=O(t^{-2}\langle r\rangle^{-n})\mon M.\label{eq:11.5.5.13.48}
\end{align}
  
\begin{comment}
\begin{align}
G(t)=H_0-W(t),&&
W(t)=\tfrac{1}{2}(p_i-\partial_i K)^*g^{ij}(p_j-\partial_j K)=\e ^{\i K}H_0\e ^{-\i K}.
\label{10.5.14.15.19}
\end{align}
\end{comment}
For all practical purposes (in particular for stating
Lemma \ref{lemma:10.5.31.23.24} below) we can consider
\eqref{10.5.14.15.19} as a definition of a symmetric  operator $ G(t)$ on the
domain $\vD(H_0)\cap\vD\parb{H_0\e ^{-\i
    K}}=\vD(H_0)\cap\vD\parb{W(t)}$.

As shown at the end of the subsection Theorem~\ref{thm:10.5.31.23.23}
is a consequence of the following two lemmas:
\begin{lemma}\label{lemma:10.5.31.23.24}
Let $0<\mu<M<\infty$.
Then there exists
a weakly differentiable $Q_{\mathrm{f}}\colon [1,\infty)\to{\mathcal B}_{\rm sa}({\mathcal H})$
such that $\|Q_{\mathrm{f}}(t)\|_{{\mathcal B}({\mathcal H})}\le 1$ and for some $\delta'>0$
\begin{enumerate}[i)]
\item\label{item:10.5.31.23.25}
\begin{align*}
\slim_{t\to\infty}(I-Q_{\mathrm{f}}(t))U(t)\chi_{[\mu,M]}(r^2)P_{\mathrm{aux}}=0,
\end{align*}
where $\chi_{[\mu,M]}$ is the characteristic function for $[\mu,M]$ and 
$\chi_{[\mu,M]}(r^2)$ denotes the multiplier,
\item\label{item:10.5.31.23.26} The operators 
$G(t)Q_{\mathrm{f}}(t)$ and $Q_{\mathrm{f}}(t)G(t)$ are bounded, and
the Heisenberg derivative of $Q_{\mathrm{f}}(t)$  with respect to $G(t)$
is non-negative 
modulo $O_{{\mathcal B}({\mathcal H})}(t^{-1-\delta'})$:
\begin{align*}
\exists R(t)=O_{{\mathcal B}({\mathcal H})}(t^{-1-\delta'})\quad \mbox{s.t.}\quad  
\mathrm{D}_{G(t)}Q_{\mathrm{f}}(t)
=\tfrac{\mathrm{d}}{\mathrm{d}t} Q_{\mathrm{f}}(t)+\mathrm{i}[G(t),Q_{\mathrm{f}}(t)]\ge R(t),
\end{align*}
\item\label{item:10.5.31.23.27} The operators
$(W(t)+\alpha(t)+V)Q_{\mathrm{f}}(t)$ and $Q_{\mathrm{f}}(t)(W(t)+\alpha(t)+V)$ are 
$O_{{\mathcal B}({\mathcal H})}(t^{-1-\delta'})$.
\end{enumerate}
\end{lemma}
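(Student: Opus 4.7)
The plan is to take
\begin{align*}
Q_{\mathrm{f}}(t)=\chi(H_0)\,f(r^2/t^2)\,\chi(H_0),
\end{align*}
where $f\in C^\infty_{\mathrm{c}}((\mu',M'))$ satisfies $0\le f\le 1$, $f\equiv 1$ on $[\mu,M]$ (for some $0<\mu'<\mu<M<M'$), and $\chi\in C^\infty_{\mathrm{c}}(\R)$ satisfies $0\le\chi\le 1$, $\chi\equiv 1$ on a neighbourhood of the classical asymptotic energy interval $[\mu/2,M/2]$. The motivation is that a free asymptotic wave packet $U(t)\chi_{[\mu,M]}(r^2)P_{\mathrm{aux}}\psi_0$ is concentrated in the spatial shell $r^2/t^2\in[\mu,M]$ and has $H_0$-energy asymptotically equal to $r^2/(2t^2)\in[\mu/2,M/2]$. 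The cornerstone identity (valid on $\vH_{\mathrm{aux}}$ and following from $r\circ\omega(t,\cdot)=r/t$ on $E$) is the intertwining
\begin{align*}
\e^{-\i\tfrac{\ln t}{2}A}\,\phi(r^2)\,\e^{\i\tfrac{\ln t}{2}A}=\phi(r^2/t^2),
\end{align*}
whose $t$-derivative gives $\partial_tf(r^2/t^2)+\tfrac{\i}{2t}[A,f(r^2/t^2)]=0$. Combining with $G(t)=A/(2t)-r^2/(2t^2)$ on $E$ (eikonal equation) and $[r^2,f(r^2/t^2)]=0$ yields $\D_{G(t)}f(r^2/t^2)=0$ on $E$; for $t$ large the support of $f(r^2/t^2)$ lies in $E$, so this vanishing is identical.

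For (i), the intertwining together with $f\equiv 1$ on $[\mu,M]$ gives $f(r^2/t^2)U(t)\chi_{[\mu,M]}(r^2)P_{\mathrm{aux}}=U(t)\chi_{[\mu,M]}(r^2)P_{\mathrm{aux}}$ exactly for $t$ large. The spectral factor $\chi(H_0)$ acts asymptotically as the identity on this state: using $W(t)U(t)=\e^{\i K}H_0D(t)$ with $D(t)=\e^{-\i\tfrac{\ln t}{2}A}$ and the dilation-scaling estimate $\|H_0D(t)\psi\|=o(1)$ for $\psi\in C^\infty_{\mathrm{c}}(M)$, the $H_0$ spectral measure of $U(t)\psi$ concentrates in $[\mu/2,M/2]$, so $(I-\chi(H_0))U(t)\chi_{[\mu,M]}(r^2)P_{\mathrm{aux}}\to 0$ strongly. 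For (ii), since $\chi(H_0)$ is $t$-independent and $\D_{G(t)}f(r^2/t^2)=0$, the Leibniz rule gives
\begin{align*}
\D_{G(t)}Q_{\mathrm{f}}(t)=\i[G(t),\chi(H_0)]\,f(r^2/t^2)\,\chi(H_0)+\chi(H_0)\,f(r^2/t^2)\,\i[G(t),\chi(H_0)],
\end{align*}
and I estimate $[G(t),\chi(H_0)]=-[W(t)+\alpha(t),\chi(H_0)]$ via the Helffer--Sj\"ostrand formula, reducing matters to $[W(t),H_0]$ and $[\alpha(t),H_0]$; these are controlled by Condition~\ref{cond:10.9.2.11.13} on $\d\triangle r^2$ and by \eqref{eq:11.5.5.13.48}, yielding $O(t^{-1-\delta'})$. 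Boundedness of $G(t)Q_{\mathrm{f}}(t)$ reduces to boundedness of $H_0\chi(H_0)$ and the gauge-equivalence bound for $W(t)\chi(H_0)$.

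For (iii), since $r\sim t$ on $\supp f(r^2/t^2)$, one has $\|VQ_{\mathrm{f}}(t)\|=O(t^{-1-\eta})$ by Condition~\ref{cond:10.6.1.16.24} and $\|\alpha(t)Q_{\mathrm{f}}(t)\|=O(t^{-\infty})$ by \eqref{eq:11.5.5.13.48}. The delicate piece $W(t)Q_{\mathrm{f}}(t)$ is handled by first commuting $\chi(H_0)$ past $W(t)$ (Helffer--Sj\"ostrand, $O(t^{-1-\delta'})$ error), then using $W(t)\e^{\i K}=\e^{\i K}H_0$ to rewrite $W(t)f(r^2/t^2)$ as $\e^{\i K}H_0f(r^2/t^2)\e^{-\i K}$, and finally invoking the dilation rescaling $D(t)^{-1}H_0f(r^2)D(t)=O(t^{-2})$ plus curvature corrections of order $O(t^{-2-\delta'})$ to extract the needed decay. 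The hardest part will be this final dilation-scaling estimate in the non-Euclidean setting: it requires Condition~\ref{cond:conv} (to control the distortion of $g$ under the geodesic dilation) together with Condition~\ref{cond:10.9.2.11.13} (to absorb the curvature corrections from $\d\triangle r^2$ appearing in the commutator $[\triangle,D(t)]$), and it is precisely there that the heuristic ``short-range'' nature of $W(t)$ noted in Remark~\ref{remarks:intro}\ref{item:3} becomes the quantitative operator-norm bound needed for Cook's method.
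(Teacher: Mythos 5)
Your choice $Q_{\mathrm{f}}(t)=\chi(H_0)f(r^2/t^2)\chi(H_0)$ does not work, and the failure is structural, not technical. Properties \ref{item:10.5.31.23.26} and \ref{item:10.5.31.23.27} are \emph{operator-norm} statements on all of $\vH$, and an energy cutoff $\chi(H_0)$ together with a spatial cutoff $f(r^2/t^2)$ does not constrain the \emph{direction} of the momentum relative to $\nabla K$. Concretely, take an incoming wave packet $u_t$ at energy $E\in[\mu/2,M/2]$ localized near $r\approx t\sqrt{2E}$ with radial momentum $\approx-\sqrt{2E}$; then $\chi(H_0)u_t\approx u_t$ and $f(r^2/t^2)u_t\approx u_t$, while $(p-\nabla K)u_t$ has magnitude $\approx 2\sqrt{2E}$, so $\|W(t)Q_{\mathrm{f}}(t)\|$ stays bounded below uniformly in $t$. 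This breaks \ref{item:10.5.31.23.27}. Relatedly, the "dilation rescaling" you invoke, $D(t)^{-1}H_0f(r^2)D(t)=O(t^{-2})$, is false in operator norm ($H_0f(r^2)$ is unbounded); it only holds on fixed smooth vectors, i.e.\ it is exactly the Cook-method estimate the paper explains is unavailable here.

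Property \ref{item:10.5.31.23.26} fails for the same reason. You correctly get $\D_{G(t)}f(r^2/t^2)=0$ on $E$ from the eikonal equation, but the remaining term $\i[G(t),\chi(H_0)]=-\i[W(t)+\alpha(t),\chi(H_0)]$ contains $\tfrac{1}{2t}\i[A,\chi(H_0)]$, which by Lemma \ref{prop:09.12.8.11.10} is of size $O(t^{-1})$ (in the Euclidean model it equals $-\tfrac{2}{t}H_0\chi'(H_0)$) and has \emph{no definite sign}, since $\chi'$ changes sign on the two edges of the energy window: $H_0$ is not monotone along the comparison dynamics. So $\D_{G(t)}Q_{\mathrm{f}}$ is a sign-indefinite $O(t^{-1})$, not $\geq O(t^{-1-\delta'})$. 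The missing idea is to localize directly in the variable that \emph{is} monotone and decaying: the paper takes $Q_{\mathrm{f}}=(Q_2Q_1)^*Q_2Q_1$ with $Q_1=\chi_{\mu_1,\mu,M,M_1}(r^2/t^2)$ and, crucially, $Q_2=(I+t^{1+\delta_1}W(t))^{-1/2}$, a phase-space cutoff built from $W(t)$ itself. Then $W(t)Q_2^2=O(t^{-1-\delta_1})$ is immediate from the spectral theorem for $W(t)$, giving \ref{item:10.5.31.23.27}; and \ref{item:10.5.31.23.26} and \ref{item:10.5.31.23.25} follow from the Heisenberg-derivative inequality $\mathrm{D}_{H_0}W\lesssim-\tfrac{1+\delta}{t}W+\text{(errors)}$ of Corollary \ref{cor:09.11.29.16.7}, which encodes Condition \ref{cond:conv}. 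Your intuition that Conditions \ref{cond:conv} and \ref{cond:10.9.2.11.13} must enter at exactly this point is right, but they enter through the monotonicity of $W$ under $\mathrm{D}_{G}$, not through a rescaling of $H_0$.
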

\begin{lemma}\label{lemma:10.3.23.17.20}
Let $E\in (0,\infty)$.
If $e>0$ is sufficiently small, then there exists 
a weakly differentiable $Q_{\mathrm{p}}\colon [1,\infty)\to{\mathcal B}_{\rm sa}({\mathcal H})$
such that $\|Q_{\mathrm{p}}(t)\|_{{\mathcal B}({\mathcal H})}\le 1$ and for some $\delta'>0$
\begin{enumerate}[i)]
\item\label{item:10.3.23.17.21}
\begin{align*}
\slim_{t\to\infty}(I-Q_{\mathrm{p}}(t))\e^{-\mathrm{i}tH}\chi_{[E-e,E+e]}(H)=0,
\end{align*}
\item\label{item:10.3.23.17.22} The operators
$HQ_{\mathrm{p}}(t)$ and $Q_{\mathrm{p}}(t)H$ are bounded, and 
\begin{align*}
\exists R(t)=O_{{\mathcal B}({\mathcal H})}(t^{-1-\delta'})\quad \mbox{s.t.}\quad  
\mathrm{D}_HQ_{\mathrm{p}}(t)=\tfrac{\mathrm{d}}{\mathrm{d}t} Q_{\mathrm{p}}(t)+\mathrm{i}[H,Q_{\mathrm{p}}(t)]\ge R(t),
\end{align*}
\item\label{item:10.3.23.17.23} The operators
$(W(t)+\alpha(t)+V)Q_{\mathrm{p}}(t)$ and
$Q_{\mathrm{p}}(t)(W(t)+\alpha(t)+V)$  are 
$O_{{\mathcal B}({\mathcal H})}(t^{-1-\delta'})$.
\end{enumerate}
\end{lemma}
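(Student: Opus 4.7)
The plan is to construct $Q_{\mathrm{p}}(t)$ as an energy cutoff composed with a cutoff of the rescaled dilation generator $B(t):=A/(2t)$. Choose $\phi_0\in C_{\mathrm{c}}^\infty((0,\infty))$ equal to $1$ on $[E-e,E+e]$ and supported in $[E-2e,E+2e]$, and a non-decreasing $F\in C^\infty(\mathbb{R})$ with $F\equiv 0$ on $(-\infty,2E-2e_1]$ and $F\equiv 1$ on $[2E-e_1,\infty)$, where $e,e_1>0$ are to be chosen small. Set
\[
Q_{\mathrm{p}}(t):=\phi_0(H)F(B(t))\phi_0(H).
\]
The guiding heuristic is that states in $\Ran\chi_{[E-e,E+e]}(H)$ propagate asymptotically along the outgoing geodesic flow with $r\sim \sqrt{2E}\,t$, along which $B(t)\sim r\dot r/t\sim 2E$; the cutoff $F(B(t))$ therefore captures exactly the outgoing scattering component, excluding any incoming or bound behaviour.

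To prove property (ii), I would first establish a Mourre estimate of the form
\[
\phi_0(H)\,\i[H,A]\,\phi_0(H)\ge 2(1+\delta)\,\phi_0(H)H\phi_0(H)+K+R,
\]
with $K$ compact and $R=O(\langle r\rangle^{-\kappa})$. This is the quantum analog of the classical Mourre estimate in Subsubsection~\ref{sec:10.12.22.12.20} and follows from Condition~\ref{cond:conv} via \eqref{eq:17}, with the $V$-commutator handled by Condition~\ref{cond:10.6.1.16.24}. The Heisenberg derivative of $B(t)$ reads
\[
\mathrm{D}_H B(t)=-\frac{A}{2t^2}+\frac{\i[H,A]}{2t}=\frac{1}{t}\Bigl(\frac{\i[H,A]}{2}-B(t)\Bigr),
\]
so after sandwiching with $\phi_0(H)\sqrt{F'(B(t))}$ the lower symbol of the bracket becomes $(1+\delta)(E-2e)-(2E-e_1)\ge c>0$ for $e,e_1$ small. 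Reorganizing $\mathrm{D}_H Q_{\mathrm{p}}(t)$ via the Helffer--Sj\"ostrand representation of $F$, and controlling the commutator $[\phi_0(H),F(B(t))]=O(t^{-1})$ using boundedness of $\i[H,A]$ on $\Ran\phi_0(H)$, yields $\mathrm{D}_H Q_{\mathrm{p}}(t)\ge R(t)$ with $R(t)=O_{\mathcal{B}(\mathcal{H})}(t^{-1-\delta'})$.

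For (iii), the crucial fact is that the localization by $Q_{\mathrm{p}}(t)$ forces $r\ge c't$ asymptotically, so $VQ_{\mathrm{p}}(t)=O(t^{-1-\eta})$ by Condition~\ref{cond:10.6.1.16.24} and $\alpha(t)Q_{\mathrm{p}}(t)=O(t^{-\infty})$ by \eqref{eq:11.5.5.13.48}. For $W(t)$ one uses the identity $W(t)=H_0-G(t)-\alpha(t)$ from \eqref{10.5.14.15.19} to rewrite $W(t)Q_{\mathrm{p}}(t)=(H_0-G(t)-\alpha(t))Q_{\mathrm{p}}(t)$, and then commutes the cutoffs to show that the difference $G(t)-H$ acts as a short-range perturbation on the localized state, using Condition~\ref{cond:10.9.2.11.13} to control the second-order symbolic manipulations. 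Property (i) follows from a standard propagation argument: setting $\psi_t=\e^{-\i t H}\chi_{[E-e,E+e]}(H)\psi$, the estimate of (ii) gives that $\|(I-Q_{\mathrm{p}}(t))^{1/2}\psi_t\|^2$ is monotone up to an integrable error and hence converges; the limit vanishes by pairing with outgoing trial states constructed from Lemma~\ref{lemma:10.5.31.23.24}.

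The main obstacle will be the $W(t)$ bound in property (iii): since $W(t)$ is an unbounded second-order operator, its decay on $\Ran Q_{\mathrm{p}}(t)$ cannot be read off from the defining cutoffs. One must handle products like $W(t)F(B(t))$ and commutators $[W(t),F(B(t))]$ through the Helffer--Sj\"ostrand representation of $F$ and careful symbolic calculus, which is precisely where the second-derivative control \eqref{eq:10.9.2.23.19} of Condition~\ref{cond:10.9.2.11.13} is indispensable, in contrast to the classical analysis of Subsection~\ref{Classical mechanics considerations} which relies only on Condition~\ref{cond:conv}.
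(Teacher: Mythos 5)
Your construction $Q_{\mathrm{p}}(t)=\phi_0(H)F(A/2t)\phi_0(H)$ differs fundamentally from the paper's choice
$Q_{\mathrm{p}}=(Q_6Q_5Q_4)^*Q_6Q_5Q_4$ with $Q_4,Q_5$ cutoffs in $r^2/t^2$ and
$Q_6=(I+t^{1+\delta_1}W(t))^{-1/2}$, and the differences are where the gaps lie. The decisive one is property
\ref{item:10.3.23.17.23} for $W(t)$: in the paper the pointwise operator bound
$\|W(t)Q_{\mathrm{p}}(t)\|=O(t^{-1-\delta'})$ is obtained essentially for free because the factor $Q_6^2$ is
\emph{built from $W$ itself}, so that $t^{1+\delta_1}W(t)Q_6(t)^2$ is bounded by construction. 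Your cutoff
$F(A/2t)$ only fixes the symbol of $A/t$ near $2E$, which at best makes the \emph{symbol} of
$W=H_0-A/(2t)+\tfrac12|\nabla K|^2$ vanish on the localization region; it gives no rate, and the
propagation estimates one can actually prove here (the Graf/Sigal--Soffer bound of
Lemma~\ref{lemma:key_prel-local-pert}) are only time-integrated statements
$\int\|W^{1/2}\chi(r/t)\e^{-\i tH}u\|^2t^{-1}\d t<\infty$, not operator-norm decay at fixed $t$. Your fallback,
rewriting $W=H_0-G(t)-\alpha(t)$, is circular: that identity is the definition of $G(t)$ and contains no new
information. Without a factor playing the role of $Q_6$ the statement \ref{item:10.3.23.17.23} simply fails for
your $Q_{\mathrm{p}}$.

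Two further steps would also break down. First, you repeatedly invoke functional calculus and commutator
expansions for $F(A/2t)$, e.g. ``$[\phi_0(H),F(B(t))]=O(t^{-1})$ using boundedness of $\i[H,A]$ on
$\Ran\phi_0(H)$''. But under Conditions~\ref{cond:conv}--\ref{cond:10.9.2.11.13} one only has
$\nabla^2r^2\le(\triangle r^2)g\le C\inp{r}^{1/2-\kappa}g$ (see \eqref{eq:53}), so
$\i[H,A]=p^*(\nabla^2r^2)p+\dots$ is \emph{not} bounded relative to $H$, not even in the form sense; the paper
states this explicitly in Subsection~\ref{Preliminary localization for perturbed dynamics} as the reason Mourre
theory does not apply directly, and it is exactly why the authors localize in $r^2/t^2$ rather than in $A/t$.
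Second, your claims $VQ_{\mathrm{p}}=O(t^{-1-\eta})$ and $\alpha Q_{\mathrm{p}}=O(t^{-\infty})$ presuppose that
the cutoff forces $r\ge c't$, but a cutoff in the operator $A/2t$ carries no support property in $x$ at fixed
$t$; one needs genuine spatial cutoffs ($Q_4,Q_5$) together with the maximal and minimal velocity bounds
(Lemmas~\ref{lemma:10.1.16.14.53} and \ref{lemma:mini}) to justify \ref{item:10.3.23.17.21} and the spatial
decay, and these are precisely the technical heart of the paper's proof that your sketch bypasses.
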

Now we deduce Theorem~\ref{thm:10.5.31.23.23} from Lemmas~\ref{lemma:10.5.31.23.24} and \ref{lemma:10.3.23.17.20}.
The existence and the completeness parts are completely the same and we discuss only the 
existence part.
From Lemma~\ref{lemma:10.5.31.23.24} \ref{item:10.5.31.23.26} and \ref{item:10.5.31.23.27}  
the following statement follows, which combined with 
 Lemma~\ref{lemma:10.5.31.23.24} \ref{item:10.5.31.23.25}  and  a
 density argument  implies the
 existence of the wave operator.
\begin{lemma}\label{lemma:reduct-proof-theor}
Let $\mu,M,Q_{\mathrm{f}},\delta'$ be as in Lemma~\ref{lemma:10.5.31.23.24}, and 
$u\in \chi_{[\mu,M]}(r^2){\mathcal H}_{\mathrm{aux}}\cap C^\infty(M)$.
Then for any $\varepsilon>0$
there exists $t_0>0$ such that for any $t,t'\geq t_0$ and $v\in C_{\rm c}^\infty(M)$
\begin{align*}
|\inp{v,\e^{\mathrm{i}tH}Q_{\mathrm{f}}(t)U(t)u}
-\inp{v,\e^{\mathrm{i}t'H}Q_{\mathrm{f}}(t')U(t')u}|\leq \varepsilon\|v\|.
\end{align*} 
In particular, $\e^{\mathrm{i}tH}Q_{\mathrm{f}}(t)U(t)u$ is a Cauchy sequence as $t\to\infty$.
\end{lemma}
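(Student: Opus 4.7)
\emph{The plan} is a symmetric Cook-type argument. I would differentiate in $t$ the inner product $\inp{v,\e^{\i tH}Q_\mathrm{f}(t)U(t)u}$ and, using $\dot Q_\mathrm{f}=\mathrm{D}_{G(t)}Q_\mathrm{f}-\i[G(t),Q_\mathrm{f}]$ together with $H-G(t)=W(t)+\alpha(t)+V$, obtain the identity
\begin{align*}
\tfrac{\mathrm{d}}{\mathrm{d}t}\inp{v,\e^{\i tH}Q_\mathrm{f}(t)U(t)u}
=\inp{v,\e^{\i tH}\bigl[\mathrm{D}_{G(t)}Q_\mathrm{f}(t)+\i(W(t)+\alpha(t)+V)Q_\mathrm{f}(t)\bigr]U(t)u}.
\end{align*}
By Lemma~\ref{lemma:10.5.31.23.24}~\ref{item:10.5.31.23.27} the second term is $O_{\vB(\vH)}(t^{-1-\delta'})$ and therefore integrates to something bounded by $C\|v\|\|u\|\int_{t_0}^{\infty}t^{-1-\delta'}\,\d t$, which is at most $\tfrac{\varepsilon}{3}\|v\|$ for $t_0$ large.

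For the principal term, I would write $\mathrm{D}_{G(t)}Q_\mathrm{f}(t)=A(t)+R(t)$, where $R$ is the $O_{\vB(\vH)}(t^{-1-\delta'})$-remainder from Lemma~\ref{lemma:10.5.31.23.24}~\ref{item:10.5.31.23.26} and $A(t)=\mathrm{D}_{G(t)}Q_\mathrm{f}(t)-R(t)\ge 0$ is bounded self-adjoint. The $R$-contribution is again integrable. For the remaining $A(t)$-contribution I would apply Cauchy--Schwarz in $\vH$ and in time,
\begin{align*}
\int_{t_1}^{t_2}\bigl|\inp{v,\e^{\i tH}A(t)U(t)u}\bigr|\,\d t\le \sqrt{J_v(t_1,t_2)}\sqrt{J_u(t_1,t_2)},
\end{align*}
with $J_u(t_1,t_2)=\int_{t_1}^{t_2}\inp{U(t)u,A(t)U(t)u}\,\d t$ and $J_v(t_1,t_2)=\int_{t_1}^{t_2}\inp{v,\e^{\i tH}A(t)\e^{-\i tH}v}\,\d t$. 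The $J_u$-bound comes from integrating $\tfrac{\d}{\d t}\inp{U(t)u,Q_\mathrm{f}(t)U(t)u}=\inp{U(t)u,\mathrm{D}_{G(t)}Q_\mathrm{f}(t)U(t)u}$ and using $\|Q_\mathrm{f}(t)\|\le 1$ together with the integrability of $R(t)$; the left side is bounded by $2\|u\|^2$ so in particular $J_u(t_1,\infty)\to 0$ as $t_1\to\infty$. For $J_v$ the key observation is
\begin{align*}
\mathrm{D}_HQ_\mathrm{f}(t)=\mathrm{D}_{G(t)}Q_\mathrm{f}(t)+\i[W(t)+\alpha(t)+V,Q_\mathrm{f}(t)]=A(t)+\tilde R(t),
\end{align*}
where $\tilde R(t)=O_{\vB(\vH)}(t^{-1-\delta'})$ by Lemma~\ref{lemma:10.5.31.23.24}~\ref{item:10.5.31.23.27}. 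The same telescoping argument then gives the uniform bound $J_v(1,\infty)\le C\|v\|^2$.

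Finally, choosing $t_0$ so large that all three error contributions (the two $O(t^{-1-\delta'})$ tails and $\sqrt{C}\|v\|\sqrt{J_u(t_0,\infty)}$) are at most $\tfrac{\varepsilon}{3}\|v\|$, integration of the Heisenberg identity from $t$ to $t'$ yields the required estimate. \emph{The main obstacle} is the asymmetry between the free dynamics $U(t)$ acting on $u$ and the full dynamics $\e^{-\i tH}$ acting on $v$: the standard one-sided Cook method fails because we lack the differentiability of $W(t)$ needed to control $\tfrac{\d}{\d t}U(t)u$ in a strong sense. The Cauchy--Schwarz trick with $A(t)^{1/2}$ sidesteps this, but it works only because Lemma~\ref{lemma:10.5.31.23.24}~\ref{item:10.5.31.23.27} controls \emph{both} $(W(t)+\alpha(t)+V)Q_\mathrm{f}(t)$ and $Q_\mathrm{f}(t)(W(t)+\alpha(t)+V)$, allowing one to pass freely between $\mathrm{D}_{G(t)}Q_\mathrm{f}$ and $\mathrm{D}_HQ_\mathrm{f}$ modulo integrable-in-time remainders.
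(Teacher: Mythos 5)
Your proposal is correct and follows essentially the same route as the paper: the same Heisenberg-derivative identity $\mathrm{D}_{G}Q_{\mathrm{f}}+\mathrm{i}(W+\alpha+V)Q_{\mathrm{f}}$, the same split into a non-negative part plus an integrable remainder, and the same double Cauchy--Schwarz with the $u$-factor controlled via $\mathrm{D}_{G}Q_{\mathrm{f}}$ and the $v$-factor via $\mathrm{D}_{H}Q_{\mathrm{f}}$, using part \ref{item:10.5.31.23.27} to pass between the two. No gaps.
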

\proof
Let $\varepsilon>0$.
For any $t\geq t'\geq 1$ and $v\in C_{\rm c}^\infty(M)$ we compute,
using Lemma~\ref{lemma:10.5.31.23.24} \ref{item:10.5.31.23.26} and \ref{item:10.5.31.23.27}  
and the Schwarz inequality,
\begin{align*}
&{}|\inp{v,\e^{\mathrm{i}tH}Q_{\mathrm{f}}(t)U(t)u}
-\inp{v,\e^{\mathrm{i}t'H}Q_{\mathrm{f}}(t')U(t')u}|\\
&{}=\Bigl|\int_{t'}^t\{\inp{v,\e^{\mathrm{i}sH}\mathrm{D}_{G(s)}Q_{\mathrm{f}}(s)U(s)u}
+\mathrm{i}\inp{v,\e^{\mathrm{i}sH}(W(s)+\alpha(s)+V)Q_{\mathrm{f}}(s)U(s)u}\}\,\mathrm{d}s\Bigr|\\
&{}\le \Bigl( \int_{t'}^t \inp{v,\e^{\mathrm{i}sH}(\mathrm{D}_{G(s)}Q_{\mathrm{f}}(s)-R(s))\e^{-\mathrm{i}sH}v}\,\mathrm{d}s\Bigr)^{1/2}\\
&\phantom{{}\le{}}{}\times
\Bigl( \int_{t'}^t \inp{u,U(s)^*(\mathrm{D}_{G(s)}Q_{\mathrm{f}}(s)-R(s))U(s)u}\,\mathrm{d}s\Bigr)^{1/2}
+C \|v\| \|u\| \int_{t'}^ts^{-1-\delta'}\,\mathrm{d}s.
\end{align*}
By Lemma~\ref{lemma:10.5.31.23.24} \ref{item:10.5.31.23.27}  
\begin{align*}
\inp{v,\e^{\mathrm{i}sH}(\mathrm{D}_{G(s)}Q_{\mathrm{f}}(s)-R(s))\e^{-\mathrm{i}sH}v}
=
\tfrac{\mathrm{d}}{\mathrm{d}s}\inp{v,\e^{\mathrm{i}sH}Q_{\mathrm{f}}(s)\e^{-\mathrm{i}sH}v}+O(s^{-1-\delta'})\|v\|^2,
\end{align*}
so that 
\begin{align*}
\Bigl( \int_{t'}^t \inp{v,\e^{\mathrm{i}sH}(\mathrm{D}_{G(s)}Q_{\mathrm{f}}(s)-R(s))\e^{-\mathrm{i}sH}v}\,\mathrm{d}s\Bigr)^{1/2}
\le C\|v\|.
\end{align*}
Similarly, we have
\begin{align*}
\Bigl( \int_{t'}^t  \inp{u,U(s)^*(\mathrm{D}_{G(s)}Q_{\mathrm{f}}(s)-R(s))U(s)u}\,\mathrm{d}s\Bigr)^{1/2}\le C\|u\|,
\end{align*}
which in particular implies that 
$ (\inp{u,U(s)^*(\mathrm{D}_{G(s)}Q_{\mathrm{f}}(s)-R(s))U(s)u}\ge 0$ is integrable.
Hence we obtain
\begin{align*}
&{}|\inp{v,\e^{\mathrm{i}tH}Q_{\mathrm{f}}(t)U(t)u}-\inp{v,\e^{\mathrm{i}t'H}Q_{\mathrm{f}}(t')U(t')u}|\\
&{}\le C 
\|v\|
\Bigl( \int_{t'}^t \inp{u,U(s)^*
(\mathrm{D}_{G(s)}Q_{\mathrm{f}}(s)-R(s))U(s)u}\,\mathrm{d}s\Bigr)^{1/2}
+C \|v\| \|u\| \int_{t'}^ts^{-1-\delta'}\,\mathrm{d}s.
\end{align*}
Since the integrands in the right-hand side both are integrable,
if we let $t_0>0$ be large enough, we have for $t,t'\geq t_0$
\begin{align*}
|\inp{v,\e^{\mathrm{i}tH}Q_{\mathrm{f}}(t)U(t)u}-\inp{v,\e^{\mathrm{i}t'H}Q_{\mathrm{f}}(t')U(t')u}|\leq
\varepsilon \|v\|.
\end{align*}
Thus the lemma follows.
\endproof
For the existence of the limit $\widetilde \Omega_+$  the following
lemma is sufficient. We omit the proof of the lemma.
\begin{lemma}\label{lemma:reduct-proof-theorbb}
Let $E,e,Q_{\mathrm{p}},\delta'$ be as in
Lemma~\ref{lemma:10.3.23.17.20} and $u\in
\chi_{[E-e,E+e]}(H) C_{\rm c}^\infty(M)$. 
Then for any $\varepsilon>0$
there exists $t_0>0$ such that for any $t,t'\geq t_0$ and $v\in C_{\rm c}^\infty(M)$
\begin{align*}
|\inp{v,U(t)^*Q_{\mathrm{p}}(t)\e^{-\mathrm{i}tH}u}
-\inp{v,U(t')^*Q_{\mathrm{p}}(t')\e^{-\mathrm{i}t'H}u}|\leq \varepsilon\|v\|.
\end{align*} 
In particular, $U(t)^*Q_{\mathrm{p}}(t)\e^{-\mathrm{i}tH}u$ is a
Cauchy sequence as $t\to\infty$.
\end{lemma}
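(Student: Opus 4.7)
The plan is to mirror the proof of Lemma \ref{lemma:reduct-proof-theor} with the roles of the two propagators interchanged, using Lemma \ref{lemma:10.3.23.17.20} in place of Lemma \ref{lemma:10.5.31.23.24}. Using the identity $H-G(s) = W(s)+\alpha(s)+V$ obtained from \eqref{10.5.14.15.19}, I would first compute on $C_{\mathrm{c}}^\infty(M)$
\begin{align*}
\tfrac{\d}{\d s}\bigl[U(s)^*Q_{\mathrm{p}}(s)\e^{-\i sH}\bigr]
= U(s)^*\bigl[\mathrm{D}_HQ_{\mathrm{p}}(s) - \i(W(s)+\alpha(s)+V)Q_{\mathrm{p}}(s)\bigr]\e^{-\i sH},
\end{align*}
pair with $v$, and integrate from $t'$ to $t$. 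The term involving $W(s)+\alpha(s)+V$ contributes at most $C\|v\|\|u\|\int_{t'}^t s^{-1-\delta'}\d s$ by item \ref{item:10.3.23.17.23} of Lemma \ref{lemma:10.3.23.17.20}, which tends to $0$ as $t',t\to\infty$.

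For the main term $\int_{t'}^t\inp{v,U(s)^*\mathrm{D}_HQ_{\mathrm{p}}(s)\e^{-\i sH}u}\d s$ I would use $\mathrm{D}_HQ_{\mathrm{p}}(s)-R(s)\geq 0$ from item \ref{item:10.3.23.17.22} together with the Cauchy--Schwarz inequality in Kato's smoothness form to bound it, modulo an $O\bigl(\|v\|\|u\|\int_{t'}^ts^{-1-\delta'}\d s\bigr)$ remainder from $R(s)$, by the product
\begin{align*}
\Bigl(\int_{t'}^t\inp{v,U(s)^*(\mathrm{D}_HQ_{\mathrm{p}}(s)-R(s))U(s)v}\d s\Bigr)^{1/2}
\Bigl(\int_{t'}^t\inp{u,\e^{\i sH}(\mathrm{D}_HQ_{\mathrm{p}}(s)-R(s))\e^{-\i sH}u}\d s\Bigr)^{1/2}.
\end{align*}

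The $\e^{\i sH}$-factor is immediate, since $\mathrm{D}_HQ_{\mathrm{p}}(s)$ is precisely the Heisenberg derivative along $H$: the integrand telescopes to $\inp{u,\e^{\i tH}Q_{\mathrm{p}}(t)\e^{-\i tH}u}-\inp{u,\e^{\i t'H}Q_{\mathrm{p}}(t')\e^{-\i t'H}u}$ modulo an integrable contribution from $R(s)$, the factor is uniformly bounded by $C\|u\|$, and since the integrand is non-negative it is integrable on $[1,\infty)$, so its tail vanishes as $t'\to\infty$. The main obstacle is the $U(s)$-factor, because $\mathrm{D}_HQ_{\mathrm{p}}(s)$ is not the Heisenberg derivative along $G(s)$. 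Here I would rewrite
\begin{align*}
\mathrm{D}_HQ_{\mathrm{p}}(s) = \mathrm{D}_{G(s)}Q_{\mathrm{p}}(s)+\i[W(s)+\alpha(s)+V,Q_{\mathrm{p}}(s)],
\end{align*}
observe that the commutator is $O(s^{-1-\delta'})$ in operator norm by item \ref{item:10.3.23.17.23}, and exploit $\tfrac{\d}{\d s}\bigl[U(s)^*Q_{\mathrm{p}}(s)U(s)\bigr] = U(s)^*\mathrm{D}_{G(s)}Q_{\mathrm{p}}(s)U(s)$ to telescope the remaining integral to at most $C\|v\|^2$. Choosing $t_0$ so large that the tail of the non-negative integrand in the $\e^{\i sH}$-factor and $\int_{t_0}^\infty s^{-1-\delta'}\d s$ are both sufficiently small yields the desired Cauchy estimate.
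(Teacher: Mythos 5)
Your proposal is correct and is exactly the argument the paper has in mind: the proof of this lemma is omitted in the paper precisely because it is the mirror image of the proof of Lemma~\ref{lemma:reduct-proof-theor}, with $U$ and $\e^{-\i tH}$ (hence $\mathrm{D}_{G(t)}$ and $\mathrm{D}_H$) interchanged. You correctly identify and handle the one genuine asymmetry — that the positive operator is now $\mathrm{D}_HQ_{\mathrm{p}}-R$, so it is the $U(s)v$-factor that needs the conversion $\mathrm{D}_HQ_{\mathrm{p}}=\mathrm{D}_{G(s)}Q_{\mathrm{p}}+\i[W+\alpha+V,Q_{\mathrm{p}}]$ via item \ref{item:10.3.23.17.23} before telescoping.
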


\proof[Proof of \eqref{eq:7}, \eqref{eq:2} and Corollary \ref{cor:asymwb}] It suffices to show the  identity
\begin{equation}
  \label{eq:56}
  H\Omega_+=\Omega_+M_f;\;f(x):=2^{-1}r(x)^2.
\end{equation}  Note that the operator $M_f$ has
purely continuous spectrum, given by $[0,\infty)$, so indeed it is a
consequence of \eqref{eq:56} that $\Ran\,\Omega_+\subseteq \vH_{\mathrm{c}}(H)$.
Note that we also have $\Ran\,\widetilde{\Omega}_+\subseteq{\mathcal H}_{\aux}$.
In fact, cf.   the RAGE theorem \cite{RS}, for any $u\in {\mathcal H}_{\mathrm{c}}(H)$
\begin{align*}
\lim_{t\to+\infty}\tfrac{1}{T}\int_1^T\|(P_{\mathrm{aux}})^{\perp}\mathrm{e}^{-\mathrm{i}tH}u\|^2\,\mathrm{d}t=0,
\end{align*}
so that in particular for some sequence $t_n$ 
\begin{align*}
t_n\to \infty\mand (P_{\mathrm{aux}})^{\perp}\mathrm{e}^{-\mathrm{i}t_nH}u\to 0
\end{align*}
as $n\to\infty$.
Then, since $U(t)$ is unitary on $({\mathcal H}_{\aux})^\perp$ and 
$\slim_{t\to+\infty}U(t)^*\mathrm{e}^{-\mathrm{i}tH}P_{\mathrm{c}}$ exists,
we can conclude
\begin{align*}
\slim_{t\to\infty}(P_{\mathrm{aux}})^{\perp}U(t)^*\mathrm{e}^{-\mathrm{i}tH}P_{\mathrm{c}}=0.
\end{align*}
This implies the claim.
Therefore  \eqref{eq:7} follows. Note also that given
\eqref{eq:56} the statements \eqref{eq:2} and Corollary
\ref{cor:asymwb} are immediate consequences of Theorem
\ref{thm:10.5.31.23.23}.

For \eqref{eq:56} we compute for all $s\in \R$
\begin{align*}
 \Omega_+u&=\lim_{t\to \infty} \e^{\i (t+s)H}\mathrm{e}^{\mathrm{i}K(t+s,\cdot)}\mathrm{e}^{-\mathrm{i}\frac{\ln
    (t+s)}{2}A}P_{\mathrm{aux}}u\\&=\e^{\i sH}\lim_{t\to \infty} \e^{\i
  tH}\mathrm{e}^{\mathrm{i}K(t,\cdot)}\mathrm{e}^{\mathrm{i}\tfrac{r^2}{2t^2}\parb{
  \tfrac{t^2}{t+s}-t}}\mathrm{e}^{-\mathrm{i}\frac{\ln
    t}{2}A}P_{\mathrm{aux}}u
\\&=\e^{\i sH}\lim_{t\to \infty} \e^{\i
  tH}\mathrm{e}^{\mathrm{i}K(t,\cdot)}\mathrm{e}^{-\mathrm{i}\frac{\ln
    t}{2}A}\mathrm{e}^{\mathrm{i}f(\cdot)\parb{
  \tfrac{t^2}{t+s}-t}}P_{\mathrm{aux}}u\\
&=\e^{\i sH}\lim_{t\to \infty} \e^{\i tH}U(t)P_{\mathrm{aux}}\e^{-\i sf(\cdot)}u\\&=\e^{\i sH} \Omega_+\e^{-\i sf(\cdot)}u.
\end{align*} 

  Whence, cf.   Stone's theorem \cite{RS},  $H\Omega_+\supseteq\Omega_+M_f$, and therefore also
\eqref{eq:56} holds.
\endproof
 
We end this subsection by also proving  Corollary \ref{cor:asymw}.
\proof[Proof of Corollary \ref{cor:asymw}]
We note $\mathop{\mathrm{Ran}}\Omega_+\subseteq \vH_{\mathrm{c}}(H)$.
Then we compute for all $\phi\in C_{\mathrm{c}}(M)$  and $v\in \vH_{\mathrm{c}}(H)$
\begin{align*}
  \lim_{t\to
   \infty}\e^{\i tH}\phi(\omega(t,\cdot ))\e^{-\i tH}v=\lim_{t\to
   \infty}\e^{\i tH}U(t)\phi(M_x)U(t)^*\e^{-\i tH}v=\Omega_+\phi(M_x)\Omega_+^*v,
\end{align*} showing the existence of $\phi(\omega^+_\infty)=\s -\lim_{t\to
   \infty}\e^{\i tH}\phi(\omega(t,\cdot ))\e^{-\i tH}P_{\mathrm{c}}$ and the
first identity of \eqref{eq:2bb}. 

The existence of the operator $R:=r(\omega^+_\infty)$ follows from the
fact that the mapping $C_{\mathrm{c}}(\R)\ni\psi\to \phi(\omega^+_\infty)\in \vB
(\vH_{\mathrm{c}}(H))$, $\phi:=\psi\circ r$, is a non-degenerate
$*$-representation and spectral theory \cite{RS}. Clearly $R\geq 0$.

We take (using here notation from the next subsection) $\psi_N(s)=\tfrac{s^2}{2}\chi_{-,0,N}(s)$,
$N\in\N$, and take the $N$-limit in the identity $\psi_N
(R)\Omega_+=\Omega_+M_{\psi_N\circ r}$. This leads to
$\tfrac{R^2}{2}\Omega_+\supseteq\Omega_+M_{r^2/2}$,
and whence in combination with \eqref{eq:2}, the
second  identity of \eqref{eq:2bb}. In particular the kernel of $R$ is zero. 
\endproof

\subsection{Localization operators in explicit form}\label{sec:10.6.8.4.0}
The rest of the paper  concerns the proofs of Lemmas~\ref{lemma:10.5.31.23.24} and \ref{lemma:10.3.23.17.20}.
Since the proofs are fairly long, here we first give the explicit
forms of $Q_{\mathrm{f}}$ and $Q_{\mathrm{p}}$. We also collect here
some other 
(related) constructions.

We denote by 
$\chi_{a,b,c,d}\in C^\infty(\mathbb{R})$, $-\infty<a<b<c<d<\infty$, 
a smooth cutoff function such that
\begin{align*}
0\le \chi_{a,b,c,d}\le 1, && \chi_{a,b,c,d}=1 \mbox{ in a nbh.\ of }[b,c], && 
\chi_{a,b,c,d}=0\mbox{ in a nbh.\ of  } \mathbb{R}\setminus (a,d),
\end{align*}
and that
\begin{align*}
\chi_{a,b,c,d}'\ge 0 \mbox{ on } [a,b], &&\chi_{a,b,c,d}'\le 0 \mbox{ on } [c,d],&&
\chi_{a,b,c,d}^{1/2}, |\chi_{a,b,c,d}'|^{1/2}\in C^\infty(\mathbb{R}).
\end{align*} 
We also assume that the family of these cutoff functions satisfies
\begin{align*}
\chi_{a,b,c,d}+\chi_{c,d,e,f}=\chi_{a,b,e,f}, &&
\|\chi_{a,b,c,d}^{(n)}\|_{L^\infty(\mathbb{R})}\le \|\chi_{0,1,2,3}^{(n)}\|_{L^\infty(\mathbb{R})}(\min\{b-a,d-c\})^{-n}.
\end{align*} 
We let $\chi_{-,-,c,d}$ and $\chi_{a,b,+,+}$ be functions with 
 similar  properties as above formally given by taking   
$a=b=-\infty$ and $c=d=+\infty$, respectively. We abbreviate
$\chi_{-,c,d}=\chi_{-,-,c,d}$ and $\chi_{a,b,+}=\chi_{a,b,+,+}$. Note
that all the above functions may be constructed from  $\chi_{0,1,+}$ and
$\chi_{-,0,1}$ by a simple
translation and scaling procedure as well as  multiplication.

Then the localization operators $Q_{\mathrm{f}}$ and $Q_{\mathrm{p}}$ are realized as the products
\begin{align}
Q_{\mathrm{f}}(t)&{}=(Q_2(t)Q_1(t))^*Q_2(t)Q_1(t),\label{10.6.4.23.36}\\ 
Q_{\mathrm{p}}(t)&{}=(Q_6(t)Q_5(t)Q_4(t))^*Q_6(t)Q_5(t)Q_4(t),\label{10.6.6.2.52}
\end{align}
 where we use quantities from the list
\begin{align*}
  Q_1(t)&=\chi_{\mu_1,\mu,M,M_1}\big({r^2}/{t^2}\big),&&\\
Q_2(t)&{}=\parb{I+t^{1+\delta_1}W(t)}^{-1/2},&&\\
Q_3&{}=\chi_{E-2e,E-e,E+e,E+2e}(H),&&\\
%Q_4(t)&{}=\chi_{-,2E_1,2E_2}\bigl({r^2}/{t^2}\bigr),&{}\tilde{Q}_4(t)&=\chi_{-,2E_2,2E_3}\big({r^2}/{t^2}\b%ig),\, \\
%Q_5(t)&{}=\chi_{2(1+\delta_3)E,2(1+\delta_2)E,+}\parb{t^{-1}\tilde
%  A(t)},&{}
%\tilde{A}(t)&=\tilde{Q}_4(t)Q_3AQ_3\tilde{Q}_4(t),\\
Q_4(t)&{}=\chi_{-,2E_1,2E_2}\big({r^2}/{t^2}\big),&&\\
Q_5(t)&{}=\chi_{(1+\delta_3)^2 E/2,(1+\delta_2)^2E/2,+}\bigl({r^2}/{t^2}\big),&&\\
Q_6(t)&{}=Q_2(t){}=\parb{I+t^{1+\delta_1}W(t)}^{-1/2}.&&
\end{align*}
The parameters appearing above are chosen as follows:
For given $0<\mu<M<\infty$, if we let $\mu_1,M_1,\delta_1$ be any constants such that 
\begin{align*}
0<\mu_1<\mu<M<M_1<\infty,&&
0<\delta_1<\min (\delta,2\kappa),
\end{align*}
then $Q_{\mathrm{f}}$ satisfies Lemma~\ref{lemma:10.5.31.23.24}.
For given $E\in (0,\infty)$ let 
$E_{*},\delta_{*}$ be any constants such that 
\begin{align*}
E<E_1<E_2,&&
0<\delta_3<\delta_2<\delta_1<\min (\delta,2\kappa),
\end{align*}
and $e>0$ small enough accordingly,  then $Q_{\mathrm{p}}$ satisfies 
Lemma~\ref{lemma:10.3.23.17.20}.
\begin{comment}
 The above functions of operators are given, for example, by the Helffer-Sj\"ostrand formula.
In the proof of Lemmas~\ref{lemma:10.5.31.23.24} and \ref{lemma:10.3.23.17.20} 
we shall intensively use this formula 
to commute operators modulo integrable error terms.
Here we give some formulas that will be repeatedly referred to later (cf. \cite[Section 6]{CHS1}).

A prominent auxiliary operator, playing the role of a Mourre type conjugate
operator, is  given by
\begin{equation*}
A=\i[H_0,r^2]=\tfrac{1}{2}\{(\partial_i r^2)g^{ij}p_j+p_i^* g^{ij}(\partial_j r^2)\}.
\end{equation*}
\end{comment}

We shall  consider the following modification of $r^2$ and
corresponding quantities. Pick a
real-valued 
$f\in C^\infty(\R_+)$ with $f(s)=1$ for $s<1/2$, $f(s)=s$ for $s>2$ and
$f''\geq0$. 
Define for any  $\epsilon\in (0,1)$  and all $t\geq1$
\begin{align*}
  \tilde r^2&=t^{2-2\epsilon}f(t^{2\epsilon-2}r^2),\\
\tilde K&=\tfrac{\tilde r^2}{2t},\\
\tilde A&=\i [H_0,\tilde r^2]=\tfrac{1}{2}\{(f'(\cdot)\partial_i
r^2)g^{ij}p_j+p_i^* g^{ij}(f'(\cdot)\partial_j r^2)\},\\
\tilde G&=\tfrac{1}{2}p_i^*g^{ij}p_j-\tfrac{1}{2}(p_i-\partial_i \tilde K)^*g^{ij}(p_j-\partial_j  \tilde K).
\end{align*} 

The latter constructions will be used in Subsection \ref{Preliminary
  localization for  perturbed dynamics} to prove the following
localization for $e>0$ chosen sufficiently small
\begin{subequations}
\begin{align}\label{eq:41}
&\slim_{t\to\infty}(I-Q_3Q_4Q_5^2Q_4Q_3)\e^{-\mathrm{i}tH}\chi_{[E-e,E+e]}(H)=0,
\end{align}
\begin{align}
\begin{split}
&\mForall u\in \chi_{[E-e,E+e]}(H)\vH:\\&\qquad-\int_1^\infty
\inp{\e^{-\mathrm{i}tH}u,(\chi^2_{-,2E_1,2E_2})'\bigl({r^2}/{t^2}\big)\e^{-\mathrm{i}tH}u}t^{-1}\d
t<\infty,\\  
\end{split}
\label{eq:12d}
\end{align}
\begin{align}
\begin{split}
&\mForall u\in \chi_{[E-e,E+e]}(H)\vH:\\&\qquad\int_1^\infty
\inp{\e^{-\mathrm{i}tH}u,(\chi^2_{(1+\delta_3)^2
    E/2,(1+\delta_2)^2E/2,+})'\bigl({r^2}/{t^2}\big)\e^{-\mathrm{i}tH}u}t^{-1}\d
t<\infty.
\end{split}
\label{eq:12}
 \end{align}
\end{subequations} As the reader will see, given
  \eqref{eq:41}--\eqref{eq:12}, the proofs of Lemmas
  \ref{lemma:10.5.31.23.24} and \ref{lemma:10.3.23.17.20} are very similar.

Let $T$ be a self-adjoint operator on a complex  Hilbert space $\vH$ and $\chi\in C^\infty_{\mathrm{c}}(\mathbb{R})$.
%Suppose first that $\sigma(T)\cap \supp \chi\subset \mathbb{R}$ is compact.
%Then we can choose $\chi_1\in C^\infty_c(\mathbb{R})$ such that $\chi_1=1$ in a neighborhood of $\sigma(T)\cap \supp \chi$,
 We can choose an almost analytic extension $\tilde{\chi}\in
C^\infty_{\mathrm{c}}(\mathbb{C})$, 
i.e.
\begin{align*}
\tilde{\chi}(x)=\chi(x)\mfor x\in \mathbb{R},&&
|\bar{\partial}\tilde{\chi}(z)|\le C_k|\mathop{\mathrm{Im}}z|^k;\;k\in\N.
\end{align*}
Then the Helffer-Sj\"ostrand representation formula reads
\begin{equation}\label{eq:26}
\chi(T)=\int_{\mathbb{C}}(T-z)^{-1}\,\mathrm{d}\mu(z);\;\mathrm{d}\mu(z)=-\frac{1}{2\pi \mathrm{i}}\bar{\partial}\tilde{\chi}(z)\mathrm{d}z\mathrm{d}\bar{z}.
\end{equation}
If   $S$ is  another  operator on  $\vH$ we are thus lead to the formula
\begin{align}
[S,\chi(T)]=\int_{\mathbb{C}}(T-z)^{-1}
[T,S](T-z)^{-1}\,\mathrm{d}\mu(z).
\label{10.6.3.18.5}
\end{align}
\begin{comment}
Assuming $\chi'\ge 0$ and $\chi'{}^{1/2}\in C^\infty(\mathbb{R})$, we can rewrite (\ref{10.6.3.18.5}) as 
\begin{align}
[S,\chi(T)]=
\chi'(T)^{1/2}[S,T]\chi'(T)^{1/2}+R
\label{10.6.4.19.15}
\end{align}
with
\begin{align*}
R=[[S,T],\chi'(T)^{1/2}]\chi'(T)^{1/2}-
\int_{\mathbb{C}}(T-z)^{-1}
[[S,T],T](T-z)^{-2}\,\mathrm{d}\mu(z).
\end{align*}

Similarly if $T\geq I$ we can represent $\chi(T)=T^{-1/2}$ as \eqref{eq:26}
where now $\tilde \chi\in C^\infty(\mathbb{C})$ can be chosen such
that
\begin{equation}
  \label{eq:27}
  |\bar{\partial}\tilde{\chi}(z)|\le C_k|\mathop{\mathrm{Im}}z|^k/\inp{z}^{k-3/2}.
\end{equation}
 Alternatively we can in this case use the following well-known
 representation formula with $\alpha=1/2$. 
  \end{comment}
Another well-known
 representation formula  for $T$ strictly positive  reads:
\begin{equation}\label{eq:26oo}
T^{-1/2} = \pi^{-1}\int_0^\infty s^{-1/2} (T+s)^{-1}\d s.
\end{equation}
%$c_\alpha = \tfrac{\sin(\alpha \pi)}{\pi}$

\section{Verification of properties  of localization operators}\label{sec:Verification of properties  of localization operators}
\subsection{Commutator computations}\label{sec:mourre-estimate}
We compute  several commutators needed later.
We recall that two tensors are denoted by the same symbol if they are related by
the identification $TM\cong T^*M$ through the metric tensor,
and distinguish them by  superscripts and subscripts, for example,
\begin{align*}
((\nabla^2 r^2)^{ij})=(g^{ik}g^{jl}(\nabla^2 r^2)_{kl})\in \Gamma(TM\otimes TM).
\end{align*}
We recall from \cite [Lemma 2.5]{Do} (this formula can
  be proved by a straightforward, although  somewhat tedious,
  computation using the  compatibility condition \eqref{09.11.28.20.51}).
\begin{lemma}\label{prop:09.12.8.11.10} Let $\phi\in
  C^\infty(M)$ be given, and define
  \begin{equation*}
    A_{\phi}=\i[H_0,\phi]=\tfrac{1}{2}\{(\partial_i \phi)g^{ij}p_j+p_i^* g^{ij}(\partial_j \phi)\}.
  \end{equation*}
Then,  as an operator on $C_{\rm c}^\infty(M)$,
\begin{align*}
\mathrm{i}[H_0, A_{\phi}]
=p_i^*(\nabla^2\phi)^{ij}p_j
-\tfrac{1}{4}\triangle^2 \phi.
\end{align*}
\end{lemma}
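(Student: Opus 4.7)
The plan is to establish the identity by directly expanding the double commutator $\i[H_0,A_\phi]=-[H_0,[H_0,\phi]]$. First, using the divergence formula $\triangle\phi=(\det g)^{-1/2}\partial_i((\det g)^{1/2}g^{ij}\partial_j\phi)$ and the adjoint formula $p_i^*=(\det g)^{-1/2}p_i(\det g)^{1/2}$, I would rewrite
\begin{align*}
A_\phi=g^{ij}(\partial_i\phi)p_j-\tfrac{\i}{2}\triangle\phi,\qquad [H_0,\phi]=-\tfrac{1}{2}\triangle\phi-g^{ij}(\partial_i\phi)\partial_j,
\end{align*}
so that $\i[H_0,A_\phi]=-[H_0,[H_0,\phi]]$ splits naturally into the scalar piece $\tfrac{1}{2}[H_0,\triangle\phi]$ (of exactly the same type as $[H_0,\phi]$ with $\phi$ replaced by $\triangle\phi$) and the vector-field piece $[H_0,X^j\partial_j]$, where $X^j=g^{ij}\partial_i\phi$.

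For the principal contribution I would expand $[H_0,X^j\partial_j]$ in local coordinates and convert all derivatives $\partial_k g^{ij}$ into Christoffel symbols via the compatibility condition \eqref{09.11.28.20.51}. The key observation is that the pairs $\partial_i\partial_j\phi-\Gamma^k_{ij}\partial_k\phi$ will emerge and regroup, by \eqref{eq:10.12.20.14.56}, into the geometric Hessian $(\nabla^2\phi)_{ij}$; symmetrizing and pairing with the two momenta then gives $p_i^*(\nabla^2\phi)^{ij}p_j$. In parallel, the sub-principal first-order-in-$p$ terms from this bracket will cancel the analogous ones from $\tfrac{1}{2}[H_0,\triangle\phi]$, leaving only the scalar remainder $-\tfrac{1}{4}\triangle^2\phi$ coming from the scalar piece.

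A cleaner implementation, which I would actually carry out, exploits the geometric invariance of both sides. Since $p_i^*(\nabla^2\phi)^{ij}p_j$ is coordinate-invariant (from the compatibility of $\nabla$ with $g$ together with the adjoint formula for $p_i^*$) and $\triangle^2\phi$ is a scalar, it suffices to check the identity at an arbitrary point $x_0$ in normal coordinates centered at $x_0$, where $g_{ij}(x_0)=\delta_{ij}$ and $\Gamma^k_{ij}(x_0)=0$. There $p_i^*(\nabla^2\phi)^{ij}p_j$ collapses at $x_0$ to $(\partial_i\partial_j\phi)\partial_i\partial_j$, and both sides reduce to the flat Euclidean double commutator $-[-\tfrac12\triangle_{\mathbb R^d},[-\tfrac12\triangle_{\mathbb R^d},\phi]]=(\partial_i\partial_j\phi)\partial_i\partial_j-\tfrac14\triangle^2\phi$, a short direct calculation. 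The main obstacle in either route is bookkeeping the first-derivative-of-$g$ terms, i.e.\ verifying that the Christoffel corrections group into the Hessian with no residual first-order-in-$p$ remainder; in the normal-coordinates route this is automatic at $x_0$, but one must first establish the manifest tensoriality of each side, which is immediate for $\triangle^2\phi$ and follows for the principal part from the compatibility of $\nabla$ with $g$ combined with the adjoint formula for $p_i^*$.
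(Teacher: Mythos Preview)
Your first sketch---expanding $-[H_0,[H_0,\phi]]$ directly, splitting off the scalar piece $\tfrac12[H_0,\triangle\phi]$, and using the compatibility condition \eqref{09.11.28.20.51} to convert $\partial_k g^{ij}$ into Christoffel symbols so that $\partial_i\partial_j\phi-\Gamma^k_{ij}\partial_k\phi=(\nabla^2\phi)_{ij}$ emerges---is exactly the route the paper indicates (it cites \cite[Lemma~2.5]{Do} and remarks that the identity follows from a straightforward if tedious computation with the compatibility condition). That part is sound.

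The normal-coordinates shortcut you say you would actually carry out, however, has a genuine gap. In geodesic normal coordinates centered at $x_0$ one has $g_{ij}(x_0)=\delta_{ij}$ and $\Gamma^k_{ij}(x_0)=0$, but the second derivatives $\partial_k\partial_l g_{ij}(x_0)$---equivalently $\partial_k\Gamma^m_{ij}(x_0)$---are generically nonzero: they encode curvature. Both sides of the identity involve such terms when evaluated at $x_0$. On the right, $\triangle^2\phi$ contains $\partial^2 g$ through the inner $\triangle$, and $p_i^*(\nabla^2\phi)^{ij}p_j$ acquires, beyond the flat piece, a first-order contribution $(\partial_i\Gamma^m_{ij})(x_0)(\partial_m\phi)\,\partial_j$ coming from differentiating the $\Gamma$ inside $(\nabla^2\phi)^{ij}$. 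On the left, applying $H_0$ to $A_\phi u$ differentiates the coefficient $g^{jk}\partial_j\phi$ of $A_\phi$ twice, producing $(\partial_i^2 g^{jk})(x_0)(\partial_j\phi)\,\partial_k u$. So neither side ``reduces to the flat Euclidean double commutator'' at $x_0$; your claim that $p_i^*(\nabla^2\phi)^{ij}p_j$ collapses there to $(\partial_i\partial_j\phi)\partial_i\partial_j$ is already false at the sub-principal level. The curvature contributions do cancel between the two sides---that is precisely the content of the lemma---but verifying that cancellation is the same bookkeeping as the direct computation you outlined first; normal coordinates do not bypass it. Stick with your first approach.
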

Let $A$ be the self-adjoint operator defined by (\ref{eq:6}), i.e. we
take $\phi=r^2$ above. From
Lemma  \ref{prop:09.12.8.11.10} we thus obtain
\begin{corollary}\label{cor:10.10.13.15.00}
As a quadratic form on $C_{\mathrm{c}}^\infty(M)$,
  \begin{subequations}
\begin{align}\label{eq:4a}
\mathrm{i}[H,A]
&{}=p_i^*(\nabla^2r^2)^{ij}p_j+
\gamma_ig^{ij}\partial_j+\partial_i^*g^{ij}\gamma_j+\gamma_0;\\
\gamma_i&{}=(\partial_ir^2)V+\tfrac{1}{4}(\partial_i\triangle r^2),
\label{eq:5a}\\
\gamma_0&{}=(\triangle r^2)V.\label{eq:6a}
\end{align}
  \end{subequations}
In particular, for any $\varepsilon>0$ there exists 
$\gamma_\varepsilon=\gamma_\varepsilon(x)=O(r^{-\min\{2\eta,1+2\kappa\}})$ 
such that 
\begin{align}\label{eq:55}
\begin{split}
\mathrm{i}[H,A]
&\ge p_i^*\{(\nabla^2r^2)^{ij}-\varepsilon g^{ij}\}p_j+\gamma_\varepsilon\\
&\ge 2(1+\delta-\varepsilon)H_0-CH_{r_1}+\gamma_\varepsilon,
\end{split}
\end{align}
where $H_{r_1}=\tfrac{1}{2}p_i^*\chi_{-,r_0,r_1}(r)g^{ij}p_j$, $r_1>r_0$.
\end{corollary}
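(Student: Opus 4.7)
The plan is to split $\mathrm{i}[H,A]=\mathrm{i}[H_0,A]+\mathrm{i}[V,A]$ and treat each term separately. For the first I apply Lemma~\ref{prop:09.12.8.11.10} with $\phi=r^2$ to obtain
\begin{equation*}
\mathrm{i}[H_0,A]=p_i^{*}(\nabla^2 r^2)^{ij}p_j-\tfrac14\triangle^2 r^2.
\end{equation*}
For the second, since $V$ is real multiplication and commutes with $(\det g)^{\pm1/2}$, both $[V,p_j]$ and $[V,p_j^{*}]$ equal $\mathrm{i}(\partial_jV)$, so a direct computation gives
\begin{equation*}
\mathrm{i}[V,A]=-(\partial_ir^2)g^{ij}(\partial_jV).
\end{equation*}

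Next I recast the two scalar quantities as divergence-type $\gamma$-expressions. The key identity, which follows from $\partial_i^{*}=-(\det g)^{-1/2}\partial_i(\det g)^{1/2}$ together with the symmetry of $g^{ij}$ (the true first-order pieces cancel by relabeling indices), is that for any smooth $1$-form $\beta$
\begin{equation*}
\beta_ig^{ij}\partial_j+\partial_i^{*}g^{ij}\beta_j=-(\det g)^{-1/2}\partial_i\bigl((\det g)^{1/2}g^{ij}\beta_j\bigr),
\end{equation*}
i.e.\ $-\Div(g^{\cdot j}\beta_j)$ as a multiplication operator. Taking $\beta_j=\tfrac14\partial_j\triangle r^2$ recovers $-\tfrac14\triangle^2 r^2$, and taking $\beta_j=(\partial_jr^2)V$ produces $-V\triangle r^2-g^{ij}(\partial_iV)(\partial_jr^2)$; after adding $\gamma_0=V\triangle r^2$ the scalar $V\triangle r^2$ cancels and the remainder is exactly $\mathrm{i}[V,A]$. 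Collecting the two contributions yields (\ref{eq:4a})--(\ref{eq:6a}) with $\gamma_i=(\partial_ir^2)V+\tfrac14\partial_i\triangle r^2$.

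For the first inequality in (\ref{eq:55}) I absorb the first-order cross term by a square completion: the nonnegativity of
\begin{equation*}
\bigl(\sqrt{\varepsilon}\,p_j-\mathrm{i}\varepsilon^{-1/2}\gamma_j\bigr)^{*}g^{ji}\bigl(\sqrt{\varepsilon}\,p_i-\mathrm{i}\varepsilon^{-1/2}\gamma_i\bigr)\ge 0
\end{equation*}
expands to $\gamma_ig^{ij}\partial_j+\partial_i^{*}g^{ij}\gamma_j\ge-\varepsilon p_i^{*}g^{ij}p_j-\varepsilon^{-1}\gamma_ig^{ij}\gamma_j$. Setting $\gamma_\varepsilon:=\gamma_0-\varepsilon^{-1}\gamma_ig^{ij}\gamma_j$ and using $|\partial r^2|_g^2=4r^2$ on $E$, Condition~\ref{cond:10.6.1.16.24} gives $|(\partial_\cdot r^2)V|_g^2=O(r^{-2\eta})$, Condition~\ref{cond:10.9.2.11.13} gives $|\partial_\cdot\triangle r^2|_g^2=O(r^{-1-2\kappa})$, and $\triangle r^2=2+2r\triangle r=O(r^{1/2-\kappa})$ combined with $|V|=O(r^{-1-\eta})$ gives $\gamma_0=O(r^{-1/2-\kappa-\eta})$; a routine comparison of exponents in the two cases $2\eta\le 1+2\kappa$ and $2\eta>1+2\kappa$ (using $\kappa<1/2$) then confirms $\gamma_\varepsilon=O(r^{-\min\{2\eta,1+2\kappa\}})$.

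The second inequality follows from Condition~\ref{cond:conv} via a smooth radial partition $1=\chi_{-,r_0,r_1}(r)+\chi_{r_0,r_1,+}(r)$. On the support of $\chi_{r_0,r_1,+}$ the Mourre bound gives $(\nabla^2 r^2)^{ij}-\varepsilon g^{ij}\ge(1+\delta-\varepsilon)g^{ij}$, so the localized quadratic form is bounded below by $(1+\delta-\varepsilon)\bigl(2H_0-2H_{r_1}\bigr)$; on the precompact region $\{r\le r_1\}$ the Hessian is pointwise bounded below as a matrix and contributes an additional $-C'H_{r_1}$. Combining both pieces yields the claimed $2(1+\delta-\varepsilon)H_0-CH_{r_1}$. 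The only real subtlety in the whole argument is the divergence-form rewriting needed to cast $\mathrm{i}[H,A]$ in the advertised $\gamma$-form; once that is done, the square-completion and the Mourre cutoff are entirely routine.
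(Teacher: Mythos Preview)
Your proof is correct and follows essentially the same route as the paper's. The only presentational difference is that the paper writes $\mathrm{i}[V,A]$ directly in the operator form $V(\partial_jr^2)g^{ij}\partial_i+\partial_i^{*}g^{ij}(\partial_jr^2)V+(\triangle r^2)V$, whereas you first compute it as the scalar $-(\partial_ir^2)g^{ij}(\partial_jV)$ and then invoke the divergence identity; both amount to the same computation, and your treatment actually makes explicit the step (recasting $-\tfrac14\triangle^2r^2$ in $\gamma$-form) that the paper leaves implicit.
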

\begin{comment}
  \begin{remark}\label{remark:mourre-estimate-1}
  Clearly $\gamma\in C^0(\mathbb{R}^d)$ is a real multiplier satisfying
\begin{align*}
|\partial_x \gamma(x)|\le C_R\langle x\rangle^{-2}.
\end{align*} 
\end{remark}
\end{comment}
\proof
The equations (\ref{eq:4a})--(\ref{eq:6a}) follow from Lemma~\ref{prop:09.12.8.11.10}
and
\begin{align*}
\mathrm{i}[V,A]=
V(\partial_jr^2)g^{ij}\partial_i+\partial_i^*g^{ij}(\partial_jr^2)V+(\triangle r^2)V.
\end{align*}
If we use (\ref{eq:3}) and 
\begin{align*}
\gamma_ig^{ij}\partial_j+\partial_i^*g^{ij}\gamma_j+\gamma_0
\ge -\varepsilon \partial_i^*g^{ij}\partial_j +\gamma_\varepsilon;\;\gamma_\varepsilon:=- \varepsilon^{-1}g^{ij}\gamma_i\gamma_j+\gamma_0,
\end{align*}
then the latter assertion of the corollary follows. Note 
that indeed since 
$|\partial_r \triangle r^2| \leq C\inp{r}^{-1/2-\kappa}$ we obtain by 
integrating  in $r$  that $\triangle r^2 =O(r^{1/2-\kappa})$.
\endproof
\begin{corollary}\label{cor:09.11.29.16.7}
As a quadratic form on $C_{\mathrm{c}}^\infty(M)$,
\begin{align*}
\mathrm{D}_{H_0}W
={}&-\tfrac{1}{2t}(p_i-\partial_i K)^*
(\nabla^2r^2)^{ij}(p_j-\partial_jK)
+\tilde{\gamma}_i^*g^{ij}(p_j-\partial_j K)
+(p_i-\partial_i K)^*g^{ij}\tilde{\gamma}_j;\\
\tilde{\gamma}_i={}&\tfrac{\mathrm{i}}{8t}(\partial_i\triangle r^2)
-\tfrac{1}{2}(\partial_i\alpha).
\end{align*}
\end{corollary}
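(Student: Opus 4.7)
My plan is to decompose $W(t)$ into pieces whose Heisenberg derivative can be computed termwise using Lemma~\ref{prop:09.12.8.11.10}. Writing, as symmetric operators on $C_{\mathrm c}^\infty(M)$,
\[
W = H_0 - A_K + \beta,\qquad A_K:=\mathrm i[H_0,K],\qquad \beta:=\tfrac{1}{2}g^{ij}(\partial_iK)(\partial_jK),
\]
I would expand
\[
\mathrm D_{H_0}W = -\partial_tA_K+\partial_t\beta-\mathrm i[H_0,A_K]+\mathrm i[H_0,\beta].
\]
Lemma~\ref{prop:09.12.8.11.10} applied with $\phi=K$ yields $-\mathrm i[H_0,A_K]=-p_i^*(\nabla^2K)^{ij}p_j+\tfrac14\triangle^2K$, and with $\phi=\beta$ gives $\mathrm i[H_0,\beta]=A_\beta$. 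Since $\alpha$ is defined so that $\partial_tK=\alpha-\beta$, it follows that $\partial_tA_K=A_\alpha-A_\beta$, whence
\[
\mathrm D_{H_0}W=-p_i^*(\nabla^2K)^{ij}p_j+2A_\beta-A_\alpha+\tfrac14\triangle^2K+\partial_t\beta.
\]

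The decisive step is to recognise the first two terms as a perfect square. From \eqref{09.11.29.3.18} one has $\partial_i\beta=(\nabla^2K)_i{}^l(\partial_lK)$, so that
\[
2A_\beta=(\partial_iK)(\nabla^2K)^{ij}p_j+p_i^*(\nabla^2K)^{ij}(\partial_jK).
\]
Using $(\nabla^2K)^{ij}=(2t)^{-1}(\nabla^2r^2)^{ij}$ and completing the square one finds
\[
-p_i^*(\nabla^2K)^{ij}p_j+2A_\beta=-\tfrac{1}{2t}(p_i-\partial_iK)^*(\nabla^2r^2)^{ij}(p_j-\partial_jK)+(\partial_iK)(\nabla^2K)^{ij}(\partial_jK),
\]
which is the principal term of the claim up to a multiplicative leftover to be absorbed below.

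It then remains to reorganise the residual $-A_\alpha+\tfrac14\triangle^2K+\partial_t\beta+(\partial_iK)(\nabla^2K)^{ij}(\partial_jK)$ into the symmetric form involving $\tilde\gamma$. In $-A_\alpha$ I would substitute $p_j=(p_j-\partial_jK)+\partial_jK$ (and similarly on the left), producing the $-\tfrac12(\partial_i\alpha)$ contribution to $\tilde\gamma_i$ together with a symmetric multiplicative leftover $-(\partial_iK)g^{ij}(\partial_j\alpha)$. A direct time-differentiation then yields $\partial_t\beta=(\partial_i\alpha)g^{ij}(\partial_jK)-(\partial_iK)(\nabla^2K)^{ij}(\partial_jK)$, which cancels both this leftover and the multiplicative remainder from the previous paragraph. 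Finally, I would rewrite $\tfrac14\triangle^2K=\tfrac1{8t}\triangle^2r^2$ via the operator identity $p_i^*g^{ij}(\partial_j\triangle r^2)-(\partial_i\triangle r^2)g^{ij}p_j=-\mathrm i\triangle^2r^2$, a one-line consequence of $-\triangle=(\det g)^{-1/2}\partial_i\bigl((\det g)^{1/2}g^{ij}\partial_j\,\cdot\,\bigr)$. After once more splitting $p_j=(p_j-\partial_jK)+\partial_jK$ (the $\partial_jK$ pieces cancelling by the symmetry of $g^{ij}$), this contributes precisely the $\tfrac{\mathrm i}{8t}(\partial_i\triangle r^2)$ part of $\tilde\gamma$. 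The main obstacle is the signed bookkeeping of these several cross terms; each individual cancellation is a short verification once \eqref{09.11.29.3.18} and the explicit formula for $\partial_t\beta$ are in hand.
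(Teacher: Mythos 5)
Your proposal is correct and takes essentially the same route as the paper's own proof: both reduce to the splitting $W=H_0-\tfrac{1}{2t}A+\tfrac12 g^{ij}(\partial_iK)(\partial_jK)$ (your $A_K$ is exactly $\tfrac{1}{2t}A$), invoke Lemma~\ref{prop:09.12.8.11.10} for the commutator, and use \eqref{09.11.29.3.18} to complete the square, with the remaining cross terms cancelling exactly as you describe.
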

\proof
We have  
\begin{align}
\mathrm{D}_{H_0}W=\tfrac{\mathrm{d}}{\mathrm{d}t}W+\mathrm{i}[H_0,W].\label{eq:10.6.4.12.24}
\end{align}
For the first term substitute
\begin{align*}
W=H_0-\tfrac{1}{2}(\partial_iK)g^{ij}p_j-\tfrac{1}{2}p_i^*g^{ij}(\partial_jK)+\tfrac{1}{2}g^{ij}(\partial_iK)(\partial_jK),
\end{align*}
and then we obtain
\begin{align}
\begin{split}
&\tfrac{\mathrm{d}}{\mathrm{d}t}W
=\tfrac{1}{4}(\partial_ig^{kl}(\partial_kK)(\partial_l K))g^{ij}p_j
+\tfrac{1}{4}p_i^*g^{ij}(\partial_jg^{kl}(\partial_kK)(\partial_l K))\\
&-\tfrac{1}{2}g^{ij}(\partial_iK)(\partial_j g^{kl}(\partial_kK)(\partial_lK))
-\tfrac{1}{2}(\partial_i\alpha)g^{ij}p_j
-\tfrac{1}{2}p_i^*g^{ij}(\partial_j\alpha)+g^{ij}(\partial_iK)(\partial_j\alpha).
\end{split}
\label{eq:11.5.5.5.10}
\end{align}
For the second term of (\ref{eq:10.6.4.12.24}) we substitute
\begin{align*}
W=H_0-\tfrac{1}{2t}A+\tfrac{1}{2}g^{ij}(\partial_iK)(\partial_jK),
\end{align*}
and then by Lemma~\ref{prop:09.12.8.11.10}
\begin{align}
\begin{split}
\mathrm{i}[H_0,W]
=&-\tfrac{1}{2t}p_i^*(\nabla^2r^2)^{ij}p_j
-\tfrac{\mathrm{i}}{8t}(\partial_i\triangle r^2)g^{ij}p_j
+\tfrac{\mathrm{i}}{8t}p_i^*g^{ij}(\partial_j\triangle r^2)\\
&+\tfrac{1}{4}(\partial_i g^{kl}(\partial_kK)(\partial_lK))g^{ij}p_j
+\tfrac{1}{4}p_i^*g^{ij}(\partial_jg^{kl}(\partial_kK)(\partial_lK)).
\end{split}
\label{eq:11.5.5.5.11}
\end{align}
Noting the equation, cf. \eqref{09.11.29.3.18},
\begin{align}
(\partial_j g^{kl}(\partial_kK)(\partial_lK))
=2g^{kl}(\nabla K)^2_{jk}(\partial_lK),
\label{eq:11.5.6.8.6}
\end{align}
we obtain the assertion from  (\ref{eq:10.6.4.12.24})--(\ref{eq:11.5.5.5.11}).
\begin{comment}
Since 
\begin{align*}
G=H_0-W,&&
W=H_0 -\tfrac{1}{2t}A+\tfrac{1}{2}g^{ij}(\partial_i K)(\partial_jK),
\end{align*}
we have
\begin{align}
\mathrm{D}_GW=\tfrac{\mathrm{d}}{\mathrm{d}t}W-\tfrac{\mathrm{i}}{2t}[H_0,A]
+\tfrac{\mathrm{i}}{2}[H_0 ,g^{ij}(\partial_i K)(\partial_jK)].
\label{eq:10.6.4.12.24}
\end{align}
We compute the first term by using (\ref{eq:9.12.19.054}) and (\ref{09.11.29.3.18}):
\begin{align*}
\tfrac{\mathrm{d}}{\mathrm{d}t}W
=\tfrac{1}{4t}(\partial_i K)(\nabla^2r^2)^{ij}(p_j-\partial_j K)
+\tfrac{1}{4t}(p_i-\partial_i K)^*(\nabla^2r^2)^{ij}(\partial_j K)
\end{align*}
The second term in (\ref{eq:10.6.4.12.24}) is already computed in Proposition~\ref{prop:09.12.8.11.10}:
\begin{align*}
-\tfrac{\mathrm{i}}{2t}[H_0, A]
&{}=-\tfrac{1}{2t}p_i^*(\nabla^2r^2)^{ij}p_j
-\tfrac{1}{8t}(\partial_i\triangle r^2)g^{ij}\partial_j
-\tfrac{1}{8t}\partial_i^*g^{ij}(\partial_j\triangle r^2).
\end{align*}
For the third term we use (\ref{09.11.29.3.18}) again, and obtain 
\begin{align*}
\tfrac{\mathrm{i}}{2}[H_0,g^{ij}(\partial_i K)(\partial_jK)]
=\tfrac{1}{4t}(\partial_i K)(\nabla^2r^2)^{ij}p_j
+\tfrac{1}{4t}p_i^*(\nabla^2r^2)^{ij}(\partial_j K).
\end{align*}
Thus, summing up the above formulas, we obtain the assertion.
\end{comment}
\endproof

Introduce the ``radial momentum''  (the name of this operator is
justified by its 
action  on functions supported in $E$)
%(acting on $C^\infty(M\setminus\{o\})$)
\begin{equation}\label{eq:29}
  p_r=(\partial_kr)g^{kl}p_l.
\end{equation} 
\begin{lemma} 
  \label{lem:comm-comp}  For any  real-valued 
    $\chi\in C^\infty(\R)$  with $\chi'\in C^\infty_{\mathrm{c}}(\R_+)$
    define
  $p_\chi=\chi(r)p_r$. Then as quadratic forms on
  $\vD \parb{H_0^{1/2}}$
  \begin{subequations}
\begin{align}
  p_\chi^*&=p_\chi-\i(\chi(r)\triangle r+\chi'(r)),\label{eq:4}\\
 p_\chi p_\chi^*&=p_\chi^*p_\chi +\tilde\chi;\;\tilde\chi=\tilde\chi(x):=
 -\chi\parb{\partial_r(\chi(r)\triangle r+\chi'(r))},\label{eq:5}\\
p_\chi^*p_\chi &\leq 2\sup \chi^2 \,H_0,\label{eq:15}\\
p_\chi p_\chi^* &\leq 2\sup \chi^2 \,H_0 +\sup \tilde \chi.\label{eq:16}
\end{align}
  \end{subequations}
\end{lemma}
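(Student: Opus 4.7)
\proof[Proof proposal for Lemma \ref{lem:comm-comp}]
The plan is to first establish the identity \eqref{eq:4} by a direct calculation of the formal adjoint of $p_r=(\partial_kr)g^{kl}p_l$, then use it as a building block for the remaining statements.

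First I would observe that since the measure on $\vH$ is $(\det g)^{1/2}\d x$ and $p_l^*=(\det g)^{-1/2}p_l(\det g)^{1/2}$, the adjoint of $p_r$ acts as
\begin{equation*}
  p_r^*u=p_ru-\i(\det g)^{-1/2}\partial_l\bigl[(\partial_k r)g^{kl}(\det g)^{1/2}\bigr]u=p_ru-\i(\triangle r)u,
\end{equation*}
where I used the standard divergence-form expression for $\triangle r$. Multiplying by $\chi(r)$ on the right gives
$p_\chi^*=p_r^*\chi(r)=\chi(r)p_r+[p_r,\chi(r)]-\i\chi(r)\triangle r$, and the commutator evaluates to $[p_r,\chi(r)]=-\i\chi'(r)|\nabla r|_g^2$. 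Since $\chi'\in C^\infty_{\mathrm c}(\R_+)$, the factor $\chi'(r)$ is supported where $r>0$, i.e.\ inside $E$, and on $E$ the eikonal equation \eqref{eq:3} gives $|\nabla r|_g^2=1$. This yields \eqref{eq:4}.

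Next, writing $\beta:=\chi(r)\triangle r+\chi'(r)$, so that $p_\chi^*=p_\chi-\i\beta$, I would compute
\begin{equation*}
  p_\chi p_\chi^*-p_\chi^*p_\chi=\i[\beta,p_\chi^*]=-\i[p_\chi,\beta]=-\chi(r)\partial_r\beta,
\end{equation*}
where $\partial_r:=(\partial_kr)g^{kl}\partial_l$ denotes the directional derivative along $\grad r$; this is exactly the $\tilde\chi$ of \eqref{eq:5}.

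For \eqref{eq:15} I would apply the Cauchy--Schwarz inequality with respect to the fiber metric $g$: pointwise
\begin{equation*}
  |(\partial_k r)g^{kl}\partial_lu|^2=\bigl|g(\grad r,\grad u)\bigr|^2\le|\nabla r|_g^2\,|\grad u|_g^2\le|\grad u|_g^2,
\end{equation*}
the last inequality using $|\nabla r|_g\le 1$ on all of $M$ (on $E$ by the eikonal equation, on $O$ by the choice of the extension made in the Introduction). Integrating and recognizing $\int|\grad u|_g^2(\det g)^{1/2}\d x=\langle u,(p_i^*g^{ij}p_j)u\rangle=2\langle u,H_0u\rangle$ gives $\|p_\chi u\|^2\le 2\sup\chi^2\,\langle u,H_0u\rangle$, which is \eqref{eq:15}. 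Finally \eqref{eq:16} is immediate from combining \eqref{eq:5} and \eqref{eq:15}. The main issue to handle carefully is the support consideration for $\chi'$ together with the pointwise bound $|\nabla r|_g\le 1$; once these are in place the remaining computations are routine.
\endproof
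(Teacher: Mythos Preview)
Your proof is correct and follows essentially the same route as the paper's: compute the adjoint directly (you factor through $p_r^*=p_r-\i\triangle r$ and then bring in $\chi(r)$, while the paper computes $p_\chi^*$ in one step, but this is an immaterial difference), use the support condition on $\chi'$ together with the eikonal equation for the $\chi'(r)$ term, obtain \eqref{eq:5} by commuting the function $\beta$ through, and get \eqref{eq:15} from the pointwise Cauchy--Schwarz bound and $|\nabla r|_g\le 1$. The paper's proof says exactly this, only more tersely.
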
\begin{proof}
 Compute $p_\chi^*=p_\chi+\i\parb{\partial_i^*g^{ij}(\partial_j
   r)\chi(r)}=p_\chi-\i(\chi(r)\triangle r+\chi'(r))$ yielding \eqref{eq:4}. We obtain
 \eqref{eq:5} from \eqref{eq:4} by inserting and commuting  through. The  
 estimate  \eqref{eq:15} follows from the Cauchy Schwarz inequality
 and the fact that $|\nabla r|\le 1$. The 
 estimate \eqref{eq:16} follows from \eqref{eq:5} and
 \eqref{eq:15}.
\end{proof}

In the proof of Lemma \ref{lemma:key_prel-local-pert}  we need the
following technical result which involves the construction $\tilde G$
of Subsection \ref{sec:10.6.8.4.0} given in terms of any $\epsilon\in (0,1)$.
\begin{lemma}
  \label{lemma:comm-comptilkde} There exists
$\epsilon'=\epsilon'(\epsilon, \kappa,\eta)>0$  such that as a quadratic form on
  $C_{\mathrm{c}}^\infty(M)$ 
\begin{align*}
\mathrm{D}_{H}\tilde G
&\geq \tfrac{1}{2t}(p_i-\partial_i K)^*
f'(\cdot)(\nabla^2r^2)^{ij}(p_j-\partial_jK) -Ct^{-\epsilon'-1}H+O(t^{-\epsilon'-1}).
\end{align*}
\end{lemma}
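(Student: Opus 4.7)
The plan is to parallel Corollary \ref{cor:09.11.29.16.7}, substituting $r^2\leftrightarrow \tilde r^2$ and $K\leftrightarrow \tilde K$, and to track the fact that $\tilde K$ no longer satisfies the Hamilton--Jacobi equation on $E$. Writing $\tilde G = H_0 - \tilde W$ with $\tilde W := \tfrac12(p_i-\partial_i\tilde K)^*g^{ij}(p_j-\partial_j\tilde K)$ gives
\begin{equation*}
\mathrm{D}_H\tilde G = -\mathrm{D}_{H_0}\tilde W + \mathrm{i}[V,\tilde G],
\end{equation*}
and I compute $\mathrm{D}_{H_0}\tilde W$ by the same steps used to establish Corollary \ref{cor:09.11.29.16.7}: decompose $\tilde W = H_0 - A_{\tilde K} + \tfrac12 g^{ij}(\partial_i\tilde K)(\partial_j\tilde K)$ with $A_{\tilde K} := \mathrm{i}[H_0,\tilde K]$, and apply Lemma \ref{prop:09.12.8.11.10} with $\phi = \tilde K$. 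This yields
\begin{equation*}
-\mathrm{D}_{H_0}\tilde W = \tfrac{1}{2t}(p_i-\partial_i\tilde K)^*(\nabla^2\tilde r^2)^{ij}(p_j-\partial_j\tilde K) - \tilde\gamma_i^*g^{ij}(p_j-\partial_j\tilde K) - (p_i-\partial_i\tilde K)^*g^{ij}\tilde\gamma_j,
\end{equation*}
with $\tilde\gamma_i = \tfrac{\mathrm{i}}{8t}\partial_i\triangle\tilde r^2 - \tfrac12\partial_i\tilde\alpha$ and the Hamilton--Jacobi defect $\tilde\alpha := \partial_t\tilde K + \tfrac12 g^{ij}(\partial_i\tilde K)(\partial_j\tilde K)$.

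For the principal term, the chain-rule identity
\begin{equation*}
(\nabla^2\tilde r^2)_{ij} = f'(\cdot)(\nabla^2 r^2)_{ij} + t^{2\epsilon-2}f''(\cdot)(\partial_i r^2)(\partial_j r^2),
\end{equation*}
combined with $f''\geq 0$, lets me drop the $f''$-piece in the lower bound. To exchange $\partial\tilde K$ for $\partial K$ I use $\partial_i\tilde K = f'(\cdot)\partial_i K$, so that $(p_i-\partial_i\tilde K) = (p_i-\partial_i K) + (1-f'(\cdot))\partial_i K$ where the correction is supported on $\{r\leq \sqrt 2\,t^{1-\epsilon}\}$ and has norm $O(t^{-\epsilon})$ there (since $|\partial K|\sim r/t$). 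The resulting cross terms are absorbed by Cauchy--Schwarz together with $p_i^*g^{ij}p_j\leq 2H_0$ and the operator bound $f'(\cdot)(\nabla^2 r^2) \lesssim t^{(1-\epsilon)(1/2-\kappa)}g$ on the support of $f'(t^{2\epsilon-2}r^2)$ (this stems from $\triangle r^2 = O(r^{1/2-\kappa})$ combined with Condition~\ref{cond:conv}, which forces each eigenvalue of $\nabla^2 r^2$ relative to $g$ to be bounded by the trace).

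The error terms are then estimated. Condition~\ref{cond:10.9.2.11.13} gives $|\partial\triangle \tilde r^2| \lesssim r^{-1/2-\kappa}$ on $\supp f'(t^{2\epsilon-2}r^2)$, plus manageable $f''$-supported contributions, yielding a $\tilde\gamma$-contribution controlled by $Ct^{-\epsilon'-1}H + O(t^{-\epsilon'-1})$ after Cauchy--Schwarz. The defect $\tilde\alpha$ vanishes on $E\cap\{r^2 > 2t^{2-2\epsilon}\}$ where $\tilde r = r$ and \eqref{eq:9.12.19.054} applies, while on the complementary region a direct computation from $\tilde r^2 = t^{2-2\epsilon}f(t^{2\epsilon-2}r^2)$ produces $|\tilde\alpha|$ and $|\partial\tilde\alpha|$ of comparably acceptable order (the cutoff factor $r\lesssim t^{1-\epsilon}$ provides the gain). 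For $\mathrm{i}[V,\tilde G] = \mathrm{i}[V,H_0] - \mathrm{i}[V,\tilde W]$, Condition~\ref{cond:10.6.1.16.24} together with integration by parts in the quadratic-form sense supplies the bound $O(t^{-1-\eta})H + O(t^{-1-\eta})$ via $|V|\leq Cr^{-1-\eta}$. Choosing $\epsilon' = \epsilon'(\epsilon,\kappa,\eta) > 0$ sufficiently small then delivers the claim. The principal obstacle is maintaining genuine operator inequalities throughout: each pointwise-small factor must be extractable from $(p-\partial\tilde K)$-sandwiches without overcounting contributions to $H$, which requires careful Cauchy--Schwarz bookkeeping so that the Mourre-type structure of the principal term survives.
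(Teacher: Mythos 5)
Your overall skeleton is the same as the paper's: the paper expands $\mathrm{D}_H\tilde G=\tfrac{\d}{\d t}\tilde G+\tfrac{\i}{2t}[H,\tilde A]-\tfrac{\i}{2}[H_0,g^{ij}(\partial_i\tilde K)(\partial_j\tilde K)]$, which is just your $-\mathrm{D}_{H_0}\tilde W+\i[V,\tilde G]$ reorganized, and your analogue of Corollary \ref{cor:09.11.29.16.7} with $K\to\tilde K$, $\alpha\to\tilde\alpha$ is algebraically correct (the extra $t$-dependence of $f$ is correctly swept into $\tilde\alpha$, and only $\partial_i\tilde\alpha=O(t^{-1-\epsilon})$, supported on the annulus $\{1/2\le t^{2\epsilon-2}r^2\le 2\}$, enters $\tilde\gamma$). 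The treatment of $\i[V,\tilde G]$ is fine once the second-order parts of $H_0$ and $\tilde W$ are cancelled, leaving the first-order commutator $\tfrac{\i}{2t}[V,\tilde A]$.

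The genuine gap is in your exchange of $\partial\tilde K$ for $\partial K$ in the principal term. Writing $p-\partial\tilde K=(p-\partial K)+v$ with $v=(1-f'(\cdot))\partial K$ and $M=f'(\cdot)(\nabla^2r^2)$, the cross terms are $\tfrac{1}{2t}\Re\parb{v_iM^{ij}(p_j-\partial_jK)}$, and you propose to absorb them by Cauchy--Schwarz using the trace bound $M\lesssim t^{(1-\epsilon)(1/2-\kappa)}g$. This does not close. Cauchy--Schwarz with respect to $M$ (or to $g$) produces a deficit $\tfrac{\lambda}{2t}(p-\partial K)^*M(p-\partial K)$ and a term $\tfrac{1}{2t\lambda}v^*Mv$; with only the trace bound, $v^*Mv\lesssim t^{-2\epsilon+(1-\epsilon)(1/2-\kappa)}$ and the deficit carries the unbounded weights $r^{1/2-\kappa}$ and $|\nabla K|^2=r^2/t^2$, so it cannot be dominated by $Ct^{-1-\epsilon'}H+O(t^{-1-\epsilon'})$; moreover the exponent constraints force $(1-\epsilon)(1/2-\kappa)<\epsilon$, which fails for small $\epsilon$, whereas the lemma is asserted for every $\epsilon\in(0,1)$. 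The missing ingredient is the exact radial contraction coming from \eqref{eq:17} and the eikonal equation \eqref{eq:3}: on $E$ one has $(\partial_ir^2)(\nabla^2r^2)^{ij}=4r(\nabla r)^j$, i.e.\ $\nabla^2r^2(\nabla r,\cdot)=2\,\d r$, so that
\begin{equation*}
v_iM^{ij}(p_j-\partial_jK)=2(1-f'(\cdot))f'(\cdot)\tfrac{r}{t}\parb{p_r-\tfrac{r}{t}}=O(t^{-\epsilon})\,p_r+O(t^{-2\epsilon}),
\end{equation*}
supported on the annulus. After multiplying by $\tfrac{1}{2t}$ these are first-order terms with coefficient $O(t^{-1-\epsilon})$, which \eqref{eq:15} converts into $-Ct^{-1-\epsilon'}H_0+O(t^{-1-\epsilon'})$. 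This identity (used in the paper in the step $(\partial_iK)f'(\cdot)^2(\nabla^2K)^{ij}p_j=\tfrac{r}{t^2}f'(\cdot)^2p_r$) is what keeps every $\partial K$-contracted occurrence of $\nabla^2r^2$ from seeing the $r^{1/2-\kappa}$ growth; the eigenvalue-by-trace bound \eqref{eq:53} is only adequate where you use it for the $\tilde\gamma$-errors, not for the cross terms.
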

\begin{proof}
  We proceed by computing,  mimicking the proof of Corollaries 
  \ref{cor:10.10.13.15.00} and \ref{cor:09.11.29.16.7},
\begin{align*}
\mathrm{D}_H\tilde G&=\tfrac{\mathrm{d}}{\mathrm{d}t}\tilde
G+\tfrac{\mathrm{i}}{2t}[H,\tilde A]
-\tfrac{\mathrm{i}}{2}[H_0 ,g^{ij}(\partial_i \tilde
K)(\partial_j\tilde K)].
\end{align*}
By (\ref{eq:11.5.6.8.6}) and $(\partial_i\tilde{K})=f'(\cdot)(\partial_iK)$
\begin{align*}
(\partial_t\partial_i\tilde{K})
&=-f'(\cdot)g^{kl}(\nabla^2K)_{ik}(\partial_lK)
+f'(\cdot)(\partial_i \alpha)
+(2\epsilon-2)t^{2\epsilon-3}r^2f''(\cdot)(\partial_iK),
\end{align*}
so that we obtain
\begin{align*}
\tfrac{\mathrm{d}}{\mathrm{d}t}\tilde G
&=-\tfrac{1}{2}(\partial_iK)f'(\cdot)(\nabla^2K)^{ij}(p_j-\partial_j\tilde{K})\\
&\phantom{{}={}}+\tfrac{1}{2}\bigl\{f'(\cdot)(\partial_i \alpha)+(2\epsilon-2)t^{2\epsilon-3}r^2f''(\cdot)(\partial_iK)\bigr\}g^{ij}(p_j-\partial_j\tilde{K})+\mathrm{h.c.}\\
&=-\tfrac{1}{2}(\partial_iK)f'(\cdot)(\nabla^2K)^{ij}(p_j-\partial_jK)\\
&\phantom{{}={}}
-\tfrac{1}{2}f'(\cdot)(\partial_i\alpha)g^{ij}p_j+O(t^{-\epsilon-1})p_r+O(t^{-2\epsilon-1})+\mathrm{h.c.}
\end{align*}
Upon replacing $r^2$ by $\tilde{K}$ in Corollary~\ref{cor:10.10.13.15.00}, we have
\begin{align*}
\tfrac{\mathrm{i}}{2t}[H,\tilde{A}]
={}&p_i^*(\nabla^2\tilde{K})^{ij}p_j
-2\mathop{\mathrm{Im}}\bigl\{\bigl((\partial_i\tilde{K})V+\tfrac{1}{4}(\partial_i\triangle \tilde{K})\bigr)g^{ij}p_j\bigr\}
+(\triangle \tilde{K})V\\
\ge{}& p_i^*f'(\cdot)(\nabla^2K)^{ij}p_j
+\mathop{\mathrm{Im}}\bigl\{O(t^{\max\{-(1-\epsilon)\eta-1,\epsilon-2,-(1/2+\kappa)(1-\epsilon)-1\}})p_r\bigr\}\\
&-\tfrac{1}{2}\mathop{\mathrm{Im}}\bigl\{f'(\cdot) (\partial_i\triangle K)g^{ij}p_j\bigr\}
+O(t^{-(1-\epsilon)(1/2+\kappa+\eta)-1}),
\end{align*}
where in the last step we used the inequality for matrices:
\begin{align*}
 (\nabla^2\tilde r^2)_{ij}&
 = f'(\cdot)(\nabla^2r^2)_{ij}
 +t^{2\epsilon-2}f''(\cdot)(\partial_ir^2)(\partial_jr^2)\geq f'(\cdot)(\nabla^2r^2)_{ij}.
\end{align*}
By (\ref{eq:11.5.6.8.6}) and $(\partial_i\tilde{K})=f'(\cdot)(\partial_iK)$ again
\begin{align*}
-\tfrac{\mathrm{i}}{2}[H_0 ,g^{ij}(\partial_i \tilde K)(\partial_j\tilde K)]
&=-\tfrac{1}{4}(\partial_i f'(\cdot)^2g^{kl}(\partial_k K)(\partial_l K))g^{ij}p_j+\mathrm{h.c.}\\
&=-\tfrac{1}{2} (\partial_i K)f'(\cdot)^2(\nabla^2 K)^{ij}p_j
+O(t^{-\epsilon-1})p_r+\mathrm{h.c.}\\
&=-\tfrac{1}{2}(\partial_i K)f'(\cdot)(\nabla^2 K)^{ij}p_j
+O(t^{-\epsilon-1})p_r+\mathrm{h.c.},
\end{align*}
where we used that for all large $t$ the function $f'(\cdot)$ is supported
in $E$ and whence
\begin{align*}
(\partial_i K)f'(\cdot)^2(\nabla^2K)^{ij}p_j
=\tfrac{r}{t^2} f'(\cdot)^2p_r
=(\partial_i K)(f'(\cdot)\nabla^2K)^{ij}p_j+O(t^{-\epsilon-1})p_r.
\end{align*}
  We  sum and  obtain the assertion.
\end{proof}
\subsection{Further commutator computations}\label{sec:Further
  commutator computations}
In this subsection we collect some further preliminary commutator bounds.

\begin{lemma}
  \label{lemma:furth-comm-comp1} Let $\epsilon>0$ and $0<c<d< a<b$ be
  given. Then uniformly in $t, N\geq 1$
  \begin{equation}
    \label{eq:28}
    \|B\chi_{-,c,d}(r/t)\parb{I+t^{2-2\epsilon}N^{-1}H_0}^{-1}\chi_{a,b,+}(r/t)\|\leq
    C_n\parb{t^{\epsilon }N^{1/2}}^{-n},
  \end{equation} where either $B=B_1=I$ or $B=B_2=t^{1-\epsilon}N^{-1/2}p_r$
  (with $p_r$ given by \eqref{eq:29}) and $n\in \N\cup\{0\}$.
  \end{lemma}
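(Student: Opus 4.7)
Writing $s := t^{1-\epsilon}N^{-1/2}$, so that $T:=I + t^{2-2\epsilon}N^{-1}H_0 = I + s^2 H_0$, the target bound reads $\|B\chi_1 T^{-1}\chi_2\|\leq C_n(s/t)^n$. The supports of $\chi_1:=\chi_{-,c,d}(r/t)$ and $\chi_2:=\chi_{a,b,+}(r/t)$ are disjoint, separated in geodesic distance by at least $(a-d)t$. My plan is to exploit this disjoint-support structure via iterated commutators, using three elementary a priori bounds: $\|T^{-1}\|\leq 1$, $\|sH_0^{1/2}T^{-1/2}\|\leq 1$ (by the spectral theorem), and $\|sp_r T^{-1/2}\|\leq \sqrt 2$ (from $p_r^*p_r\leq 2H_0$, a consequence of $|\nabla r|\leq 1$, cf.\ Lemma \ref{lem:comm-comp}), together with the commutator identity
\begin{equation*}
[H_0,\chi(r/t)] = -\tfrac{i}{2t}\bigl(\chi'(r/t)\,p_r + p_r^*\,\chi'(r/t)\bigr).
\end{equation*}

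I would proceed by induction on $n$. The base case $n=0$ reduces immediately to the bounds above: for $B=I$, $\|\chi_1 T^{-1}\chi_2\|\leq 1$; for $B=sp_r$, write $sp_r\chi_1 = \chi_1 sp_r + s[p_r,\chi_1]$ and estimate using $\|\chi_1 sp_r T^{-1/2}\|\leq\sqrt 2$ and $\|s[p_r,\chi_1]\|=O(s/t)\leq 1$. For the inductive step, the key is the identity
\begin{equation*}
\chi_1 T^{-1}\chi_2 = -s^2\,\chi_1 T^{-1}[H_0,\chi_2]\,T^{-1},
\end{equation*}
valid because $\chi_1\chi_2 = 0$. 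Introducing an auxiliary cutoff $\chi_2^{(1)} = \chi_{a',b',+}(r/t)$ with $d<a'<b'\leq a$ chosen so that $\chi_2^{(1)}\equiv 1$ on $\supp\chi_2'$, I would factor the commutator as $s^2[H_0,\chi_2] = (s/t)\,\Xi\,\chi_2^{(1)}$, where $\Xi$ is a first-order operator with $\|\Xi\,\chi_2^{(1)} T^{-1/2}\|\leq C$ (absorbing $sp_r$ via $\|sp_r T^{-1/2}\|\leq\sqrt 2$). This produces
\begin{equation*}
\|\chi_1 T^{-1}\chi_2\|\leq C(s/t)\,\|\chi_1 T^{-1}\chi_2^{(1)}\| + \text{(controllable remainders)},
\end{equation*}
and iterating $n$ times with a nested sequence $\chi_2 = \chi_2^{(0)},\chi_2^{(1)},\ldots,\chi_2^{(n)}$ (all of whose supports remain disjoint from $\supp\chi_1$, and with $\chi_2^{(k+1)}\equiv 1$ on $\supp(\chi_2^{(k)})'$) delivers $C_n(s/t)^n$. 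The $B=sp_r$ case reduces similarly after commuting $sp_r$ past $\chi_1$ and noting that $\supp\chi_1'$ remains disjoint from $\supp\chi_2$, so the commutator contribution is handled by the $B=I$ case with $\chi_1'$ in place of $\chi_1$.

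The delicate part will be controlling the subleading corrections arising in $s^2[H_0,\chi_2^{(k)}]$ through the iteration: the contributions of size $(s/t)^2$ from $\chi''|\nabla r|^2/t^2$ are harmless, but the terms involving $\triangle r$ (hidden in $p_r^*$ via $p_r^* = p_r - i\triangle r$) demand the decay $|\triangle r|=O(r^{-1/2-\kappa})$ from \eqref{eq:49} combined with the support restriction $r\geq at$ on $\supp\chi_2'$; I must check that these contribute $O(s/t)$-size remainders uniformly in $t,N\geq 1$, so that they do not spoil the inductive decay rate.
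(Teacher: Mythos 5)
Your overall strategy is the same as the paper's: set $T=I+s^2H_0$ with $s=t^{1-\epsilon}N^{-1/2}$, exploit the disjoint supports of $\chi_{-,c,d}(r/t)$ and $\chi_{a,b,+}(r/t)$ to rewrite the quantity as $s^2\,(\text{resolvent})\,[H_0,\chi]\,(\text{resolvent})$, insert a nested intermediate cutoff, and induct on $n$, absorbing momentum factors via $\|sp_rT^{-1/2}\|\le\sqrt2$. The only structural difference is cosmetic: the paper commutes $T^{-1}$ through the \emph{left} cutoff and inserts $\chi_{-,e,f}$ with $d<e<f<a$, whereas you commute through the right one, which forces some extra adjoint bookkeeping for the $p_r^*$-half of the commutator.

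That bookkeeping is exactly where your plan, as stated in your last paragraph, would fail. If you split $p_r^*=p_r-\i\,\triangle r$ and control the resulting multiplier $s\,\chi_2'(r/t)\,\triangle r$ by the decay $|\triangle r|=O(r^{-1/2-\kappa})$ on $\{r\ge at\}$, you get $O\parb{s\,t^{-1/2-\kappa}}=O\parb{t^{1/2-\epsilon-\kappa}N^{-1/2}}$, which is \emph{not} uniformly bounded in $t,N\ge1$ when $\epsilon+\kappa<1/2$ (take $N=1$), and a fortiori does not reproduce the factor $s/t$ per inductive step. In fact the lemma requires no decay of $\triangle r$ whatsoever. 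The correct treatment --- and what the paper does --- is never to detach $\triangle r$ from $p_r^*$: keep the whole operator $sp_r^*$ glued to an adjacent resolvent and use $\|T^{-1}sp_r^*\|=\|sp_rT^{-1}\|\le\sqrt2$, i.e.\ the paper's bound \eqref{eq:34}, $\|B_iT^{-1}B_j^*\|\le2$. Concretely, in the term $\chi_1T^{-1}\,sp_r^*\chi_2'\,T^{-1}$ you should take adjoints, commute $\chi_2'$ past $sp_r$ (an $O(s/t)$ error), and invoke the $B=B_2$ case of the inductive hypothesis in its mirrored form, rather than estimating $\triangle r$ pointwise. With that correction your argument closes and is equivalent to the paper's.
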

  \begin{proof} Let $T=I+t^{2-2\epsilon}N^{-1}H_0$. For $n=0$ we note
    that
    \begin{equation*}
      [B_2,\chi_{-,c,d}(\cdot)]=-\i t^{-\varepsilon}N^{-1/2}\chi_{-,c,d}'(\cdot)
    \end{equation*} 
which obviously is bounded uniformly in $t\geq1$. 
 Moreover
 \begin{equation}\label{eq:31}
   \|\chi_{-,c,d}(\cdot)B_2T^{-1}\|\leq
   \|B_2T^{-1}\|\leq \sqrt{2}(\sup|\nabla r|)t^{1-\epsilon}N^{-1/2}\|H_0^{1/2}T^{-1}\|\leq 1/\sqrt{2}.
 \end{equation}  This proves \eqref{eq:28} for $n=0$.

For $n\geq 1$ we proceed by induction (using the freedom of using new 
localization functions) first computing
\begin{equation*}
  B\chi_{-,c,d}(\cdot)T^{-1}\chi_{a,b,+}(\cdot)=-\tfrac\i {2} t^{1-2\epsilon}N^{-1}BT^{-1}\parb{p_r^*\chi_{-,c,d}'(\cdot)+\chi_{-,c,d}'(\cdot)p_r}T^{-1}\chi_{a,b,+}(\cdot).
\end{equation*} Now we can freely introduce  a factor $\chi_{-,e,f}(\cdot)$ with
 $d<e<f< a$ in front of the last factor $ T^{-1}$ to the right. By
induction we have 
\begin{equation}
      \label{eq:30}
      \|B\chi_{-,e,f}(\cdot)T^{-1}\chi_{a,b,+}\|\leq C\parb{t^{\epsilon }N^{1/2}}^{-(n-1)} \text{ uniformly in }t,N\geq1.
    \end{equation}
Whence we are left with bounding
\begin{subequations}
  \begin{align}\label{eq:32}
  \|t^{1-2\epsilon}N^{-1}BT^{-1}p_r^*\chi_{-,c,d}'(\cdot)\|&\leq
    Ct^{-\epsilon }N^{-1/2} \text{ uniformly in }t,N\geq1.\\
\|t^{-\epsilon}N^{-1/2}BT^{-1}\chi_{-,c,d}'(\cdot)\|&\leq
Ct^{-\epsilon }N^{-1/2}  \text{ uniformly in }t,N\geq1.\label{eq:33}
\end{align} 
\end{subequations} Clearly \eqref{eq:32} and \eqref{eq:33} in turn follow  from the
following bound:
\begin{equation}
  \label{eq:34}
  \|B_iT^{-1}B_j^*\|\leq 2\;\;i,j\in\{1,2\}.
\end{equation} But as in \eqref{eq:31}
\begin{equation*}
  \|B_1T^{-1}B_j^*\|=\|T^{-1}B_j^*\|=\|B_jT^{-1}\|\leq 1,
\end{equation*} while 
\begin{equation*}
  \|B_2T^{-1}B_j^*\|\leq\|B_2T^{-1/2}\|\times \|T^{-1/2}B_j^*\|\leq 2.
\end{equation*} So indeed \eqref{eq:34} is shown, and the proof of the
lemma is complete.
\end{proof}
\begin{corollary}
  \label{cor:furth-comm-comp} For $\epsilon>0$, $\chi\in
  C^\infty(\R)$ with $\chi'\in C^\infty_{\mathrm{c}}(\R_+)$ and for $B$ given
  as in Lemma \ref{lemma:furth-comm-comp1} we have uniformly  in $t, N\geq 1$
  \begin{equation}
    \label{eq:28g}
    \|B[\chi(r/t),\parb{I+t^{2-2\epsilon}N^{-1}H_0}^{-1}]\|\leq
    Ct^{-\epsilon }.
  \end{equation}
\end{corollary}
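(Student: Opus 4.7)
My plan is to expand the commutator via the standard identity $[\chi(r/t), T^{-1}] = -T^{-1}[\chi(r/t), T]T^{-1}$ with $T := I + t^{2-2\epsilon}N^{-1}H_0$, and then reuse the operator-norm bounds established in the proof of Lemma~\ref{lemma:furth-comm-comp1}. Since $T-I$ is proportional to $H_0$, only $[H_0, \chi(r/t)]$ is relevant. Using $\partial_j(\chi(r/t)) = t^{-1}\chi'(r/t)\partial_j r$ in the definition of $A_\phi$ from Lemma~\ref{prop:09.12.8.11.10} yields the clean identity
\begin{equation*}
\i[H_0, \chi(r/t)] = \tfrac{1}{2t}\bigl(\chi'(r/t)p_r + p_r^*\chi'(r/t)\bigr),
\end{equation*}
with $p_r$ as in \eqref{eq:29}. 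Substituting gives
\begin{equation*}
B[\chi(r/t), T^{-1}] = -\tfrac{\i}{2}t^{1-2\epsilon}N^{-1}\bigl(BT^{-1}\chi'(r/t)p_rT^{-1} + BT^{-1}p_r^*\chi'(r/t)T^{-1}\bigr).
\end{equation*}

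I would then estimate each summand by factoring and invoking three $t,N$-uniform bounds already present (or immediate) in the proof of Lemma~\ref{lemma:furth-comm-comp1}: $\|BT^{-1}\|\leq C$ (shown there via \eqref{eq:34}); $\|p_rT^{-1}\|\leq Ct^{\epsilon-1}N^{1/2}$ (a restatement of $\|H_0^{1/2}T^{-1}\|\leq \tfrac{1}{2}t^{\epsilon-1}N^{1/2}$ combined with $p_r^*p_r\leq 2H_0$); and $\|BT^{-1}p_r^*\|\leq Ct^{\epsilon-1}N^{1/2}$. For $B=B_1=I$ the last estimate is just the adjoint of the second; for $B=B_2 = t^{1-\epsilon}N^{-1/2}p_r$ it reads $t^{1-\epsilon}N^{-1/2}\|p_rT^{-1}p_r^*\|$, and $\|p_rT^{-1}p_r^*\|\leq Ct^{2\epsilon-2}N$ follows by splitting $T^{-1}=T^{-1/2}T^{-1/2}$ and pairing one $p_r$ with each half-power (exactly as in \eqref{eq:34}). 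The first summand factors as $(BT^{-1})\chi'(r/t)(p_rT^{-1})$ with product-norm bound $C\|\chi'\|_\infty t^{\epsilon-1}N^{1/2}$; the second factors as $(BT^{-1}p_r^*)\chi'(r/t)T^{-1}$ with the same bound. Multiplying by the prefactor $\tfrac{1}{2}t^{1-2\epsilon}N^{-1}$ collapses both summands to $O(t^{-\epsilon}N^{-1/2}) \subseteq O(t^{-\epsilon})$, giving \eqref{eq:28g}.

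No substantive obstacle arises: the corollary is essentially a one-line consequence of Lemma~\ref{lemma:furth-comm-comp1} once the commutator is expanded explicitly. The single point requiring a little care is the $B=B_2$ case of the bound $\|BT^{-1}p_r^*\|$, where one cannot apply the simple $p_rT^{-1}$ estimate to both $p_r$ factors at once but must split $T^{-1}$ symmetrically — the very device already used to prove \eqref{eq:34}. Everything else is pure bookkeeping of powers of $t$ and $N$.
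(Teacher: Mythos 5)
Your proof is correct and follows exactly the route the paper intends: the corollary is left proofless there precisely because its proof is the commutator expansion $[\chi(r/t),T^{-1}]=-T^{-1}[\chi(r/t),T]T^{-1}$ together with the identity \eqref{eq:18} and the uniform bounds \eqref{eq:31}--\eqref{eq:34} already established in the proof of Lemma \ref{lemma:furth-comm-comp1}, which is what you reuse. Your handling of the $B=B_2$ case via the symmetric splitting $T^{-1}=T^{-1/2}T^{-1/2}$ is exactly the device of \eqref{eq:34}, and your bookkeeping of the powers of $t$ and $N$ (yielding the slightly stronger $O(t^{-\epsilon}N^{-1/2})$) is accurate.
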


\begin{lemma} 
  \label{lem:comm-compss}  For any  real-valued $\chi\in
  C^\infty(\R)$ with $\chi'\in C^\infty_{\mathrm{c}}(\R_+)$ we have uniformly  in $t\geq 1$
  \begin{subequations}
    \begin{align} \label{eq:28ss2}
    \|\inp{H}^{1/2}[Q_3,\chi(r/t)]\inp{H}^{1/2}\|&\leq
    C\sup |\chi'|\,t^{-1},\\
|\inp{H}^{1/2}[Q_3,\parb{\chi'(r/t)p_r+{\rm h.c.}}]\inp{H}^{1/2}\|&\leq
    C_{\chi'}\,t^{-1/2-\kappa}.
\label{eq:28ss22}
  \end{align}
  \end{subequations}
 
\end{lemma}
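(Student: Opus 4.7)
The approach I would take is the standard Helffer--Sj\"ostrand representation of $Q_3=\chi_{E-2e,E-e,E+e,E+2e}(H)$ combined with explicit computation of the inner commutators. Choose an almost analytic extension $\tilde\chi\in C^\infty_{\mathrm c}(\C)$ of $\chi_{E-2e,E-e,E+e,E+2e}$ of arbitrarily high order, and apply \eqref{10.6.3.18.5} to write $[Q_3,S]=\int_\C(H-z)^{-1}[S,H](H-z)^{-1}\,\d\mu(z)$. Since $\tilde\chi$ has compact support in $\C$, elementary spectral calculus (together with the boundedness of $V$, so that $H_0$ and $H$ have comparable graph norms) yields, uniformly for $z\in\supp\tilde\chi$,
\[
\|\inp{H}^{1/2}(H-z)^{-1}\|+\|H_0^{1/2}(H-z)^{-1}\inp{H}^{1/2}\|\leq C(1+|\Im z|^{-1}),
\]
and taking the order of $\tilde\chi$ large enough guarantees $\int(1+|\Im z|^{-1})^2|\d\mu|<\infty$. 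Matters are thereby reduced to matrix-element bounds on $[S,H]$ sandwiched by resolvent-$\inp H^{1/2}$ factors.

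For the first estimate, $V$ commutes with the multiplier $\chi(r/t)$, so in analogy with \eqref{eq:6} one has $[H,\chi(r/t)]=[H_0,\chi(r/t)]=-\tfrac{\i}{2t}\parb{p_r^*\chi'(r/t)+\chi'(r/t)p_r}$. Combining this with $\|p_r w\|\leq\sqrt{2}\,\|H_0^{1/2}w\|$ from Lemma \ref{lem:comm-comp}, the inner matrix element between $v'=(H-\bar z)^{-1}\inp H^{1/2}v$ and $u'=(H-z)^{-1}\inp H^{1/2}u$ is bounded by $C\sup|\chi'|\,t^{-1}(1+|\Im z|^{-1})^2\|v\|\|u\|$, and integrating over $z$ produces the desired $t^{-1}$ decay.

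For the second estimate, I would set $\phi(x)=t\chi(r(x)/t)$ and recognize $S=\chi'(r/t)p_r+p_r^*\chi'(r/t)=2A_\phi$ in the notation of Lemma \ref{prop:09.12.8.11.10}. That lemma then gives
\[
\i[H_0,S]=2p_i^*(\nabla^2\phi)^{ij}p_j-\tfrac12\triangle^2\phi,
\]
to which one adds $[V,S]=\chi'(r/t)[V,p_r]+[V,p_r^*]\chi'(r/t)$. The Hessian reads $(\nabla^2\phi)_{ij}=t^{-1}\chi''(r/t)(\partial_ir)(\partial_jr)+\chi'(r/t)(\nabla^2 r)_{ij}$; its angular block is $\chi'(r/t)(2r)^{-1}(\nabla^2 r^2)_{\alpha\beta}$, whose positive eigenvalues (by the lower bound in Condition \ref{cond:conv}) are controlled by their sum $\triangle r=O(r^{-1/2-\kappa})$ from Condition \ref{cond:10.9.2.11.13}, hence have maximum $O(t^{-1/2-\kappa})$ on $\supp\chi'(r/t)\subset\{r\sim t\}$. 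Since $\kappa<1/2$ one also has $t^{-1}\leq t^{-1/2-\kappa}$, so $\|\nabla^2\phi\|_g\leq Ct^{-1/2-\kappa}$, yielding the Cauchy--Schwarz bound $|\inp{pv',(\nabla^2\phi)pu'}|\leq Ct^{-1/2-\kappa}\|H_0^{1/2}v'\|\|H_0^{1/2}u'\|$.

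The main obstacle is the $\triangle^2\phi$-term, since Condition \ref{cond:10.9.2.11.13} gives no direct pointwise control of $\triangle^2 r$. I would circumvent this by Green's identity, valid since $\triangle\phi$ is compactly supported in $E$ for large $t$: for $v',u'\in\vD(H)$,
\[
\inp{v',\triangle^2\phi\cdot u'}=\inp{\triangle v',(\triangle\phi)u'}+2\inp{\nabla v',(\triangle\phi)\nabla u'}+\inp{v',(\triangle\phi)\triangle u'},
\]
and each term is then bounded by $\|\triangle\phi\|_\infty=\|t^{-1}\chi''(r/t)+\chi'(r/t)\triangle r\|_\infty\leq Ct^{-1/2-\kappa}$ together with $\|\triangle v'\|=2\|H_0 v'\|\leq C(1+|\Im z|^{-1})\|v\|$ and $\|\nabla v'\|\leq\sqrt2\|H_0^{1/2}v'\|\leq C(1+|\Im z|^{-1})\|v\|$. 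Finally, $[V,S]$ is handled as a quadratic form, exploiting that $V\chi'(r/t)=O(t^{-1-\eta})$ on $\supp\chi'(r/t)$ from Condition \ref{cond:10.6.1.16.24}, which yields an even stronger $t^{-1-\eta}$ decay that is dominated by $t^{-1/2-\kappa}$ (as $\eta>0$ and $\kappa<1/2$). Summing these contributions and integrating against $|\d\mu|$ completes the proof.
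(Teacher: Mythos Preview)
Your treatment of \eqref{eq:28ss2} and of the principal Hessian term $p_i^*(\nabla^2\phi)^{ij}p_j$ in \eqref{eq:28ss22} matches the paper's argument. The gap is in your handling of the $\triangle^2\phi$ term.

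You invoke the identity
\[
\inp{v',(\triangle^2\phi)u'}=\inp{\triangle v',(\triangle\phi)u'}+2\inp{\nabla v',(\triangle\phi)\nabla u'}+\inp{v',(\triangle\phi)\triangle u'},
\]
and then assert $\|\triangle v'\|=2\|H_0 v'\|\leq C(1+|\Im z|^{-1})\|v\|$. This bound is false. With $v'=(H-\bar z)^{-1}\langle H\rangle^{1/2}v$ one has, by functional calculus, control of $\|\langle H\rangle^{1/2}v'\|=\|\langle H\rangle(H-\bar z)^{-1}v\|\leq C(1+|\Im z|^{-1})\|v\|$, hence of $\|H_0^{1/2}v'\|$; but $H_0 v'=H_0(H-\bar z)^{-1}\langle H\rangle^{1/2}v$ involves $\langle H\rangle^{1/2}H(H-\bar z)^{-1}$, whose spectral multiplier $\langle\lambda\rangle^{1/2}\lambda/(\lambda-\bar z)$ is unbounded in $\lambda$. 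In other words, the outer $\langle H\rangle^{1/2}$ together with a single resolvent factor buys you exactly one derivative, not two, so the first and third terms on the right above cannot be estimated by $\|v\|\|u\|$.

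The cure is to integrate by parts only once: as a quadratic form on $C_{\mathrm c}^\infty(M)$,
\[
-\triangle^2\phi=(\partial_i\triangle\phi)g^{ij}\partial_j+\partial_i^*g^{ij}(\partial_j\triangle\phi),
\]
which is precisely how Corollary~\ref{cor:10.10.13.15.00} recasts the $-\tfrac14\triangle^2\phi$ term of Lemma~\ref{prop:09.12.8.11.10}. This form needs only $\|H_0^{1/2}v'\|$, $\|H_0^{1/2}u'\|$ paired with $|\d\triangle\phi|$. Since $\triangle\phi=t^{-1}\chi''(r/t)+\chi'(r/t)\triangle r$, one has on $\supp\chi'(r/t)$
\[
|\d\triangle\phi|\leq Ct^{-2}+Ct^{-1}|\triangle r|+C|\d\triangle r|=O(t^{-3/2-\kappa})
\]
by \eqref{eq:49}, which is more than sufficient. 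This is what the paper does (it applies Corollary~\ref{cor:10.10.13.15.00} with $r^2$ replaced by $2t\chi(r/t)$), and it is the step your argument is missing.
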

\begin{proof} We calculate 
\begin{equation}\label{eq:18}
 \i[H,\chi(r/t)]=\tfrac{1}
{2t} \chi'(r/t)p_r+ {\rm h.c}.
\end{equation} Hence in combination with \eqref{10.6.3.18.5} and
\eqref{eq:15}
\begin{align*}
 \|\inp{H}^{1/2}[Q_3,\chi(r/t)]\inp{H}^{1/2}\|&\leq C_1t^{-1}
 \int_{\mathbb{C}}
\|\chi'(r/t)p_r\inp{H}^{-1/2}\|\,\tfrac{\inp{z}^2}{|\Im z|^{2}}|\mathrm{d}\mu(z)|\\
&\leq C_2\sup |\chi'|\,t^{-1} \int_{\mathbb{C}}
\,\tfrac{\inp{z}^2}{|\Im z|^{2}}|\mathrm{d}\mu(z)|\\
&= C_3 \sup |\chi'|\,t^{-1},
\end{align*} showing \eqref{eq:28ss2}.

As for \eqref{eq:28ss22} we rewrite
\begin{equation*}
  2\Re p_{\chi'}=t\{(\partial_i \chi(r/t))g^{ij}p_j+p_i^* g^{ij}(\partial_j \chi(r/t))\},
\end{equation*} and apply Corollary
\ref{cor:10.10.13.15.00}  with the expression $r^2$ replaced  
by $2t\chi(r/t)$. Now we note,
cf. \eqref{eq:49},  that 
\begin{align*}
    t\nabla^2 \chi(\cdot/t)&= t^{-1}\chi''(\cdot/t)\d r\otimes
    \d r+\chi'(\cdot/t)\nabla^2  r,\\
0\leq \nabla^2r&\leq (\triangle r)g\leq C_1
\inp{r}^{-1/2-\kappa}g,\;r\geq r_0,
  \end{align*} leading  to the estimates
\begin{equation*}
    -C_2t^{-1/2-\kappa}g\leq 2t\nabla^2 \chi(\cdot/t)\leq
    C_2t^{-1/2-\kappa}g,\;t\geq 1.
  \end{equation*} Using again \eqref{10.6.3.18.5} this leads to \eqref{eq:28ss22}. 
\end{proof}

In the proof of Lemma \ref{lemma:mini} we need the following technical result.
\begin{lemma} 
  \label{lem:comm-compssC}  For all real-valued $\chi,\hat\chi\in
  C^\infty(\R)$ vanishing for large enough argument and with 
  $\chi',\hat\chi'\in C^\infty_{\mathrm{c}}(\R_+)$
\begin{equation}\label{eq:14}
  \Re\parb{T^*(\Re p_{\hat\chi'(r/t)}-r/t\hat\chi'(r/t))}=T^*\hat\chi'(r/t)(p_r-r/t)+O(t^{-1/2-\kappa}),
\end{equation} where $T=t^{-1}A\chi(r/t)Q_3$.
\end{lemma}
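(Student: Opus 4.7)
The plan is to reduce the identity to a commutator estimate and then invoke Condition~\ref{cond:10.9.2.11.13}.

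First, I would compute $S := \Re p_{\hat\chi'(r/t)} - (r/t)\hat\chi'(r/t)$ explicitly using the adjoint formula \eqref{eq:4} of Lemma~\ref{lem:comm-comp} applied to $r \mapsto \hat\chi'(r/t)$ (whose $r$-derivative is $t^{-1}\hat\chi''(r/t)$, and with $|\nabla r|=1$ on $E$, which contains $\supp\hat\chi'(\cdot/t)$ for large $t$). A direct symmetrization yields
\begin{equation*}
S = \hat\chi'(r/t)(p_r - r/t) + R,\qquad R := -\tfrac{\mathrm{i}}{2}\parb{\hat\chi'(r/t)\triangle r + t^{-1}\hat\chi''(r/t)}.
\end{equation*}
On $\supp\hat\chi'(\cdot/t)$ the radial variable $r$ is comparable to $t$, so by Condition~\ref{cond:10.9.2.11.13} (via \eqref{eq:49}) $\triangle r = O(t^{-1/2-\kappa})$; hence $R$ is a bounded anti-Hermitian multiplication operator of norm $O(t^{-1/2-\kappa})$. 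Writing $X := \hat\chi'(r/t)(p_r - r/t)$, the self-adjointness $S = S^*$ forces $X^* - X = -2R = O(t^{-1/2-\kappa})$.

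Second, I would verify that $T$ and $T^*$ are uniformly bounded in $t\ge 1$: since the cutoff $\chi(r/t)$ forces $r \le Ct$ on its support, the symmetrization $A = 2rp_r - \mathrm{i}(1+r\triangle r)$ on $E$ yields $t^{-1}A\chi(r/t) = 2(r/t)\chi(r/t)p_r + O(t^{-1/2-\kappa})$, whose composition with $Q_3$ is bounded because $\|p_r Q_3\|<\infty$ by energy localization. Consequently $\Re(T^*S) = \Re(T^*X) + \Re(T^*R) = \Re(T^*X) + O(t^{-1/2-\kappa})$, and the lemma reduces to
\begin{equation*}
X^*T - T^*X = O(t^{-1/2-\kappa}).
\end{equation*}
Using $X^* - X = -2R$, this is in turn equivalent to $[X,T] + (T - T^*)X = O(t^{-1/2-\kappa})$.

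Third, both summands are controlled by direct commutator calculation. For $(T - T^*)X$, I would decompose $T - T^* = t^{-1}([A, \chi(r/t)]Q_3 + \chi(r/t)[A, Q_3] - [Q_3, \chi(r/t)]A)$ and use $[A, \chi(r/t)] = -2\mathrm{i}(r/t)\chi'(r/t)$ (bounded), Helffer--Sj\"ostrand for $[A, Q_3]$, and Lemma~\ref{lem:comm-compss} for $[Q_3, \chi(r/t)]$, with the $p_r$-factor of $X$ absorbed by the rightmost $Q_3$. For $[X, T]$, the principal contribution is $t^{-1}Q_3\chi(r/t)[A, \hat\chi'(r/t)(p_r - r/t)]$; the elementary part $[2rp_r, \hat\chi'(r/t)(p_r - r/t)]$ produces terms in which every remaining $p_r$ is absorbed by the $Q_3$ on the left (or by Lemma~\ref{lem:comm-compss} when forced to commute past $Q_3$), while the curvature correction $[-\mathrm{i}(1+r\triangle r), \hat\chi'(r/t)(p_r - r/t)]$ reduces to derivatives of $r\triangle r$ times $p_r - r/t$, bounded by $O(t^{-1/2-\kappa})$ via $\partial_r\triangle r = O(r^{-3/2-\kappa})$ from Condition~\ref{cond:10.9.2.11.13}. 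The main obstacle is this last bookkeeping step: verifying that \emph{every} correction produced by the non-flatness of the metric falls within the budget $O(t^{-1/2-\kappa})$ requires careful use of both parts of \eqref{eq:49}.
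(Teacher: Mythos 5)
Your overall strategy is the same as the paper's: the identity is reduced, via \eqref{eq:4} and \eqref{eq:49}, to showing that the expression is self-adjoint modulo $O(t^{-1/2-\kappa})$, and this is then established by commuting the four factors $p_{\hat\chi'}$, $A/t$, $\chi(r/t)$, $Q_3$ past one another using Corollary~\ref{cor:10.10.13.15.00}, the Helffer--Sj\"ostrand calculus and Lemma~\ref{lem:comm-compss}. Your first two steps (isolating the anti-Hermitian multiplication remainder $R=O(t^{-1/2-\kappa})$ and reducing to $X^*T-T^*X=O(t^{-1/2-\kappa})$) are correct and correspond exactly to the paper's reduction to \eqref{eq:51}; your treatment of $[X,T]$, including the elementary commutator $[2rp_r,\hat\chi'(r/t)(p_r-r/t)]$ and the curvature corrections via $\partial_r\triangle r=O(r^{-3/2-\kappa})$, also mirrors the paper's computation of $[\Re p_{\hat\chi'},\tfrac At]$ with $\tilde\chi(s)=\hat\chi'(s)-s\hat\chi''(s)$.

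There is, however, one concrete point where your decomposition of $T-T^*$ runs into trouble that the paper's grouping avoids. In the term $t^{-1}[Q_3,\chi(r/t)]AX$ you have \emph{two} momentum factors (one from $A\sim rp_r$, one from $X$) sitting on the same side of the commutator $[Q_3,\chi(r/t)]$, whereas \eqref{eq:28ss2} provides only one factor of $\inp{H}^{1/2}$ on each side; the resulting estimate requires $\|\inp{H}^{-1/2}AX\|\le Ct$, i.e.\ boundedness of $\inp{H}^{-1/2}p_{\hat\chi'}(p_r-r/t)$, which is false. A similar imbalance threatens $\chi(r/t)[A,Q_3]X$: inside the Helffer--Sj\"ostrand integral the term $p^*(\nabla^2r^2)p$ from $[H,A]$ must be localized by cutoffs on \emph{both} sides before \eqref{eq:53} and Cauchy--Schwarz apply, and in your splitting the right-hand cutoff has to be imported from $X$ through a resolvent. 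Both issues are repairable (by a second-order refinement of Lemma~\ref{lem:comm-compss}, or by estimating $(H-z)^{-1}AX$ directly), but the cleaner device — and the one the paper uses — is to keep $\tfrac At\chi$ together as a single block when commuting with $Q_3$, so that the product rule $[H,\tfrac At\chi]=\tilde\chi[H,\tfrac At]\chi+\tfrac At[H,\chi]$ automatically places a cutoff on each side of $[H,\tfrac At]$ and leaves each remaining momentum adjacent to a resolvent or a $Q_3$. You should either adopt that grouping or supply the strengthened commutator bound; as written, the appeal to Lemma~\ref{lem:comm-compss} does not close this step.
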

\begin{proof} Introducing
  $\gamma=T^*(\Re p_{\hat\chi'(r/t)}-r/t\hat\chi'(r/t))$ and  noting  that
  $\|T\|$ is uniformly bounded we obtain from \eqref{eq:49} 
  and \eqref{eq:4}  that 
  $\gamma$ has the form of the right hand side of \eqref{eq:14}. It
  remains to show that
  \begin{equation}
    \label{eq:51}
    \gamma^*=\gamma+O(t^{-1/2-\kappa}).
  \end{equation} For that we write
  \begin{equation*}
     \gamma^*=(\Re p_{\hat\chi'})\tfrac At\chi Q_3-(r/t\hat\chi')\tfrac At\chi Q_3,
  \end{equation*} and commute the four factors for each of the two
  terms on the right hand side. Rearranging we then get $\gamma$ plus
  contributions from commutators. The latter are treated using
  repeatedly 
  Corollary  \ref{cor:10.10.13.15.00} and \eqref{eq:18}. The most difficult parts arise from commuting the
  operators $\Re p_{\hat\chi'}$ or $\tfrac At$ through the factors
  of $Q_3$. Here we
  shall only 
  explain how to treat the first term above (the most difficult one). We have 
  \begin{equation}\label{eq:52}
    (\Re p_{\hat\chi'})\tfrac At\chi Q_3=(\Re p_{\hat\chi'})Q_3\tfrac
    At\chi +(\Re p_{\hat\chi'})[\tfrac
    At\chi,Q_3],
  \end{equation}  we represent (introducing here for convenience a suitable
  function $\tilde \chi$ with $\tilde \chi\chi=\chi$)
  \begin{equation*}
    (\Re p_{\hat\chi'})[\tfrac
    At\chi,Q_3] =\int_{\mathbb{C}}(\Re p_{\hat\chi'})(H-z)^{-1}\parb{
\tilde \chi[H,\tfrac
    At]\chi+\tfrac
    At[H,\chi]}(H-z)^{-1}\,\mathrm{d}\mu(z),
  \end{equation*} and then we use Corollary  \ref{cor:10.10.13.15.00}
  and estimate inside the integral. Note the estimates,
  cf. Conditions \ref{cond:conv} and 
  \ref{cond:10.9.2.11.13},  
  \begin{equation}
    \label{eq:53}
    0\leq \nabla^2r^2\leq (\triangle r^2)g\leq C
    \inp{r}^{1/2-\kappa}g,\;r\geq r_0.
  \end{equation}
Since we have the factors $\tilde \chi$ and $\chi$ to the left
and to the right of $[H,\tfrac
    At]$, respectively, the Cauchy Schwarz inequality and these bounds
    lead  to the bound
  $O(t^{-1/2-\kappa})$ of the second  term to the right in \eqref{eq:52}.

Similarly for the first term  in \eqref{eq:52} we write 
\begin{equation}\label{eq:54}
  (\Re p_{\hat\chi'})Q_3\tfrac
    At\chi =Q_3(\Re p_{\hat\chi'}) \tfrac
    At\chi +[\Re p_{\hat\chi'},Q_3]\tfrac
    At\chi.
\end{equation} The second  term of \eqref{eq:54} is
$O(t^{-1/2-\kappa})$ due to \eqref{eq:28ss22}.

Finally for the first term in \eqref{eq:54} we have 
\begin{equation*}
  Q_3(\Re p_{\hat\chi'}) \tfrac
    At\chi =Q_3\chi\tfrac
    At\Re p_{\hat\chi'} +Q_3\chi[\Re p_{\hat\chi'}, \tfrac
    At] +O(t^{-1/2-\kappa}).
\end{equation*} Only the middle term needs examination. We show that
it  is  also  $O(t^{-1/2-\kappa})$ by introducing
$\tilde \chi(s):=\hat\chi'(s)-s\hat\chi''(s)$ and computing  
the adjoint 
\begin{align*}
 &-[\Re p_{\hat\chi'}, \tfrac
    At]\chi Q_3 = \parb{-\tfrac
      2t[p_{\hat\chi'},rp_r]+O(t^{-3/2-\kappa}) }\chi Q_3\\&= \parb{
\tfrac {2\i}{t}p_{\tilde \chi(r/t)}+O(t^{-3/2-\kappa}) }\chi Q_3.
\end{align*} Due to \eqref{eq:15} the norm of this expression is in fact bounded
by $Ct^{-1}$.

\end{proof}

\subsection{Proof of Lemma~\ref{lemma:10.5.31.23.24}}

In this section we let $0<\mu<M<\infty$, and choose $\mu_1,M_1,\delta_1,Q_1,Q_2$ as in Section~\ref{sec:10.6.8.4.0}.

Since $\mathrm{e}^{-\mathrm{i}K(t,\cdot)}U(t)$ is the dilation, the following statement is obvious.
\begin{lemma}\label{lemma:10.6.4.23.37}
For all $u\in \chi_{[\mu,M]}(r^2){\mathcal H}_{\mathrm{aux}}$
\begin{align*}
(1-Q_1)Uu=0.
\end{align*}
\end{lemma}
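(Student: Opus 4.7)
The plan is to unpack the explicit form of $U(t)$, use the formula \eqref{eq:61} for the dilation to track supports, and then observe that $Q_1(t)$ acts as the identity on those supports.

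First I would rewrite
\begin{equation*}
U(t)u=\mathrm{e}^{\mathrm{i}K(t,\cdot)}\,\mathrm{e}^{-\mathrm{i}\frac{\ln t}{2}A}u,
\end{equation*}
and note that the factor $\mathrm{e}^{\mathrm{i}K(t,\cdot)}$ is just multiplication by a unimodular function, hence preserves supports. Since $u\in {\mathcal H}_{\mathrm{aux}}=L^2(E)$, the dilation keeps $\mathrm{e}^{-\mathrm{i}\frac{\ln t}{2}A}u$ in $L^2(E)$ (by unitarity on ${\mathcal H}_{\mathrm{aux}}$), so the entire analysis takes place on $E$.

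Next I would apply the explicit formula \eqref{eq:61}:
\begin{equation*}
\mathrm{e}^{-\mathrm{i}\frac{\ln t}{2}A}u(x)=J(\omega(t,x))^{1/2}\Bigl(\frac{\det g(\omega(t,x))}{\det g(x)}\Bigr)^{1/4}u(\omega(t,x)),
\end{equation*}
from which $\mathrm{e}^{-\mathrm{i}\frac{\ln t}{2}A}u(x)=0$ unless $\omega(t,x)\in\mathop{\mathrm{supp}}u\subseteq\{y\in E:\mu\le r(y)^2\le M\}$. Using the key identity $t\cdot r(\omega(t,x))=r(x)$ for $(t,x)\in(0,\infty)\times E$, this forces
\begin{equation*}
\mu\le r(x)^2/t^2\le M
\end{equation*}
on the support of $U(t)u$.

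Finally I would invoke the defining property of $\chi_{\mu_1,\mu,M,M_1}$, namely that it equals $1$ on a neighbourhood of $[\mu,M]$. Hence $Q_1(t)=\chi_{\mu_1,\mu,M,M_1}(r^2/t^2)$ acts as the identity (as a multiplier) at every point where $\mu\le r^2/t^2\le M$, so $(I-Q_1(t))$ vanishes on the support of $U(t)u$, yielding $(1-Q_1)Uu=0$. There is no real obstacle here: the whole lemma is essentially a bookkeeping argument about how the geodesic dilation transports the support of $u$ into the level set where $Q_1$ is trivial — this is exactly why the author labels the statement obvious.
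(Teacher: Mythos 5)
Your argument is correct and is precisely the reasoning behind the paper's one-line justification ("Since $\mathrm{e}^{-\mathrm{i}K(t,\cdot)}U(t)$ is the dilation, the statement is obvious"): the phase is a unimodular multiplier, the dilation transports the support via $r(\omega(t,x))=r(x)/t$, and $\chi_{\mu_1,\mu,M,M_1}$ equals $1$ on $[\mu,M]$. Nothing is missing; the paper merely leaves this bookkeeping implicit (and offers the formal identity $\mathrm{D}_GQ_1P_{\mathrm{aux}}=0$ as an alternative).
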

This lemma can be proved also by the (somewhat formal) equation
\begin{align}
\mathrm{D}_GQ_1P_{\mathrm{aux}}=0.\label{eq:10.6.2.9.7}
\end{align} 
The equation (\ref{eq:10.6.2.9.7}) is obtained by a direct computation.

%The argument of the proof of the following lemma will appear repeatedly.
\begin{lemma}\label{lemma:10.6.4.23.38}
For all  $u\in \chi_{[\mu,M]}(r^2){\mathcal H}_{\mathrm{aux}}$
\begin{align}\label{eq:13}
\lim_{t\to\infty}\inp{Uu,(I-Q_2^2)Uu}=0.
\end{align}
\end{lemma}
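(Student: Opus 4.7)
The plan is to reduce the claim, via an operator monotonicity inequality, to showing that $t^{1+\delta_1}\langle v(t),H_0 v(t)\rangle\to 0$ where $v(t):=\mathrm{e}^{-\mathrm{i}(\ln t)A/2}u$, and then to establish this decay through a Gr\"onwall-type argument driven by the Mourre-type bound of Corollary~\ref{cor:10.10.13.15.00} applied with $V=0$.

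First I would observe that since $T:=t^{1+\delta_1}W(t)\ge 0$, the spectral calculus yields $I-Q_2^2=T(I+T)^{-1}\le T$, and hence
\begin{align*}
0\le \langle Uu,(I-Q_2^2)Uu\rangle\le t^{1+\delta_1}\langle Uu,W(t)Uu\rangle.
\end{align*}
Using $W(t)=\mathrm{e}^{\mathrm{i}K}H_0\mathrm{e}^{-\mathrm{i}K}$ together with $\mathrm{e}^{-\mathrm{i}K}U(t)=\mathrm{e}^{-\mathrm{i}(\ln t)A/2}$, the right-hand side rewrites as $t^{1+\delta_1}\langle v(t),H_0 v(t)\rangle$. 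A standard $\varepsilon/3$ density argument, exploiting $\|I-Q_2^2\|\le 1$, will reduce to the case $u\in C_{\mathrm{c}}^\infty(E)$ supported in $\{\mu<r^2<M\}$. For such $u$, formula~\eqref{eq:11.5.23.7.35} (with $a=0$) implies that $v(t)$ is smooth and compactly supported in $\{\sqrt{\mu}\,t\le r\le \sqrt{M}\,t\}$, which in particular lies in $\{r\ge r_0\}$ for all $t$ sufficiently large.

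Next I would compute, using $\partial_t v=-\mathrm{i}(2t)^{-1}Av$ valid on the smooth subspace,
\begin{align*}
\tfrac{\mathrm{d}}{\mathrm{d}t}\phi(t)=-\tfrac{1}{2t}\langle v,\mathrm{i}[H_0,A]v\rangle,\qquad \phi(t):=\langle v(t),H_0 v(t)\rangle,
\end{align*}
and invoke Corollary~\ref{cor:10.10.13.15.00} in the case $V=0$, which gives for any small $\varepsilon>0$
\begin{align*}
\mathrm{i}[H_0,A]\ge 2(1+\delta-\varepsilon)H_0-CH_{r_1}+\gamma_\varepsilon,\qquad |\gamma_\varepsilon|\le C\langle r\rangle^{-(1+2\kappa)},
\end{align*}
the sharper remainder following from the fact that with $V=0$ only the $\tfrac14\partial_i\triangle r^2$ piece of $\gamma_i$ survives. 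For large $t$ the support of $v(t)$ avoids the cut-off region of $H_{r_1}$ and lies in $\{r\ge\sqrt{\mu}\,t\}$, so $|\langle v,\gamma_\varepsilon v\rangle|\le Ct^{-(1+2\kappa)}\|u\|^2$. This should yield the differential inequality $\phi'(t)\le -\tfrac{1+\delta-\varepsilon}{t}\phi(t)+Ct^{-(2+2\kappa)}$. Multiplying by the integrating factor $t^{1+\delta-\varepsilon}$ and integrating from a sufficiently large $t_0$ delivers $\phi(t)=O\bigl(t^{-(1+\min(\delta-\varepsilon,\,2\kappa))}\bigr)$. Since the parameter $\delta_1$ in the definition of $Q_2$ was chosen so that $\delta_1<\min(\delta,2\kappa)$, I can pick $\varepsilon\in (0,\delta-\delta_1)$ so that $\delta_1<\min(\delta-\varepsilon,2\kappa)$, whence $t^{1+\delta_1}\phi(t)\to 0$, as desired.

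The hard part will be giving a clean rigorous meaning to the $t$-differentiation of $\phi$ and to the application of the Mourre inequality to $v(t)$, both of which are delicate for general $u\in L^2$. This is circumvented by restricting initially to $u\in C_{\mathrm{c}}^\infty(E)$ supported in $\{\mu<r^2<M\}$: dilations preserve this class, hence $v(t)\in C_{\mathrm{c}}^\infty(M)\subset\vD(H_0)$, and all commutator identities and quadratic form estimates become classical on this subspace; the passage to arbitrary $u\in \chi_{[\mu,M]}(r^2)\vH_{\mathrm{aux}}$ is then immediate from the uniform bound $\|I-Q_2^2\|\le 1$. It is worth emphasising that the specific choice $\delta_1<\min(\delta,2\kappa)$ in the definition of $Q_2$ is exactly what makes the Gr\"onwall step deliver convergence to zero rather than mere boundedness.
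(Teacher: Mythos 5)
Your proposal is correct, and it follows a genuinely different route from the paper's. The paper keeps the resolvent structure throughout: it introduces the two-parameter family $T_N(t)=I+t^{1+\bar\delta}N^{-1}W(t)$, computes the Heisenberg derivative of $I-T_N^{-1}$ with respect to $G(t)$ using Corollary~\ref{cor:09.11.29.16.7} for $\mathrm{D}_{H_0}W$, absorbs the cross terms by Cauchy--Schwarz, controls the remaining errors (the piece supported in $\{r\le r_1\}$ and the near-region part of $|\tilde\gamma|^2$) by sandwiching between the resolvents $T_N^{-1}$ and invoking Lemma~\ref{lemma:furth-comm-comp1}, and concludes by a two-step limiting argument (uniform-in-$N$ integrability, then $N$ large, then $N=t^{\bar\delta-\delta_1}$). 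You instead discard the resolvent at the outset via the spectral-calculus bound $I-Q_2^2\le t^{1+\delta_1}W(t)$, conjugate by $\e^{\i K}$ to reduce to the quantitative decay $\inp{v,H_0v}=O\parb{t^{-1-\min(\delta-\varepsilon,2\kappa)}\log t}$ along the dilation flow, and obtain this by Gr\"onwall from the Mourre inequality of Corollary~\ref{cor:10.10.13.15.00} (which is precisely the $\e^{\i K}$-conjugate of Corollary~\ref{cor:09.11.29.16.7}, since $\alpha\equiv0$ on $E$). The support localization $\supp v(t)\subset\{\sqrt{\mu}\,t\le r\le\sqrt{M}\,t\}$ does for you what the resolvents and Lemma~\ref{lemma:furth-comm-comp1} do for the paper, killing $H_{r_1}$ outright and reducing $\gamma_\varepsilon$ to a pointwise bound $O(t^{-1-2\kappa})$. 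Your route is shorter and yields an explicit rate; the price is that you must prove the strictly stronger statement $t^{1+\delta_1}\inp{Uu,W(t)Uu}\to0$, which succeeds exactly because $\delta_1<\min(\delta,2\kappa)$ strictly (and your handling of the borderline case $\delta-\varepsilon=2\kappa$ via the logarithm is harmless for the same reason). One caveat worth recording: the paper's $T_N$ machinery is recycled for parts \ref{item:10.5.31.23.26} and \ref{item:10.5.31.23.27} of Lemma~\ref{lemma:10.5.31.23.24} and again in Subsection~\ref{sec:10.1.23.12.54}, so your shortcut simplifies this lemma but does not replace that machinery elsewhere.
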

\proof Fix $\bar \delta\in (\delta_1,\delta)$ with $\bar \delta<2\kappa$, and
fix $u\in
\chi_{[\mu,M]}(r^2)\vH_{\mathrm{aux}}\cap C^\infty(M)$ (by density \eqref{eq:13}
for 
any such state suffices).
Set for $N\geq1$
\begin{equation*}
  T_N=T_N(t)=I+t^{1+\bar\delta}N^{-1}W(t),
\end{equation*} and note that 
\begin{equation}\label{eq:35}
  \e^{-\i K}T_N\e^{\i K}=I+t^{1+\bar \delta}N^{-1}H_0.
\end{equation}

We need to  show that with $N(t):=t^{\bar\delta-\delta_1}$ 
\begin{align*}
\lim_{t\to\infty}\,\inp{Uu,\parb{I-T_{N(t)}^{-1}}Uu}=0.
\end{align*} 
It suffices to show that there
exists  $R(t)\in{\mathcal B}({\mathcal
    H})$  such that 
\begin{subequations}
\begin{align}\label{eq:37}
&\int_{t_0}^\infty |\inp{U(t)u,R(t)U(t)u}|\,\d t =o(t_0^0)\text{ uniformly in } N\geq1,\\
&\tfrac{\mathrm{d}}{\mathrm{d}t}\inp{U(t)u,(I-T_N^{-1}(t))U(t)u}
\le \inp{U(t)u,R(t)U(t)u}.
\label{eq:10.6.4.13.13}
\end{align}
  \end{subequations}
In fact from \eqref{eq:37} and 
(\ref{eq:10.6.4.13.13}) it follows that 
for any
$\varepsilon>0$ there exists  $t_0\geq1$ such that for all $t\geq t_0$
and $N\geq1$
\begin{align*}
\inp{U(t)u,(I-T_N^{-1}(t))U(t)u}
\le \inp{U(t_0)u,(I-T_N^{-1}(t_0))U(t_0)u}+\varepsilon.
\end{align*}
Then  for all  $N\geq 1$  large enough  we have for all $t\geq t_0$
\begin{align*}
0\le \inp{U(t)u,(I-T_N^{-1})U(t)u}\leq  2\varepsilon.
\end{align*} In particular we can take $N=t^{\bar\delta-\delta_1}$, and
indeed we 
obtain that 
\begin{align*}
0\le \inp{U(t)u,(I-T_{N(t)}^{-1})U(t)u}\leq  2\varepsilon\text{ for
  all sufficiently large } t.
\end{align*}

Hence we only need to prove \eqref{eq:37} and (\ref{eq:10.6.4.13.13}).
We have
\begin{align}
\tfrac{\mathrm{d}}{\mathrm{d}t}U^*(1-T_N^{-1})U
=U^*(-\mathrm{D}_{G}T_N^{-1})U=U^*T_N^{-1}(\mathrm{D}_{H_0}T_N-\i[\alpha,T_N])T_N^{-1}U.
\label{10.6.4.18.57}
\end{align}
By Corollary~\ref{cor:09.11.29.16.7} (in
combination with  an
approximation argument), \eqref{eq:10.2.8.3.55}  and the Cauchy Schwarz inequality
it follows for $r_1>r_0$
\begin{align*}
\mathrm{D}_{H_0}T_N\le 
Ct^{\bar{\delta}}N^{-1}(p_i-\partial_iK)^*
\chi_{-,r_0,r_1}(r)g^{ij}(p_j-\partial_jK)
+\tfrac{2}{\delta-\bar{\delta}}t^{2+\bar{\delta}}N^{-1}|\tilde{\gamma}|^2
\end{align*}
Since $\mathop{\mathrm{supp}}\chi_{-,r_0,r_1}\subset (-\infty,r_1]$,
we obtain 
by using \eqref{eq:35} and Lemma~\ref{lemma:furth-comm-comp1} 
\begin{align}
 \begin{split}
 &  t^{\bar \delta}N^{-1}\|Q_1^*T_N^{-1}(p_i-\partial_iK)^*
 \chi_{-,r_0,r_1}(r)g^{ij}
(p_j-\partial_jK)T_N^{-1}Q_1\|\\
&\leq Ct^{-2}
   \text{ uniformly in } N\geq1.
\end{split}
\label{eq:11.5.5.13.53}
\end{align}
  We claim 
 \begin{equation}
   \label{eq:36}
   t^{2+\bar \delta}N^{-1}\|Q_1^*T_N^{-1}|\tilde{\gamma}|^2T_N^{-1}Q_1\|\leq Ct^{-1+\bar \delta -2\kappa}
   \text{ uniformly in } N\geq1.
 \end{equation} Choose $0<\mu_3<\mu_2<\mu_1$ (with $\mu_1$ as given). 
 Due to \eqref{eq:10.9.2.23.19}
 obviously 
 \begin{subequations}
\begin{equation}\label{eq:39}
   t^{2+\bar \delta}N^{-1}\|\chi_{\mu_3,\mu_2,+}(r^2/t^2)|\tilde{\gamma}|^2\|
   \leq Ct^{-1+\bar\delta-2\kappa}
   \text{ uniformly in } N\geq1,
 \end{equation} which is agreeable with  \eqref{eq:36}. On the other
 hand due to  \eqref{eq:10.9.2.23.19}, \eqref{eq:35}  and Lemma
 \ref{lemma:furth-comm-comp1} (used with $2\epsilon=1-\bar
 \delta$ there) we can estimate 
\begin{equation}
   \label{eq:36b}
   t^{2+\bar \delta}N^{-1}\|Q_1^*T_N^{-1}\chi_{-,\mu_3,\mu_2}(r^2/t^2)|\tilde{\gamma}|^2T_N^{-1}Q_1\|\leq Ct^{-2}
   \text{ uniformly in } N\geq1,
 \end{equation} 
  \end{subequations}which also  agrees with
  \eqref{eq:36}. Using the
  bound $|\d \alpha|\leq C t^{-2}$, cf.  \eqref{eq:11.5.5.13.48},  we obtain for the second term in
  \eqref{10.6.4.18.57}
\begin{equation}
   \label{eq:36kk}
   \|T_N^{-1}\i[\alpha,T_N])T_N^{-1}\|\leq Ct^{-3/2+\bar \delta/2}
   \text{ uniformly in } N\geq1.
 \end{equation} The
  combination of the bounds
  \eqref{eq:39} and \eqref{eq:36b} implies 
 \eqref{eq:36} and therefore, together with (\ref{eq:11.5.5.13.53})
 and \eqref{eq:36kk}, also  \eqref{eq:37} and
 (\ref{eq:10.6.4.13.13}). 
\endproof

\proof[Proof of Lemma~\ref{lemma:10.5.31.23.24}]
Consider the operator $Q_{\mathrm{f}}$ given by (\ref{10.6.4.23.36}).
The property \ref{item:10.5.31.23.25} of
Lemma~\ref{lemma:10.5.31.23.24} for  this operator  follows from Lemmas~\ref{lemma:10.6.4.23.37} and \ref{lemma:10.6.4.23.38}.
By mimicking the proof of Lemma~\ref{lemma:10.6.4.23.38}  we obtain the property
\ref{item:10.5.31.23.26}  for any $\delta'\leq 2\kappa-\delta_1$. 
Finally the property $(W(t)+V+\alpha(t))Q_{\mathrm{f}}(t)\in O_{{\mathcal
    B}({\mathcal H})}(t^{-1-\delta'})$  of \ref{item:10.5.31.23.27}, here possibly
$\delta'>0$ taken smaller,  is proved by first computing
\begin{align*}
&(W+V+\alpha)Q_{\mathrm{f}}\\
&=
t^{-1-\delta_1}Q_1({t^{1+\delta_1}W}Q_2)Q_2Q_1
+[{W},Q_1]Q_2^2Q_1
+t^{-1-\eta}(t^{1+\eta}VQ_1)Q_2^2Q_1+\alpha Q_{\mathrm{f}}.
\end{align*}  The first, third and fourth  terms agree with
\ref{item:10.5.31.23.27}. As for the second  term we
compute
\begin{equation*}
 [{W},Q_1]=-\i
 \chi_{\mu_1,\mu,M,M_1}'\big({r^2}/{t^2}\big)\tfrac{r}{t^2}(\partial_i
 r)g^{ij}(p_j-\partial _jK)- {\rm h.c}.
\end{equation*} Whence we can write
\begin{equation*}
  [{W},Q_1]Q_2=\e ^{\i K}\parb{(-\i t^{-1}\chi(r/t)p_r- {\rm h.c.})(I+t^{1+\delta_1}H_0)^{-1/2}}\e ^{-\i K}
\end{equation*} with $\chi(s)=s\chi_{\mu_1,\mu,M,M_1}'(s^2)$.  
By using this identity, Lemma \ref{lem:comm-comp} and  \eqref{eq:49}
we obtain
\begin{equation*}
\|[{W},Q_1]Q_2\|^2\leq C \parb{t^{-3-\delta_1}+t^{-3-2\kappa}}.  
\end{equation*} 
\endproof

\subsection{Preliminary localization for  perturbed dynamics}\label{Preliminary localization for  perturbed dynamics}
In this subsection we first study  various preliminary localization properties of  the
perturbed 
dynamics. We prove maximal and minimal velocity bounds and in
particular the properties \eqref{eq:41}--~\eqref{eq:12}. Similar
properties were also used in the proof of
Lemma~\ref{lemma:10.5.31.23.24}, cf. Lemma \ref{lemma:10.6.4.23.37}. 
 However the proofs for the perturbed dynamics  are  somewhat
 technical. Since  all we need from this
 subsection for the proof of Lemma \ref{lemma:10.3.23.17.20} is in fact the
 properties 
 \eqref{eq:41}--\eqref{eq:12} the reader might prefer
 to read the next subsection (presenting a proof of  Lemma
 \ref{lemma:10.3.23.17.20} along the lines of the proof of Lemma
 \ref{lemma:10.5.31.23.24}) before coming back to the present
 one. The subsection depends on \cite{Gr, SS} although the
 presentation is self-contained.

Let $E\in (0,\infty)$ and we fix $Q_*$ and the parameters 
$E_*,\delta_*$ as in Section~\ref{sec:10.6.8.4.0}.
The small parameter $e>0$ will be determined in this section. Possibly
we will
retake it smaller each time it appears.
The following type of result is called a {\it Mourre estimate} in the
literature since the appearance of such estimate in the seminal work \cite{Mo}. The
reader should keep in mind though that the commutator in
Corollary~\ref{cor:10.10.13.15.00} does
not conform with the conditions of \cite{Mo} since under our conditions it
might not be bounded relative to $H$ (not even
in the form sense). At this point we remark that Donnelly  \cite{Do} indeed uses
 Mourre theory under his geometric conditions. In fact our conditions do not
conform neither with more recent refinement
of Mourre theory as a method to provide the limiting absorption
principle  \cite{MS,GGM,FMS}. However as the reader will see we are
not going to use this theory, or more generally limiting absorption
bounds, only the following reminiscence.
\begin{lemma}\label{lemma:10.1.23.9.52}
 For 
 $e>0$  sufficiently small and as a form estimate on $C^\infty_{\mathrm{c}}(M)$
\begin{align}\label{eq:8}
\mathrm{i}[H,A]\ge 2(1+\delta_1)E -C(I-Q_3)(H_{r_1}+1)(I-Q_3).
\end{align} 
\end{lemma}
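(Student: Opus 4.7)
The plan is to derive the asserted inequality via a sequence of reductions starting from Corollary~\ref{cor:10.10.13.15.00}. I will first choose $\varepsilon>0$ with $\delta':=\delta-\varepsilon>\delta_1$, so that the Corollary yields
\[\mathrm{i}[H,A]\ge 2(1+\delta')H_0-C_0H_{r_1}+\gamma_\varepsilon,\]
with $\gamma_\varepsilon\in L^\infty(M)$ decaying at infinity. Substituting $H_0=H-V$ and $H=E\cdot I+(H-E)$ will then give
\[\mathrm{i}[H,A]\ge 2(1+\delta_1)E+c_0+2(1+\delta')(H-E)+W-C_0H_{r_1},\]
where $c_0:=2(\delta'-\delta_1)E>0$ is a positive scalar slack and $W:=\gamma_\varepsilon-2(1+\delta')V\in L^\infty(M)$ vanishes at infinity.

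Next, since $[H,Q_3]=0$, I plan to decompose $H-E$ via $I=Q_3+(I-Q_3)$:
\[H-E=(H-E)Q_3^2+2(H-E)Q_3(I-Q_3)+(I-Q_3)(H-E)(I-Q_3).\]
The first two terms are bounded functions of $H$ of sup-norm $O(e)$ (since $|\lambda-E|\le 2e$ on the support of $\chi_{E-2e,E-e,E+e,E+2e}$), giving a lower bound $\ge -C_1e$. For the third, using $H_0\ge 0$ and hence $H\ge -\|V\|_\infty$, I will obtain $(I-Q_3)(H-E)(I-Q_3)\ge -(E+\|V\|_\infty)(I-Q_3)^2$, which by $H_{r_1}+1\ge I$ is dominated by $-C_2(I-Q_3)(H_{r_1}+1)(I-Q_3)$. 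For the term $-C_0 H_{r_1}$ I will apply the Cauchy--Schwarz-type bound $H_{r_1}\le 2Q_3H_{r_1}Q_3+2(I-Q_3)H_{r_1}(I-Q_3)$; the key observation is that $Q_3H_{r_1}Q_3\le Q_3H_0Q_3$ is uniformly bounded by spectral localization ($Q_3HQ_3\le (E+2e)Q_3^2$ and $V$ bounded), yielding $-C_0H_{r_1}\ge -C_3-2C_0(I-Q_3)(H_{r_1}+1)(I-Q_3)$.

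Finally, I will split $W=W_\mathrm{loc}+W_\mathrm{far}$ with $\|W_\mathrm{far}\|_\infty<c_0/4$ (possible since $V,\gamma_\varepsilon$ decay at infinity) and $W_\mathrm{loc}$ compactly supported in $r\le r_1$; the far part is absorbed directly into the slack, and the localized part is handled by the bound $|W_\mathrm{loc}|\le \|W_\mathrm{loc}\|_\infty(H_{r_1}+1)$ on its support, followed by the same Cauchy--Schwarz sandwich. Taking $e>0$ small enough so that $c_0-C_1e-C_3-\|W_\mathrm{far}\|_\infty-\cdots>0$, the accumulated bounded constants will be controlled by the slack, and choosing $C$ in the target inequality large enough will dominate the residual $(I-Q_3)(H_{r_1}+1)(I-Q_3)$ contributions. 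The main obstacle will be this careful constant-tracking, necessitated by the non-commutation of bounded multiplication operators (such as $V$ and the compactly supported part of $\gamma_\varepsilon$) with the energy cutoff $Q_3$; the $+1$ appearing in $H_{r_1}+1$ together with the spectral boundedness of the $Q_3(\cdot)Q_3$ terms is what ultimately permits their absorption.
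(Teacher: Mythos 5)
There is a genuine gap in the final absorption step, and it is fatal to the argument as written. After your reductions you are left with negative terms of the form $-C_0\,Q_3H_{r_1}Q_3$ and $-Q_3W_{\mathrm{loc}}Q_3$ (plus harmless cross terms), which you bound from below by \emph{fixed} negative constants: the estimate $Q_3H_{r_1}Q_3\le Q_3H_0Q_3\le (E+2e+\|V\|_\infty)I$ gives $C_3\approx 2C_0(E+\|V\|_\infty)$, and similarly the localized part of $W$ contributes roughly $\|W_{\mathrm{loc}}\|_\infty$. None of these constants shrinks as $e\to 0$ (note $\|Q_3\|=1$ for every $e>0$ since $E\in\sigma_{\mathrm{ess}}(H)$), whereas your slack $c_0=2(\delta'-\delta_1)E$ is a fixed number determined by $\delta,\delta_1,E$ alone. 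The constant $C_0$ in Corollary~\ref{cor:10.10.13.15.00} measures the failure of \eqref{eq:10.2.8.3.55} on the compact region $\{r\le r_1\}$ and has no relation to $\delta-\delta_1$; generically $C_3\gg c_0$, so the requirement $c_0-C_1e-C_3-\cdots>0$ cannot be met and the chain of inequalities does not close.

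The missing idea is compactness combined with the absence of point spectrum at $E$. The sandwiched operator $Q_3\{-CH_{r_1}+o(\langle r\rangle^0)\}Q_3$ is \emph{compact}, because $H_{r_1}$ carries a cutoff to the relatively compact set $\{r<r_1\}$ and the remaining multiplier decays at infinity. Since $E\notin\sigma_{\mathrm{pp}}(H)$ (there are no positive eigenvalues, cf.\ \cite{IS}), $Q_3\to 0$ strongly as $e\to 0$, and a compact operator multiplied on both sides by factors tending strongly to zero tends to zero \emph{in norm}. It is this norm-smallness — not a sup-norm bound — that lets the localized negative contributions be absorbed into the fixed slack for $e$ small; this is exactly how the paper proceeds. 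That your argument never invokes $E\notin\sigma_{\mathrm{pp}}(H)$ is itself a warning sign: if $E$ were an eigenvalue with eigenfunction $\psi$, then $Q_3\psi=\psi$, the virial identity would (formally) give $\langle\psi,\mathrm{i}[H,A]\psi\rangle=0$, while the right-hand side of \eqref{eq:8} evaluated on $\psi$ equals $2(1+\delta_1)E\|\psi\|^2>0$; so the lemma is false without that hypothesis, and no correct proof can bypass it.
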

\proof Fix $\varepsilon>0$ such that $\delta_1<\delta-3\varepsilon$. 
By \eqref{eq:10.2.8.3.55} and \eqref{eq:55}
\begin{align*}
\mathrm{i}[H,A]&
\ge 2(1+\delta-\varepsilon)H_0 -CH_{r_1}+\gamma_{\varepsilon}
=2(1+\delta-\varepsilon)H -CH_{r_1}+o(\langle r\rangle^0)\\
&\geq 2(1+\delta-2\varepsilon)Q_3HQ_3-C_1(\varepsilon)(I-Q_3)(H_{r_1}+1)(I-Q_3)+K;\\ 
K&=Q_3\{ -CH_{r_1}+o(\langle r\rangle^0)\}Q_3.
\end{align*}
Since $K$ is compact and $E\not\in \sigma_{\mathrm{pp}}(H)$,
we can make $\|K\|_{{\mathcal B}({\mathcal H})}$ arbitrary small by 
letting $e>0$ small. Hence if $e>0$ is sufficiently small we obtain 
\begin{align*}
\mathrm{i}[H,A]&\ge 2(1+\delta-3\varepsilon)(E-2e)
-C_2(\varepsilon)(I-Q_3)(H_{r_1}+1)(I-Q_3)\\&\ge 2(1+\delta_1)E-C_2(\varepsilon)(I-Q_3)(H_{r_1}+1)(I-Q_3).
\end{align*} 
\endproof

The following type of result is called a {\it maximal velocity bound} in the
literature. We shall present a somewhat different proof than 
seen in for example  \cite{CHS1, Gr}.  It is more in the spirit of the proof of Lemma \ref{lemma:10.6.4.23.38}.  
\begin{lemma}\label{lemma:10.1.16.14.53}
If $e>0$ is sufficiently small, then
for any $u\in \chi_{[E-e,E+e]}(H){\mathcal H}$
\begin{align*}
\lim_{t\to\infty}\inp{\e^{-\mathrm{i}tH}u,(I-Q_{4})\e^{-\mathrm{i}tH}u}=0.
\end{align*}
\end{lemma}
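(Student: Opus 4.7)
The plan is to apply the method of propagation observables à la \cite{SS,Gr}. Take $\Phi(t)=I-Q_4(t)=F(r^2/t^2)$ with $F:=1-\chi_{-,2E_1,2E_2}$, so that $F\ge 0$, $F'\ge 0$, $F'\in C^\infty_{\mathrm{c}}((2E_1,2E_2))$, and $F=1$ on $[2E_2,\infty)$. Writing $u(t)=\mathrm{e}^{-\mathrm{i}tH}u$ for $u\in\chi_{[E-e,E+e]}(H)\vH$, the goal reduces to showing $\inp{u(t),\Phi(t)u(t)}\to 0$.

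First, I would compute the Heisenberg derivative. Using $\partial_iF(r^2/t^2)=F'(r^2/t^2)(2r/t^2)\partial_ir$ and Lemma \ref{prop:09.12.8.11.10} applied with $\phi=\Phi$,
\begin{equation*}
D_H\Phi=-\tfrac{2r^2}{t^3}F'(r^2/t^2)+\tfrac{r}{t^2}F'(r^2/t^2)p_r+p_r^*\tfrac{r}{t^2}F'(r^2/t^2)=2\,{\rm Re}\bigl\{\tfrac{rF'(r^2/t^2)}{t^2}(p_r-r/t)\bigr\},
\end{equation*}
up to a symmetrization commutator $[p_r,rF'(r^2/t^2)/t^2]=O(t^{-3/2-\kappa})$ from \eqref{eq:49}. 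On $\mathop{\mathrm{supp}} F'(r^2/t^2)$ we have $r/t\ge\sqrt{2E_1}$ and in particular $r\ge t\sqrt{2E_1}$, the latter yielding $|V|=O(t^{-1-\eta})$ there by Condition \ref{cond:10.6.1.16.24}. Combined with the form bound $p_r^*p_r\le 2H_0$ from \eqref{eq:15} and the choice of $e>0$ so small that $\sqrt{2(E+e)}<\sqrt{2E_1}$, the quadratic form of ${\rm Re}(p_r-r/t)$ is bounded above by $-c_0<0$ on $\chi_{[E-e,E+e]}(H)\vH$ restricted to the support of $F'$. Commuting the energy cutoffs through $F'(r^2/t^2)$ via Lemma \ref{lem:comm-compss} then gives
\begin{equation*}
\chi_{[E-e,E+e]}(H)\,D_H\Phi\,\chi_{[E-e,E+e]}(H)\le -\tfrac{c}{t}\chi_{[E-e,E+e]}(H)F'(r^2/t^2)\chi_{[E-e,E+e]}(H)+R(t),
\end{equation*}
with $c>0$ and $\|R(t)\|_{\vB(\vH)}=O(t^{-1-\varepsilon'})$ integrable.

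Integrating from $1$ to $T$ and using $0\le\Phi\le 1$ yields the propagation estimate $\int_1^\infty t^{-1}\inp{u(t),F'(r^2/t^2)u(t)}\,\mathrm{d}t<\infty$. The integrable upper bound $D_H\Phi\le R(t)$ then makes the map $t\mapsto\inp{u(t),\Phi(t)u(t)}$ Cauchy as $t\to\infty$, hence convergent to some $\ell\ge 0$. To conclude $\ell=0$ I would iterate the argument with a nested  observable $\Phi_1(t)=F_1(r^2/t^2)$ whose profile $F_1$ has support shifted so that $F_1'\ge cF$ on the relevant interval; combining the resulting propagation estimate for $\Phi_1$ with the previous one forces $\ell=0$ by contradiction.

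The main technical obstacle is verifying that the error term $R(t)$ in the operator inequality for $D_H\Phi$ is in fact of integrable order in $t$, which decomposes into: the symmetrization commutator $[p_r,rF'/t^2]$, controlled at rate $t^{-3/2-\kappa}$ by Condition \ref{cond:10.9.2.11.13} via \eqref{eq:49}; and the energy-cutoff commutator $[\chi_{[E-e,E+e]}(H),F'(r^2/t^2)]$, controlled at rate $t^{-1/2-\kappa}\inp{H}^{-1}$ by Lemma \ref{lem:comm-compss} through the Helffer-Sjöstrand formula \eqref{10.6.3.18.5}. Condition \ref{cond:10.9.2.11.13} is essential at precisely both steps, reflecting the low regularity of the Riemannian metric compared to prior work.
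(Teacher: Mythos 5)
Your computation of $D_H\Phi$ and the resulting propagation estimate $\int_1^\infty t^{-1}\inp{u(t),F'(r^2/t^2)u(t)}\,\d t<\infty$ are sound (this is essentially the paper's \eqref{eq:12d}), and the one-sided integrable bound does give convergence of $\inp{u(t),\Phi(t)u(t)}$ to some $\ell\ge0$. The gap is in the final step, $\ell=0$. Your nested observable cannot exist: since $F=1-\chi_{-,2E_1,2E_2}$ is identically $1$ on the unbounded set $[2E_2,\infty)$, any profile $F_1$ with $F_1'\ge cF$ there would be unbounded, so no bounded propagation observable dominates $F$ itself. Equivalently, your propagation estimate only controls the time-average of the \emph{transition-region} probability $\inp{u(t),F'(r^2/t^2)u(t)}$, not of $\inp{u(t),F(r^2/t^2)u(t)}$; without the latter you cannot extract a sequence $t_n$ along which $\inp{u(t_n),\Phi(t_n)u(t_n)}\to0$, and the value of $\ell$ is only bounded by the (not small) initial datum $\inp{u,F(r^2)u}$ plus integrable errors.

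The missing idea is a preliminary reduction of the far region to a compact one. The paper's proof first shows (Step 1) that
\begin{equation*}
\lim_{N\to\infty}\limsup_{t\ge1}\inp{u(t),\bigl(I-\chi_{-,N,2N}(r^2/t^2)\bigr)u(t)}=0,
\end{equation*}
using that on $\supp\chi_{-,N,2N}'$ one has $r/t\ge\sqrt N$, which for $N$ large dominates the energy-bounded momentum $\|p_rQ_3\|\le C$ \emph{uniformly in $N$}, so the Heisenberg derivative is nonpositive up to an $N$-uniform integrable error; the smallness then comes from the initial data as $N\to\infty$. This reduces the lemma to the compactly supported piece $\psi=\chi_{2E_1,2E_2,N,2N}$ for fixed $N$, and only for such compactly supported $\psi$ does your integrated-observable device work: one takes $\chi(s)=\int_{-\infty}^s\psi(\beta^2)\,\d\beta$ (bounded precisely because $\supp\psi$ is compact), obtains $\int_1^\infty t^{-1}\inp{u(t),\psi(r^2/t^2)u(t)}\,\d t<\infty$, extracts a sequence $t_n$ with $\inp{u(t_n),\psi(r^2/t_n^2)u(t_n)}\to0$, and upgrades to a full limit via integrability of the derivative (which requires a second application of the same estimate to an auxiliary $\tilde\psi$ covering $\supp\psi'$). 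Without the Step-1 cutoff at $r^2/t^2\approx N$ your argument does not close.
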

\proof

\smallskip
\noindent
\textit{Step 1.}
Set $\chi=\chi_{-,N,2N}$. We first prove 
\begin{align}
\lim_{N\to\infty}\limsup_{t\ge 1}{}\inp{\e^{-\mathrm{i}tH}u,\big(I-\chi\bigl(r^2/t^2\big)\big)\e^{-\mathrm{i}tH}u}=0.
\label{10.1.16.13.19}
\end{align}
 For that it suffices to show that there
exist $\delta'>0$ and $R\in{\mathcal B}({\mathcal
    H})$  such that 
\begin{subequations}
\begin{align}\label{eq:37l}
\|R\|&\leq Ct^{-1-\delta'}\text{ uniformly in } N\geq1,\\
\tfrac{\mathrm{d}}{\mathrm{d}t}\inp{\e^{-\mathrm{i}tH}u,\big(I-\chi\bigl(r^2/t^2\big)\big)\e^{-\mathrm{i}tH}u}
&\le \inp{u,Ru}.
\label{eq:10.6.4.13.13l}
\end{align}
\end{subequations}
We calculate, cf. \eqref{eq:3} and (\ref{eq:18}),
\begin{equation}\label{eq:18w}
 \D _H\chi(r^2/t^2)=
 \begin{cases}
   \tfrac{r}
{t^2} \chi'(r^2/t^2)(p_r-r/t)+ {\rm h.c.}\\
\tfrac{r}
{t^2} \chi'(r^2/t^2)(\nabla r)_ig^{ij}(p-\nabla K)_j+ {\rm
  h.c.}
 \end{cases}. 
\end{equation} We will now use the first identity in
 \eqref{eq:18w}.  (For the second identity we use implicitly that $t$ is
 large.) Clearly there is here the  positive term 
 $-2\tfrac {r^2}{t^3} \chi'(r^2/t^2)$ which  for $t$ 
 large is equal to  $2\tfrac{r}
{t^2} \tilde\chi^2(r/t)$ where $\tilde\chi(s)
=\sqrt{-|s|\chi'(s^2)}$.  The remaining term is
 symmetrized as
 \begin{equation}\label{eq:44}
   \tfrac{r}
{t^2} \chi'(r^2/t^2)p_r+ {\rm
  h.c.}=-t^{-1}\tilde\chi(r/t)(p_r+p_r^*)\tilde\chi(r/t).
 \end{equation}
Next we use \eqref{eq:15}  with $\chi=1$
\begin{equation*}
  \|p_rQ_3\|, \|Q_3p_r^*\|\leq C=\sqrt{2\|Q_3H_0Q_3\|},
\end{equation*} yielding the lower bound 
\begin{align*}
  \tfrac{r}
{t^2} \chi'(r^2/t^2)(p_r-r/t)+ {\rm h.c.}\geq&2(\sqrt
N-C)t^{-1}\tilde\chi^2(r/t)\\
& -t^{-1}\tilde\chi(r/t)\parb{p_r(I-Q_3)+(I-Q_3)p_r^*}\tilde\chi(r/t).
\end{align*}
Due to Lemma
\ref{lem:comm-compss} the second term to the right contributes to
\eqref{eq:10.6.4.13.13l} by a term  whose
norm is bounded by 
$CN^{-1}t^{-2}$. Whence for $N\geq C^2$ indeed we obtain  \eqref{eq:37l}
and \eqref{eq:10.6.4.13.13l} with $\delta'=1$.

Due to Step 1  it suffices to show that for any fixed $N$ 
\begin{equation}
\lim_{t\to\infty}\inp{\e^{-\mathrm{i}tH}u,\psi\bigl(r^2/t^2\big)\e^{-\mathrm{i}tH}u}=0,
\label{10.1.15.11.12}
\end{equation}
where 
$\psi=\chi_{2E_1,2E_2,N,2N}$.
For that we need two more steps.

\smallskip
\noindent
\textit{Step 2.} 
We prove  that 
\begin{align}
\int_1^\infty\inp{\e^{-\mathrm{i}tH}u,\psi\bigl(r^2/t^2\big)\e^{-\mathrm{i}tH}u}\,t^{-1}\d t<\infty.
\label{10.1.20.20.08}
\end{align}
Put
\begin{align*}
\chi(s)=\int_{-\infty}^{s} \psi(\beta^2)\,\mathrm{d}\beta,
\end{align*}
and compute, cf. \eqref{eq:18}, \eqref{eq:18w}  and \eqref{eq:44}, 
\begin{align*}
&{}\tfrac{\mathrm{d}}{\mathrm{d}t}
\e^{\mathrm{i}tH}\chi(r/t)\e^{-\mathrm{i}tH}\\
&{}=\tfrac{1}{2t}\e^{\mathrm{i}tH}
\psi\bigl(r^2/t^2\big)^{1/2}\parb{(p_r-r/t)+{\rm h.c.}}
\psi\bigl(r^2/t^2\big)^{1/2}
\e^{-\mathrm{i}tH}\\
&{}\le \tfrac{1}{t}\e^{\mathrm{i}tH}
\psi\bigl(r^2/t^2\big)^{1/2}\parb{\Re p_r-\sqrt{2E_1}}
\psi\bigl(r^2/t^2\big)^{1/2}
\e^{-\mathrm{i}tH}.
\end{align*} Next to treat the contribution from $\Re p_r$ we proceed again as in Step 1 
inserting  factors of $Q_3$, and using  (in the first estimation) that by \eqref{eq:15} 
\begin{equation*}
 \Re p_r\leq \epsilon/2 +\epsilon^{-1}H_0\text{ for all } \epsilon>0,
\end{equation*}
\begin{align*}
  &\tfrac{1}{t}\e^{\mathrm{i}tH}
\psi\bigl(r^2/t^2\big)^{1/2}\Re p_r
\psi\bigl(r^2/t^2\big)^{1/2}
\e^{-\mathrm{i}tH}\\
&=\tfrac{1}{t}\e^{\mathrm{i}tH}
\psi\bigl(r^2/t^2\big)^{1/2}Q_3\Re p_rQ_3
\psi\bigl(r^2/t^2\big)^{1/2}
\e^{-\mathrm{i}tH}+\text{ remainder}\\
&\le \tfrac{1}{t}\e^{\mathrm{i}tH}
\psi\bigl(r^2/t^2\big)^{1/2}Q_3\parb{\epsilon/2+\epsilon^{-1}(H-V)}Q_3
\psi\bigl(r^2/t^2\big)^{1/2}
\e^{-\mathrm{i}tH}+\text{ remainder}\\
&\le \tfrac{1}{t}\e^{\mathrm{i}tH}
\psi\bigl(r^2/t^2\big)^{1/2}Q_3\parb{\epsilon/2+\epsilon^{-1}(E+2e-V)}Q_3
\psi\bigl(r^2/t^2\big)^{1/2}
\e^{-\mathrm{i}tH}+\text{ remainder}\\
&\le \tfrac{1}{t}\parb{\epsilon/2+\epsilon^{-1}(E+2e)}
\e^{\mathrm{i}tH}
\psi\bigl(r^2/t^2\big)
\e^{-\mathrm{i}tH}+\text{ remainder}.
\end{align*}  
 The  remainders are treated
by Lemma
\ref{lem:comm-compss} and Condition \ref{cond:10.6.1.16.24}. They have
norms bounded by 
  $Ct^{-2}$.  Taking $\epsilon=\sqrt{2(E+e)}$ we thus obtain
\begin{align*}
&{}\tfrac{\mathrm{d}}{\mathrm{d}t}
\chi_{[E-e,E+e]}(H)\e^{\mathrm{i}tH}\chi(r/t)\e^{-\mathrm{i}tH}\chi_{[E-e,E+e]}(H)\\
&{}\leq-\tfrac{c}{t}
\chi_{[E-e,E+e]}(H)\e^{\mathrm{i}tH}
\psi\bigl(r^2/t^2\big)
\e^{-\mathrm{i}tH}\chi_{[E-e,E+e]}(H)+O(t^{-2});\\&\quad c=\sqrt{2E_1}-\sqrt{2(E+2e)}.
\end{align*} For $e>0$ small enough the constant $c>0$. Then 
\eqref{10.1.20.20.08} follows by integration and by using that $\chi$
is bounded.

\smallskip
\noindent
\textit{Step 3.} 
We prove (\ref{10.1.15.11.12}).
By \eqref{10.1.20.20.08} there exists a sequence $t_n\to \infty$ such
that
\begin{equation*}
\inp{\e^{-\mathrm{i}t_nH}u,\psi\bigl(r^2/t_n^2\bigr)\e^{-\mathrm{i}t_nH}u}\to 0.
  \end{equation*}
Thus it suffices to show that 
\begin{equation}\label{eq:19}
\int_1^\infty \bigl|\tfrac{\mathrm{d}}{\mathrm{d}t}
\inp{\e^{-\mathrm{i}tH}u,\psi\bigl(r^2/t^2\bigr)\e^{-\mathrm{i}tH}u}\bigr|\,\d t<\infty.
\end{equation}
 But by   calculations and estimations like in Step 2 we obtain that 
\begin{equation}\label{eq:43}
\bigl|\tfrac{\mathrm{d}}{\mathrm{d}t}
\inp{\e^{-\mathrm{i}tH}u,\psi\bigl(r^2/t^2\bigr)\e^{-\mathrm{i}tH}u}\bigr|
\le \tfrac{C}{t}\inp{\e^{-\mathrm{i}tH}u,|\psi'\bigl(r^2/t^2\bigr)|
\e^{-\mathrm{i}tH}u}
+\tfrac{C}{t^2}\|u\|^2.
\end{equation}
 We fix a non-negative function $\tilde \psi\in
 C_{\mathrm{c}}^\infty((E+E_1,\infty))$ with $\tilde \psi \psi'=\psi'$. By using
 Step 2 to this function (instead of $\psi$) we obtain
 \eqref{10.1.20.20.08} with $\psi$ replaced 
 $\tilde \psi$ (possibly by taking $e>0$ smaller).
 Combining this bound with \eqref{eq:43} we obtain \eqref{eq:19}, and 
hence the lemma follows.
\endproof

\begin{corollary}\label{cor:10.6.9.4.24}
For  all  $ u\in \chi_{[E-e,E+e]}(H)\vH$ with $e>0$  taken sufficiently small the bound \eqref{eq:12d} holds.
\end{corollary}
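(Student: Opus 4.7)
The proof will adapt Step 2 of the proof of Lemma~\ref{lemma:10.1.16.14.53} almost verbatim. Set $G(s):=-(\chi^2_{-,2E_1,2E_2})'(s)$; since $\chi_{-,2E_1,2E_2}\geq 0$ is non-increasing, we have $G\geq 0$ with $\supp G\subseteq [2E_1,2E_2]$, and the regularity assumptions $\chi_{a,b,c,d}^{1/2}, |\chi'_{a,b,c,d}|^{1/2}\in C^\infty(\R)$ ensure that $G^{1/2}$ is smooth. In particular, the integrand in \eqref{eq:12d} is pointwise non-negative, and the task reduces to furnishing a uniform upper bound. To this end introduce the bounded propagation observable
\[\Phi(s):=\int_{-\infty}^{s}G(\beta^2)\d\beta,\]
so that $\Phi'(s)=G(s^2)$.

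Proceeding as in the derivation of \eqref{eq:44} (using \eqref{eq:18w} and symmetrizing $G\,p_r+p_r^*G=G^{1/2}(p_r+p_r^*)G^{1/2}+[\text{commutator}]$), one finds
\[\mathrm{D}_H\Phi(r/t)=\tfrac{1}{2t}G(r^2/t^2)^{1/2}\bigl((p_r-r/t)+{\rm h.c.}\bigr)G(r^2/t^2)^{1/2}+R(t),\]
where the remainder $R(t)$ has norm $O(t^{-2})$. On $\supp G(r^2/t^2)$ we have $r/t\geq \sqrt{2E_1}$. On the other hand, for $u\in \chi_{[E-e,E+e]}(H)\vH$ we have $Q_3u=u$, so we may freely insert $Q_3$ on both sides of the middle factor (commutator errors absorbed into $R(t)$ via Lemma~\ref{lem:comm-compss}). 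Using \eqref{eq:15} together with $Q_3HQ_3\leq(E+2e)Q_3^2$ and Condition~\ref{cond:10.6.1.16.24}, the arithmetic--geometric inequality $\Re p_r\leq \varepsilon/2+\varepsilon^{-1}(H-V)$ with $\varepsilon=\sqrt{2(E+e)}$ yields $Q_3\Re p_rQ_3\leq \sqrt{2(E+e)}\,Q_3^2+C\langle r\rangle^{-1-\eta}$. Choosing $e>0$ so small that $c:=\sqrt{2E_1}-\sqrt{2(E+e)}>0$ we deduce
\[\tfrac{d}{dt}\inp{\e^{-\i tH}u,\Phi(r/t)\e^{-\i tH}u}\leq -\tfrac{c}{t}\inp{\e^{-\i tH}u,G(r^2/t^2)\e^{-\i tH}u}+Ct^{-2}\|u\|^2.\]

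Integrating this inequality from $1$ to $T$ and using that $\Phi$ is bounded yields, uniformly in $T\geq 1$,
\[c\int_1^T\tfrac{1}{t}\inp{\e^{-\i tH}u,G(r^2/t^2)\e^{-\i tH}u}\d t\leq 2\|\Phi\|_\infty\|u\|^2+C\|u\|^2,\]
whence \eqref{eq:12d} follows upon letting $T\to\infty$. The only substantive bookkeeping is tracking the commutator errors coming from the symmetrization to the $G^{1/2}\cdots G^{1/2}$--form and from commuting $Q_3$ through $G(r^2/t^2)^{1/2}$; in both cases Lemma~\ref{lem:comm-compss} provides the required $O(t^{-2})$ bound after taking inner products against states in $\chi_{[E-e,E+e]}(H)\vH$, exactly as in Step 2 of Lemma~\ref{lemma:10.1.16.14.53}.
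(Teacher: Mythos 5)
Your proof is correct and is essentially the paper's own (implicit) argument: Corollary~\ref{cor:10.6.9.4.24} is obtained by rerunning Step~2 of the proof of Lemma~\ref{lemma:10.1.16.14.53} with the localization function $\psi$ there replaced by $G=-(\chi^2_{-,2E_1,2E_2})'$, which is exactly what you do, using that $G\ge 0$, $G^{1/2}$ is smooth and $\supp G\subset(2E_1,2E_2)$ forces $r/t\ge\sqrt{2E_1}$ on the relevant region. The only blemishes are cosmetic (the constant from the arithmetic--geometric step is $\varepsilon/2+\varepsilon^{-1}(E+2e)$ rather than $\sqrt{2(E+e)}$, so $c$ should read $\sqrt{2E_1}-\sqrt{2(E+2e)}$ up to $O(e)$ corrections), and they do not affect the conclusion.
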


Next we shall show a version of the key phase space propagation
estimate of \cite{Gr, SS} using here the
quantities enlisted before \eqref{eq:41}--\eqref{eq:12}.
\begin{lemma}
  \label{lemma:key_prel-local-pert} For all $u\in
  \chi_{[E-e,E+e]}(H){\mathcal H}$ and $\chi\in
  C_{\mathrm{c}}^\infty(\R_+)$:
  \begin{equation}
    \label{eq:45}
    \int_1^\infty
\inp{\chi(r/t)\e^{-\mathrm{i}tH}u,W^{1/2}\chi(r/t)\e^{-\mathrm{i}tH}u}t^{-1}\d
t<\infty.
  \end{equation}
\end{lemma}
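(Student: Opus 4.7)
The plan is to build a bounded propagation observable $\Phi(t)$ whose Heisenberg derivative $\mathrm D_H\Phi(t)$ dominates $t^{-1}\chi(r/t)\,W^{1/2}\,\chi(r/t)$ modulo a norm-integrable remainder. Then a routine integration, exactly in the spirit of Lemmas~\ref{lemma:10.6.4.23.38} and~\ref{lemma:10.1.16.14.53}, yields the estimate. The natural candidate is, for a small $\epsilon\in(0,1)$ (to be fixed below) and a cutoff $\eta\in C_{\mathrm c}^\infty(\R_+)$ chosen with $\eta\chi=\chi$,
\begin{equation*}
 \Phi(t)=\eta(r/t)\,\tilde G(t)\,\eta(r/t),
\end{equation*}
with $\tilde G(t)$ as in Subsection~\ref{sec:10.6.8.4.0}. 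Observe that for $t$ large and $x\in\supp\eta(r/t)$ we have $t^{2\epsilon-2}r(x)^2\to\infty$ uniformly, so $f'(\cdot)=1$ and $\tilde K=K$ on that set.

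First I would verify that $\Phi(t)$ is uniformly bounded on $u_t=\e^{-\i tH}u$. Writing $\tilde G=H_0-W_{\tilde K}$ and using energy localization ($H u_t$ bounded), together with the maximal velocity bound (Lemma~\ref{lemma:10.1.16.14.53}) which effectively restricts $r/t$ to a compact subset of $(0,\infty)$, one checks that both $\eta H_0\eta$ and $\eta W_{\tilde K}\eta$ have uniformly bounded expectation on $u_t$; here it is used that $|\partial_i\tilde K|\le r/t$ is controlled on $\supp\eta$. Hence $\sup_{t\ge 1}|\langle u_t,\Phi(t)u_t\rangle|\le C\|u\|^2$.

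Next I would compute
\begin{equation*}
 \mathrm D_H\Phi=\eta(\mathrm D_H\tilde G)\eta+(\mathrm D_H\eta)\tilde G\eta+\eta\tilde G(\mathrm D_H\eta).
\end{equation*}
For the middle term, Lemma~\ref{lemma:comm-comptilkde} combined with Condition~\ref{cond:conv} and $f'(\cdot)=1$ on $\supp\eta$ gives the positive lower bound $\frac{2(1+\delta)}{t}\eta W\eta$ modulo a norm-integrable error of order $t^{-\epsilon'-1}(H+1)$. The cross terms involve $\mathrm D_H\eta=(2t)^{-1}\bigl(\eta'(p_r-r/t)+\mathrm{h.c.}\bigr)$; exploiting $|\nabla r|_g=1$ one has the operator bound $(p_r-r/t)^*(p_r-r/t)\le 2W$, and Cauchy-Schwarz together with boundedness of $\tilde G\eta$ on $u_t$ yields a control of these cross terms of order $t^{-1}\cdot W^{1/2}$ against bounded factors. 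Using $\eta\chi=\chi$ to tie the positive main term to $\chi$, one obtains
\begin{equation*}
 \mathrm D_H\Phi\ge \tfrac{c}{t}\,\chi(r/t)\,W^{1/2}\,\chi(r/t)+R(t),\qquad \int_1^\infty\|R(t)\|\,\mathrm dt<\infty.
\end{equation*}
Integrating from $1$ to $T$ and using Step~1 to bound $\langle u_T,\Phi(T)u_T\rangle-\langle u_1,\Phi(1)u_1\rangle$ uniformly in $T$ gives the claimed integrability.

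The hard part will be the cross-term analysis. The operator $\tilde G$ itself fails to be bounded, so one needs the compact-support localization $\eta$ together with the maximal velocity estimate to make $\tilde G\eta u_t$ a bounded vector, and then the natural Cauchy-Schwarz estimate loses a square root: it only gives $W^{1/2}$ on the right, not the a priori stronger $W$. This is precisely the reason the lemma is stated with $W^{1/2}$. A secondary technical nuisance is that one must verify that the commutators arising when $\mathrm D_H\tilde G$ is sandwiched between $\eta$'s, as well as the potential contribution $\i[V,\tilde G]$, produce only remainders of order $t^{-1-\epsilon'}$ (which integrate against $\mathrm dt$); this is achieved by appealing to the short-range bound in Condition~\ref{cond:10.6.1.16.24} and to Lemma~\ref{lem:comm-comp} and Corollary~\ref{cor:furth-comm-comp}.
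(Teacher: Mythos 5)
There is a genuine gap, and it is the central point of the lemma. Your propagation observable $\Phi(t)=\eta(r/t)\,\tilde G(t)\,\eta(r/t)$ has, by Lemma~\ref{lemma:comm-comptilkde} and Condition~\ref{cond:conv}, a Heisenberg derivative whose main positive term is of order $t^{-1}W$, not $t^{-1}W^{1/2}$: on $\supp\eta(r/t)$ one has $f'(\cdot)=1$ and $(\nabla^2r^2)^{ij}\ge(1+\delta)g^{ij}$, so the leading term is $\tfrac{1+\delta}{t}(p-\partial K)^*g(p-\partial K)=\tfrac{2(1+\delta)}{t}W$. Integrating therefore yields only $\int_1^\infty\inp{\chi(r/t)u_t,W\chi(r/t)u_t}t^{-1}\,\d t<\infty$, i.e.\ the intermediate estimate \eqref{eq:45bk} of the paper. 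This does \emph{not} imply \eqref{eq:45}: since $W$ decays along the evolution, one is in the regime $W\le 1$ where $W\le W^{1/2}$, so the $W^{1/2}$-estimate is the \emph{stronger} statement (your remark that Cauchy--Schwarz "only gives $W^{1/2}$, not the a priori stronger $W$" has the inequality backwards; the cross terms are error terms and cannot produce the main positive term in any case). The missing idea is the paper's second stage: one takes the uniformly bounded observable $B^*\sqrt{W+t^{-\sigma}}\,B$ with $B=\chi(r/t)\chi_e(H)$, writes the square root via \eqref{eq:26oo}, computes $\mathrm D_{H_0}\sqrt{W+t^{-\sigma}}$ inside the $s$-integral using Corollary~\ref{cor:09.11.29.16.7}, and extracts the main term $-\tfrac{c}{t}B^*W^{1/2}B$; the remainders are then controlled by Lemma~\ref{lemma:furth-comm-comp1} and by the already-established $W$-estimate \eqref{eq:45bk}. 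Without this step the lemma is not proved.

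A secondary problem is your choice of a two-sided cutoff $\eta\in C_{\mathrm c}^\infty(\R_+)$ inside the observable. The cross terms $(\mathrm D_H\eta)\tilde G\eta+\mathrm{h.c.}$ contain a contribution from the \emph{inner} edge of $\supp\eta$, where $\eta'>0$; it has no sign and, at this stage of the argument, no integrated propagation estimate is available there (the minimal velocity bound \eqref{eq:12} is proved only later, and its proof uses the present lemma). Your proposed Cauchy--Schwarz bound of these terms by "$t^{-1}\cdot W^{1/2}$ against bounded factors" is not integrable in $t$ by itself. The paper avoids this by sandwiching $\tilde G$ between the one-sided cutoffs $Q_4$ (only an outer edge), whose derivative contribution is exactly what the maximal-velocity integral estimate \eqref{eq:12d} controls, and by deriving the localized $W$-estimate \eqref{eq:45bk} from \eqref{eq:45b} purely algebraically rather than by differentiating a cutoff.
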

\begin{proof} Let  $u\in
  \chi_e(H){\mathcal H}$. Here and henceforth we  abbreviate
  $\chi_e(H)=\chi_{[E-e,E+e]}(H)$. Let  $\chi\in
  C_{\mathrm{c}}^\infty(\R_+)$ and choose  $\hat\chi\in
  C_{\mathrm{c}}^\infty(\R_+)$ with $\hat\chi=1$ on the support of $\chi$. Possibly by enlarging $E_1<E_2$ we can assume
  that $\hat\chi(r/t)=\hat\chi(r/t)Q_4$. We are going to use the conclusion of
  Corollary \ref{cor:10.6.9.4.24} for this $Q_4$. Note here that such
  ``enlargement'' is doable uniformly in the small
  parameter $e>0$.

\noindent 
\textit{Step 1.} We show that for any $\epsilon\in (0,1)$
\begin{equation}
    \label{eq:45b}
    \int_1^\infty
|\inp{ (p_i-\partial_i K)Q_4\e^{-\mathrm{i}tH}u,
f'(t^{2\epsilon-2}r^2)(\nabla^2r^2)^{ij}(p_j-\partial_jK)Q_4\e^{-\mathrm{i}tH}u}|t^{-1}\d
t<\infty.
  \end{equation}We shall use the family of observables
  $\chi_e(H)Q_4\tilde
  GQ_4\chi_e(H)$. Clearly this family is bounded.  We calculate the
  Heisenberg derivative. For the 
  leading term coming from the derivative 
  of $\tilde G$ we invoke Lemmas \ref{lemma:comm-comptilkde} and
  \ref{lem:comm-compss} yielding 
\begin{align*}
&\chi_e(H)Q_4(\mathrm{D}_{H}\tilde G)Q_4\chi_e(H)\\
&\geq\tfrac{1}{2t}\chi_e(H)Q_4(p_i-\partial_i K)^*
f'(\cdot)(\nabla^2r^2)^{ij}(p_j-\partial_jK)Q_4\chi_e(H) +O(t^{-\epsilon'-1}).
\end{align*} 
  
Combining this estimate with
\begin{equation}
    \label{eq:45bc}
    \int_1^\infty
|\Re \inp{ (\D_HQ_4)\e^{-\mathrm{i}tH}u,\tilde GQ_4\e^{-\mathrm{i}tH}u}|\d
t<\infty,
  \end{equation}  
we obtain \eqref{eq:45b} by integration. In turn \eqref{eq:45bc}
follows using first that
\begin{align*}
  &\chi_e(H)(\D_HQ_4)\tilde GQ_4\chi_e(H)+\text{
    h.c. }\\ &=\chi_e(H)\Re\parb{(\D_HQ_4^2)\tilde G}\chi_e(H)+O(t^{-2}),
\end{align*} and then invoking \eqref{eq:18w} and  Lemma \ref{lem:comm-compss}  to rewrite the first term as
\begin{equation*}
  \chi_e(H)\Re\parb{(\D_HQ_4^2)\tilde G}\chi_e(H)=t^{-1}\chi_e(H)\Re\parb{\tilde\chi'(r/t)B}\chi_e(H)+O(t^{-3/2-\kappa})
\end{equation*} with  $\tilde\chi(s)=\chi_{-,2E_1,2E_2}^2(s^2)$ and
$B=B(t)$  being  uniformly bounded. Using \eqref{eq:28ss2} again we  also
conclude that 
 \begin{equation}
   \label{eq:46}
-\Re\parb{\tilde\chi'(\cdot)B}=
\sqrt{|\tilde\chi'|(\cdot)}\Re(B)\sqrt{|\tilde\chi'|(\cdot)}+O(t^{-1}).
 \end{equation}  Whence  the contribution from the first term
 in \eqref{eq:46} can be   treated by using \eqref{eq:12d}  while the
 contribution from the second term as well as previous error terms clearly are integrable.

\noindent
\textit{Step 2.}  We show \eqref{eq:45}. From \eqref{eq:45b} we can deduce the estimate
\begin{equation}
    \label{eq:45bk}
    \int_1^\infty
\inp{ \chi(r/t)\e^{-\mathrm{i}tH}u,W\chi(r/t)\e^{-\mathrm{i}tH}u}t^{-1}\d
t<\infty.
  \end{equation} This estimate also holds with  $\chi\to \hat\chi$. The proof goes as follows: First we estimate  (using commutation)
\begin{align*}
 &\|\sqrt{W}\chi(r/t)\e^{-\mathrm{i}tH}u\|^2\\
    &=\|\sqrt{W} \chi(r/t)
      Q_4\e^{-\mathrm{i}tH}u\|^2\\
&\leq \inp{ (p_i-\partial_i K) Q_4\e^{-\mathrm{i}tH}u, \chi(r/t)^2
g^{ij}(p_j-\partial_jK) Q_4\e^{-\mathrm{i}tH}u}+Ct^{-1}\|u\|^2\\
&\leq \parb{\sup | \chi|^2}\| f'(t^{2\epsilon-2}r^2)^{1/2}(p-\nabla K) Q_4\e^{-\mathrm{i}tH}u\|^2+t^{-1}C\|u\|^2.
  \end{align*}
Due to 
Condition \ref{cond:conv} and \eqref{eq:45b} we have for a large
$t_0\ge 1$
\begin{align*}
    &\int_1^\infty\| f'(t^{2\epsilon-2}r^2)^{1/2}(p-\nabla K) Q_4\e^{-\mathrm{i}tH}u\|^2t^{-1}\,\d t\\ 
&\leq \int_1^{t_0}\| f'(t^{2\epsilon-2}r^2)^{1/2}(p-\nabla K) Q_4\e^{-\mathrm{i}tH}u\|^2t^{-1}\,\d t\\
&\phantom{{}\leq {}}{}+(1+\delta)^{-1}\int_{t_0}^\infty
\inp{ (p_i-\partial_i K)Q_4\e^{-\mathrm{i}tH}u,
f'(t^{2\epsilon-2}r^2)(\nabla^2r^2)^{ij}(p_j-\partial_jK)Q_4\e^{-\mathrm{i}tH}u}t^{-1}\d t\\
&<\infty,
  \end{align*} showing then \eqref{eq:45bk}.

Consider now the uniformly bounded observables
  (cf. \eqref{eq:48} given below)
  \begin{equation*}
  B^*
  \sqrt{W+t^{-\sigma}}B;\;B:=\chi(r/t)\chi_e(H)\mand \sigma\in(0,2).
  \end{equation*} 
We use \eqref{eq:26oo} to write 
\begin{equation}\label{eq:26ooc}
\sqrt{W+t^{-\sigma}}= \pi^{-1}\int_0^\infty s^{-1/2} (W+t^{-\sigma}+s)^{-1}(W+t^{-\sigma})\d s,
\end{equation} and then in turn calculate
\begin{equation*}
\D_{H_0}\sqrt{W+t^{-\sigma}}= \pi^{-1}\int_0^\infty s^{1/2}
(W+t^{-\sigma}+s)^{-1}\parb{(\D_{H_0}W)-\sigma
  t^{-\sigma-1}}(W+t^{-\sigma}+s)^{-1}\d s.
\end{equation*}
  Next we apply Corollary
  \ref{cor:09.11.29.16.7}  and use the Cauchy Schwarz inequality as in the proof of Lemma
  \ref{lemma:10.6.4.23.38}. This   leads  to the following bound for any $r_1>r_0$ and suitable
  constants $c,C>0$:
\begin{align}
\begin{split}
    &B^*\parbb{\D_{H_0}\sqrt{W+t^{-\sigma}}}B\\
  &\leq
    -\tfrac{c}{t}B^*W^{1/2}B
    +\tfrac{C}{t}\int_0^\infty s^{1/2}B^*(W+t^{-\sigma}+s)^{-1}\\
  &\phantom{{}\le{}} \bigl[
    (p_i-\partial_iK)^* \chi_{-,r_0,r_1}(r)g^{ij}
  (p_j-\partial_jK)+t^2|\tilde{\gamma}|^2\bigr]
   (W+t^{-\sigma}+s)^{-1}B\,\d s.
   \end{split}\label{eq:11.7.3.7.23}
  \end{align}   
Next we note  that (minus) the integral of the first term is
  the quantity that enters in \eqref{eq:45}, so it suffices to show
  that the contribution from the second term   is integrable  as well
  as to show the bounds
  \begin{subequations}
\begin{align}
    \label{eq:45bcd}  &\int_1^\infty\|B^*V\sqrt{W+t^{-\sigma}}B\|\,\d
t<\infty,\\
    &\int_1^\infty
|\inp{ (\D_{H_0}\chi(r/t))\e^{-\mathrm{i}tH}u,\sqrt{W+t^{-\sigma}}\chi(r/t)\e^{-\mathrm{i}tH}u}|\,\d
t<\infty.\label{eq:47}
  \end{align}
  \end{subequations}
As for the contribution from the second term in the square bracket in (\ref{eq:11.7.3.7.23}) we estimate using Lemma \ref{lemma:furth-comm-comp1} with $d>c>0$
chosen to the left of the support of $\chi$, $\epsilon >0$ such that
$2-2\epsilon=\sigma$ and  $N=N(s,t)\geq 1$  such that $Nt^{-\sigma}=t^{-\sigma}+s$ 

\begin{align}
\begin{split}
  &\int_0^\infty
s^{1/2}\|B^*(W+t^{-\sigma}+s)^{-1}t^2|\tilde{\gamma}|^2\chi_{-,c,d}(r/t)(W+t^{-\sigma}+s)^{-1}B\|\,\d s\\
&\leq C_nt^{-\epsilon n}\int_0^\infty
s^{1/2}(t^{-\sigma}+s)^{-2}\,\d
s\\&=\tilde C_nt^{\sigma/2-\epsilon n},
\end{split}\label{eq:11.7.3.7.24}
\end{align}
Choosing $n\in \N$ large enough gives integrability in $t$ of this contribution. Similarly
\begin{align*}
  &\int_0^\infty
s^{1/2}\|B^*(W+t^{-\sigma}+s)^{-1}t^2|\tilde{\gamma}|^2\chi_{c,d,+}(r/t)(W+t^{-\sigma}+s)^{-1}B\|\,\d s\\
&\leq Ct^{-1-2\kappa}\int_0^\infty
s^{1/2}(t^{-\sigma}+s)^{-2}\,\d
s\\&=\tilde Ct^{\sigma/2-1-2\kappa},
\end{align*}  yielding integrability in $t$ of this contribution. 
As for the contribution from the first term in the square bracket in (\ref{eq:11.7.3.7.23}) 
we decompose 
\begin{equation*}
    (p_i-\partial_iK)^* \chi_{-,r_0,r_1}(r)g^{ij}
  (p_j-\partial_jK)=2W\chi_{-,r_0,r_1}(r)+\i (p_i-\partial_iK)^* g^{ij}(\partial_jr)\chi_{-,r_0,r_1}'(r).
\end{equation*} In combination  with the factor $(W+t^{-\sigma}+s)^{-1}$ to
the left  we thus obtain the uniform bound
\begin{align*}
    &(W+t^{-\sigma}+s)^{-1}(p_i-\partial_iK)^* \chi_{-,r_0,r_1}(r)g^{ij}
  (p_j-\partial_jK)\\ &=O_{\vB (\vH)}(1)\chi_{-,r_0,r_1}(r)+O_{\vB (\vH)}\parb{(t^{-\sigma}+s)^{-1/2}}\chi_{-,r_0,r_1}'(r).
\end{align*} Next we note that it suffices to consider integrability
in $t\in [t_0,\infty)$ for any sufficiently large $t_0\geq 1$ (rather
than in $t\in[1,\infty)$). We pick
$t_0$ such that we can freely insert the above factor
$\chi_{-,c,d}(r/t)$ to the right (for example $t_0=\max(r_1/c,1)$). Once this
factor is inserted we invoke again Lemma \ref{lemma:furth-comm-comp1}
with $\epsilon$ and $N$ chosen as above. We thus obtain for any $n\geq 0$
\begin{equation*}
  \|\chi_{-,c,d}(r/t)(W+t^{-\sigma}+s)^{-1}B\|\leq C_n t^{-n}(t^{-\sigma}+s)^{-n/2-1},
\end{equation*} and to  conclude   we need to estimate (for some $n$)
\begin{align}\label{eq:63}
\int_{t_0}^\infty\,\d t\int_0^\infty
  s^{1/2}t^{-n-1+\sigma/2}(t^{-\sigma}+s)^{-n/2-1}\,\d s<\infty.
\end{align} Indeed \eqref{eq:63} is true for any $n>1$.

So we are left with proving \eqref{eq:45bcd} and \eqref{eq:47}. As for
\eqref{eq:45bcd} we note that $\|B^*V\|=O(t^{-1-\eta})$, hence
integrable, and that
\begin{align}
\begin{split}
\|\sqrt{W+t^{-\sigma}}B\|^2&\leq 
\|B^*WB\|+C_1\leq  \|B^*\parb{2H_0+|\nabla
  K|^2}B\|+C_1
\\ &\leq 2
\|B^*HB\|+C_2
\leq C_3,
\end{split}
\label{eq:48}
\end{align} 
where we in the last step used Lemma \ref{lem:comm-compss}.

As for \eqref{eq:47} we write  
\begin{equation*}
  \D_{H_0}\chi(r/t)=t^{-1}\Re\gamma;\;\gamma:=\chi'(r/t)(p_r-r/t).
\end{equation*} We have, cf.  \eqref{eq:49} and \eqref{eq:4},
\begin{equation*}
\Re\gamma=\gamma +O(t^{-1/2-\kappa}).
\end{equation*}

Whence using the Cauchy Schwarz
inequality we can  estimate 
\begin{align*}
  &|\inp{
    (\D_{H_0}\chi(r/t))\e^{-\mathrm{i}tH}u,\sqrt{W+t^{-\sigma}}\chi(r/t)\e^{-\mathrm{i}tH}u}|\\
&\leq  t^{-1}\parbb{\| \gamma\e^{-\mathrm{i}tH}u\|+Ct^{-1/2-\kappa}\|u\|}\|\sqrt{W+t^{-\sigma}}B\e^{-\mathrm{i}tH}u\|\\
&\leq C_1 t^{-1}\parbb{\| \sqrt{W}\hat
  \chi(r/t)\e^{-\mathrm{i}tH}u\|+t^{-1/2-\kappa}}\|\sqrt{W+t^{-\sigma}}B\e^{-\mathrm{i}tH}u\|\\
&\leq C_1 t^{-1}\| \sqrt{W}\hat \chi(r/t)\e^{-\mathrm{i}tH}u\|\times 
\|\sqrt{W+t^{-\sigma}}B\e^{-\mathrm{i}tH}u\|+C_2t^{-3/2-\kappa}\\
&\leq \tfrac {C_1}{2 t}\parb{\| \sqrt{W}\hat \chi(r/t)\e^{-\mathrm{i}tH}u\|^2+\| \sqrt{W} B\e^{-\mathrm{i}tH}u\|^2}+C_3t^{-1-\sigma}+C_2t^{-3/2-\kappa}.
\end{align*} It remains to apply  the bound \eqref{eq:45bk} with
$\chi$ as well as with $\chi\to \hat\chi$.
\end{proof}

The following type of result is called a {\it minimal  velocity bound} in the
literature.
\begin{lemma}
  \label{lemma:mini}
For  all  $ u\in \chi_{[E-e,E+e]}(H)\vH$ with $e>0$  taken
sufficiently small the bound \eqref{eq:12} holds.
\end{lemma}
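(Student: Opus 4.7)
The plan is to follow the standard Mourre-based minimal velocity bound strategy in the spirit of \cite{Gr,SS,CHS1}, adapting it to our Riemannian setting using the Mourre estimate of Lemma~\ref{lemma:10.1.23.9.52}. Concretely, I will construct a uniformly bounded propagation observable $\Phi(t)=\Phi(t)^*$ with the property that
\begin{equation*}
\mathrm{D}_H\Phi(t)\;\ge\;c\,t^{-1}\chi_e(H)(\chi^2_{a,b,+})'(r^2/t^2)\chi_e(H)+R(t),\quad \int_1^\infty\|R(t)\|\,\d t<\infty,
\end{equation*}
where $a=(1+\delta_3)^2E/2$, $b=(1+\delta_2)^2E/2$ and I abbreviate $\chi_e(H)=\chi_{[E-e,E+e]}(H)$. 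Integrating $\tfrac{\d}{\d t}\inp{\e^{-\i tH}u,\Phi(t)\e^{-\i tH}u}=\inp{\e^{-\i tH}u,\mathrm{D}_H\Phi(t)\e^{-\i tH}u}$ and using uniform boundedness of $\Phi(t)$ on the left then yields \eqref{eq:12}.

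For the construction I will take
\begin{equation*}
\Phi(t)=\chi_e(H)\,\chi(r/t)\,\tfrac{A}{t}\,\chi(r/t)\,\chi_e(H),
\end{equation*}
where $\chi\in C_{\mathrm{c}}^\infty(\R_+)$ is chosen with $\chi=1$ on a neighborhood of the $r/t$-support of $(\chi^2_{a,b,+})'(r^2/t^2)$ (that is, on a neighborhood of $[(1+\delta_3)\sqrt{E/2},(1+\delta_2)\sqrt{E/2}]$) and has compact support. Since $\chi(r/t)A/t=2\chi(r/t)(r/t)p_r+\mathrm{lower\ order}$ and $r/t$ is bounded on the support of $\chi$, the estimates \eqref{eq:15}--\eqref{eq:16} together with the energy cutoff $\chi_e(H)$ make $\Phi(t)$ uniformly bounded. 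The Heisenberg derivative splits as the ``Mourre term'' $\chi_e(H)\chi(r/t)(-A/t^2+\i[H,A]/t)\chi(r/t)\chi_e(H)$ plus cross terms involving $\mathrm{D}_H\chi(r/t)$. Using $\chi_e(H)(I-Q_3)=0$ together with Lemma~\ref{lemma:10.1.23.9.52} the Mourre term is bounded below by $(2(1+\delta_1)E/t)\chi_e(H)\chi(r/t)^2\chi_e(H)+O(t^{-2})$.

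The cross terms are of the form $t^{-1}\chi_e(H)\Re\parb{\chi'(r/t)(p_r-r/t)A/t\chi(r/t)}\chi_e(H)$ (up to $O(t^{-3/2-\kappa})$ coming from \eqref{eq:49} and \eqref{eq:4}); this is precisely the kind of expression for which Lemma~\ref{lem:comm-compssC} was prepared, with $T=t^{-1}A\chi(r/t)Q_3$. That lemma lets me symmetrize the cross terms into the manifestly nonnegative form $t^{-1}\Re(T^*\hat\chi'(r/t)(p_r-r/t))$ modulo $O(t^{-1/2-\kappa})$. The arithmetic inequality $2\sqrt{1+\delta_1}>1+\delta_2$ (valid for $0<\delta_3<\delta_2<\delta_1$ as fixed in Subsection~\ref{sec:10.6.8.4.0}) ensures that on the support of $(\chi^2_{a,b,+})'(r^2/t^2)$ the classical radial speed $r/t$ lies strictly below $\sqrt{2(1+\delta_1)E}$; combined with the Mourre lower bound, this yields the required strict positivity of $\mathrm{D}_H\Phi(t)$ in the form displayed above. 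The integrability of the remaining error terms follows from \eqref{eq:10.9.2.23.19}, Condition~\ref{cond:10.6.1.16.24}, and Corollary~\ref{cor:10.6.9.4.24} (maximal velocity bound) applied to the localizations of $r/t$.

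The main obstacle is the control of the cross/commutator terms: the naive bound $(\chi^2)'(r^2/t^2)\le C$ is clearly too crude, since we must recover the $(\chi^2)'(r^2/t^2)$ factor itself with the correct sign. This is where the delicate symmetrization of Lemma~\ref{lem:comm-compssC}—which exploits the uniform bound on $T$ and the Condition~\ref{cond:10.9.2.11.13} estimate $|\mathrm{d}\triangle r|=O(r^{-3/2-\kappa})$—is essential; without it the symmetrization would produce terms of order $t^{-1}$ that cannot be absorbed into the integrable remainder. A secondary technical issue is that the observable $A/t$ is not bounded as $t\to\infty$ (only $\chi(r/t)A/t\chi(r/t)$ on $\Ran\,\chi_e(H)$ is), so all manipulations must be performed after inserting the cutoffs, which is precisely the purpose of the explicit form of $T$ used in Lemma~\ref{lem:comm-compssC}.
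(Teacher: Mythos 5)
Your propagation observable is exactly the one the paper uses, namely $\Phi(t)=t^{-1}B^*AB$ with $B=\chi(r/t)\chi_{[E-e,E+e]}(H)$, and the treatment of the main term is essentially the paper's: the Mourre estimate of Lemma~\ref{lemma:10.1.23.9.52} against the loss $-t^{-2}B^*AB$. (A small correction there: the binding arithmetic condition is not $2\sqrt{1+\delta_1}>1+\delta_2$, which holds trivially, but $\delta_1>\delta_2$; on $\Ran B$ one has $t^{-1}A\le 2(r/t)\sqrt{2(E+2e)}+o(t^0)\le 2(1+\delta_2)E\sqrt{1+2e/E}+o(t^0)$, so positivity of the combined term requires $(1+\delta_1)>(1+\delta_2)\sqrt{1+2e/E}$, i.e.\ $\delta_1>\delta_2$ and $e$ small.)

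The genuine gap is in the cross terms. The operator $t^{-1}\Re\parb{T^*\hat\chi'(r/t)(p_r-r/t)}$ you obtain after symmetrizing via Lemma~\ref{lem:comm-compssC} is \emph{not} ``manifestly nonnegative'': $p_r-r/t$ has no sign, and composing it with the bounded operator $T^*\hat\chi'(r/t)$ and taking real parts does not create one. Its norm is only $O(t^{-1})$, which is not integrable, so it can neither be discarded by sign nor absorbed into the remainder; and neither \eqref{eq:10.9.2.23.19} nor the maximal velocity bound of Corollary~\ref{cor:10.6.9.4.24} helps, since this term is localized where $r/t$ lies in a fixed compact subset of $\R_+$, away from the region those inputs control. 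What closes the argument --- and what your proposal omits entirely --- is the phase-space propagation estimate of Lemma~\ref{lemma:key_prel-local-pert}: one sandwiches the symmetrized cross term between factors $(W+t^{-\sigma})^{1/4}$, verifies by interpolation and \eqref{eq:15} (this is the actual purpose of Lemma~\ref{lem:comm-compssC}, to make $(W+t^{-\sigma})^{-1/4}\Re\parb{T^*(\cdots)}(W+t^{-\sigma})^{-1/4}$ uniformly bounded) and then invokes $\int_1^\infty t^{-1}\inp{\hat\chi(r/t)\e^{-\i tH}u,W^{1/2}\hat\chi(r/t)\e^{-\i tH}u}\,\d t<\infty$ to conclude integrability. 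Without this $W^{1/2}$-weighted estimate the cross term cannot be controlled and the proof does not close.
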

\begin{proof} Let $\chi\in
  C^\infty(\R)$ be given such that $\chi'\in
  C^\infty_{\mathrm{c}}(\R_+)$  and the number
  $\sqrt{(1+\delta_2)^2E/2}$ is to the right of the support of $\chi$. Then
    we shall show that  
\begin{equation}
    \label{eq:45bkq}
    \int_1^\infty
\|\chi(r/t)\e^{-\mathrm{i}tH}u\|^2 t^{-1}\d
t<\infty,
  \end{equation} showing in particular \eqref{eq:12}.

Consider  the following uniformly bounded observables
 \begin{equation*}
  t^{-1}B^*
  AB;\;B:=\chi(r/t)\chi_e(H)\mand A \text { is given by } \eqref{eq:6}.
  \end{equation*} Due to Lemmas \ref{lem:comm-compss} and \ref{lemma:10.1.23.9.52}, for all
  sufficiently small $e>0$
\begin{equation*}
  t^{-1}B^*
  (\D_HA) B\geq  2(1+\delta_1)E t^{-1}B^* B+O(t^{-3}).
  \end{equation*}  Next, using this bound, Lemma \ref{lem:comm-compss} again and an
  estimation of the momentum in terms of the energy as in the proof of
  Lemma \ref{lemma:10.1.16.14.53}, we deduce
\begin{align*}
  B^*
   (\D_H\tfrac At) B&=t^{-1}B^*
   (\D_HA) B-t^{-2}B^*
  AB\geq  ct^{-1}B^*B+O(t^{-2});\\ c&=2\parbb{(1+\delta_1)-(1+\delta_2)\sqrt{1+2e/E}}E .
  \end{align*} Here $c>0$ for $e>0$ small enough.

To complete the  proof of the lemma  it suffices  to bound 
\begin{equation}
  \label{eq:50}
  \int_1^\infty
|\Re \inp{ (\D_{H_0} \chi(r/t))\e^{-\mathrm{i}tH}u,A\chi(r/t)\e^{-\mathrm{i}tH}u}|\,t^{-1}\d
t<\infty.
\end{equation} For that we also introduce $T:=t^{-1}A\chi(r/t)Q_3$ and use
\eqref{eq:18}, Lemma \ref{lem:comm-compss}  and notation of Lemma
\ref{lem:comm-comp} to write, with $\hat \chi \in C_{\mathrm{c}}^\infty(\R_+)$
chosen such that   $\hat \chi=1$ on the support of $ \chi $,
\begin{align*}
  &\inp{
    (\D_{H_0}\chi(r/t))\e^{-\mathrm{i}tH}u,A\chi(r/t)\e^{-\mathrm{i}tH}u}\\
&=\inp{
    (\Re{p_{\chi'(r/t)}}-r/t\chi'(r/t))\hat \chi(r/t)
    \e^{-\mathrm{i}tH}u,T\hat \chi(r/t)
    \e^{-\mathrm{i}tH}u} +O(t^{-1/2-\kappa}).
\end{align*} Although it is here legitimate to replace
$\Re{p_{\chi'}}$ by $p_{\chi'}$ and $A$ by
    $(\partial_ir^2)g^{ij}p_j$ it is preferable to keep the
    symmetrized form. First we note that the energy localization (implemented by the
appearance of the factor $Q_3$) makes $\|T\|$
uniformly bounded. We claim that for any $\sigma\in (0,1+2\kappa]$  also the operators
\begin{equation*}
  S:=(W+t^{-\sigma})^{-1/4}\Re\parb{T^*(\Re{p_{\chi'}}-r/t\chi')}(W+t^{-\sigma})^{-1/4}
\end{equation*} have uniformly bounded norm. Given this property 
 we can bound the integral
\eqref{eq:50} by
\begin{equation*}
  \int_1^\infty
|\inp{
    (W+t^{-\sigma})^{1/4}\hat\chi(r/t)\e^{-\mathrm{i}tH}u,S(W+t^{-\sigma})^{1/4}\hat \chi(r/t)\e^{-\mathrm{i}tH}u}|\,t^{-1}\d
t+C,
\end{equation*} and we conclude by invoking Lemma
\ref{lemma:key_prel-local-pert}.
To bound $S$ we note that interpolation yields
\begin{equation*}
  \|S\|\leq \|\Re\parb{T^*(\Re{p_{\chi'}}-r/t\chi')}(W+t^{-\sigma})^{-1/2}\|,
\end{equation*} and due to Lemma \ref{lem:comm-compssC} we can write 
\begin{equation*}
  \Re\parb{T^*(\Re p_{\chi'}-r/t\chi')}=T^*\chi'(r/t)(p_r-r/t)+O(t^{-1/2-\kappa}).
\end{equation*} Whence, using  here also \eqref{eq:15}, it  follows  that
\begin{equation*}
  \|S\|\leq \|T^*\|\times
  \|\chi'(r/t)(p_r-r/t)(W+t^{-\sigma})^{-1/2}\|+C_1t^{\sigma/2-1/2-\kappa}\leq C_2.
\end{equation*}
\end{proof}
\begin{corollary}\label{cor:10.6.9.4.24bb}
For  all  $ u\in \chi_{[E-e,E+e]}(H)\vH$ with $e>0$  taken
sufficiently small  \eqref{eq:41} holds.
\end{corollary}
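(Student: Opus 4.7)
The plan is to reduce \eqref{eq:41} to showing $(I-Q_5(t))\e^{-\mathrm{i}tH}u_e\to 0$ for $u_e\in\chi_{[E-e,E+e]}(H)\vH$, and then to establish this minimal velocity strong limit via a Deift-type compactness argument applied to the integral bound coming from Lemma~\ref{lemma:mini}. First, since $Q_3$ equals $1$ on a neighbourhood of $[E-e,E+e]$, one has $Q_3\e^{-\mathrm{i}tH}u_e=\e^{-\mathrm{i}tH}u_e$, and writing $u_t:=\e^{-\mathrm{i}tH}u_e$ the algebraic identity
\begin{equation*}
(I-Q_3Q_4Q_5^2Q_4Q_3)u_t=Q_3(I-Q_4^2)u_t+Q_3Q_4(I+Q_5)(I-Q_5)Q_4u_t,
\end{equation*}
together with Lemma~\ref{lemma:10.1.16.14.53} (which gives $(I-Q_4(t))u_t\to 0$) and the identity $(I-Q_4^2)=(I-Q_4)(I+Q_4)$, reduces matters to showing $\|(I-Q_5(t))u_t\|\to 0$.

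To invoke Lemma~\ref{lemma:mini} directly I would use that by the cutoff convention, $\chi_{a,b,+}\equiv 1$ on an open neighbourhood of $[b,\infty)$, where $a=(1+\delta_3)^2E/2$ and $b=(1+\delta_2)^2E/2$. Consequently the function $\chi_0(\rho):=1-\chi_{a,b,+}(\rho^2)$ satisfies $\chi_0'\in C_{\mathrm c}^\infty(\R_+)$ and has support strictly to the left of $\sqrt{b}=\sqrt{(1+\delta_2)^2E/2}$. The bound \eqref{eq:45bkq} established in the proof of Lemma~\ref{lemma:mini} then applies to $\chi_0$ and yields
\begin{equation*}
\int_1^\infty\|(I-Q_5(t))u_t\|^2\,t^{-1}\,\mathrm dt<\infty.
\end{equation*}

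Set $F(t)=\|(I-Q_5(t))u_t\|^2$. Applying the first identity in \eqref{eq:18w} with $\phi=(1-\chi_{a,b,+})^2$ (so $\phi'\le 0$ is supported in some $r^2/t^2\in[a,b-\tau]$, $\tau>0$) yields
\begin{equation*}
F'(t)=\tfrac{2}{t}\Re\inp{u_t,(r/t)\phi'(r^2/t^2)p_ru_t}-\tfrac{2}{t}\inp{u_t,(r/t)^2\phi'(r^2/t^2)u_t}.
\end{equation*}
Both multipliers $(r/t)^k\phi'(r^2/t^2)$, $k=1,2$, are uniformly bounded in $t$, since $r/t\le\sqrt{b-\tau}$ on the support of $\phi'$. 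The bound $p_r^*p_r\le 2H_0$, which follows from Lemma~\ref{lem:comm-comp} with $\chi=1$, combined with energy localization, gives $\|p_ru_t\|\le C_E\|u\|$. Cauchy--Schwarz then produces the uniform estimate $|F'(t)|\le C/t$ for all $t\ge 1$.

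The proof then closes with a standard compactness argument: if $F(t_n)\ge\varepsilon$ along some sequence $t_n\to\infty$, the bound $|F'|\le C/t$ forces $F(s)\ge\varepsilon/2$ on $[t_n,(1+\tau')t_n]$ with $\tau'=\varepsilon/(4C)$, so each such interval contributes at least a positive constant $c(\varepsilon)$ to $\int F(s)s^{-1}\,\mathrm ds$; extracting disjoint intervals then contradicts the finiteness of the integral, and hence $F(t)\to 0$. The only delicate points are the strict support verification yielding applicability of Lemma~\ref{lemma:mini} to the slow-region indicator, and establishing the uniform derivative bound $|F'(t)|\le C/t$; given these, the Deift compactness step is routine.
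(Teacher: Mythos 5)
Your proposal is correct, and it uses the same key ingredients as the paper (Lemma~\ref{lemma:10.1.16.14.53} for the $Q_4$ factor and the bound \eqref{eq:45bkq} from the proof of Lemma~\ref{lemma:mini} for the $Q_5$ factor), but it closes the argument by a different mechanism. The paper applies the subsequence argument of Step~3 of Lemma~\ref{lemma:10.1.16.14.53} to the full quantity $q(t)=\inp{u_t,(I-Q_3Q_4Q_5^2Q_4Q_3)u_t}$: one extracts $t_n\to\infty$ with $q(t_n)\to 0$ and then shows $\dot q\in L^1$ (which requires controlling the derivatives of \emph{both} $Q_4$ and $Q_5$ via Corollary~\ref{cor:10.6.9.4.24} and \eqref{eq:12}, with the attendant symmetrization and commutator estimates), so that $q$ is Cauchy and hence tends to $0$. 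You instead peel off the $Q_4$ factor using the already established strong limit, and for $F(t)=\|(I-Q_5)u_t\|^2$ you prove only the uniform bound $|F'(t)|\le C/t$ (which follows from \eqref{eq:18w}, boundedness of $(r/t)^k\phi'(r^2/t^2)$ on the support of $\phi'$, and $\|p_ru_t\|\le C$ via \eqref{eq:15} and energy localization) and then invoke the Deift-type disjoint-interval compactness argument against $\int F(t)\,t^{-1}\,\d t<\infty$. This buys you a somewhat lighter verification — no integrability of the derivative of the full product is needed, and \eqref{eq:12d} enters only through Lemma~\ref{lemma:10.1.16.14.53} — at the cost of the (routine) compactness step. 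One cosmetic point: your $\chi_0(\rho)=1-\chi_{a,b,+}(\rho^2)$ is even in $\rho$, so literally $\chi_0'\notin C^\infty_{\mathrm c}(\R_+)$; since $r\ge -1$ and $r^2/t^2<a$ on $O$ for large $t$, one simply replaces $\chi_0$ by a function agreeing with it on $[0,\infty)$ and constant on $(-\infty,0]$, exactly as the paper implicitly does when passing between cutoffs in $r/t$ and in $r^2/t^2$.
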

\begin{proof}
  We use Lemma \ref{lemma:10.1.16.14.53}, Corollary
  \ref{cor:10.6.9.4.24}, \eqref{eq:45bkq} and the subsequence
  argument in Step 3 in the proof of Lemma
  \ref{lemma:10.1.16.14.53}. Indeed there exists a sequence $t_n\to \infty$ such
that
\begin{equation*}
\lim_{n\to \infty}q(t_n)=0;\;q(t):=\inp{\e^{-\mathrm{i}tH}u,(I-Q_3Q_4Q_5^2Q_4Q_3)\e^{-\mathrm{i}tH}u},
  \end{equation*} and the time-derivative of $q$ is integrable due to Corollary
  \ref{cor:10.6.9.4.24} and  \eqref{eq:45bkq}.
\end{proof}

\subsection{Proof of Lemma~\ref{lemma:10.3.23.17.20}}\label{sec:10.1.23.12.54}
We prove Lemma~\ref{lemma:10.3.23.17.20} alone the line of the proof
of Lemma \ref{lemma:10.5.31.23.24} using the properties
\eqref{eq:41}--\eqref{eq:12}.
So
 let $Q_{\mathrm{p}}$ be the operator defined by (\ref{10.6.6.2.52}),
 and let $e>0$ be small enough.
Then the property \ref{item:10.3.23.17.21} follows by mimicking the
proof of Lemma \ref{lemma:10.6.4.23.38} 
using \eqref{eq:41}--\eqref{eq:12}. This amounts to showing for $u\in
\chi_{[E-e,E+e]}(H){\mathcal H}$ and for the
same quantity $T_N=T_N(t)$ as before the
existence of  $R=R(t)\in{\mathcal B}({\mathcal
    H})$  such that 
\begin{subequations}
\begin{align}\label{eq:37k}
&\int_{t_0}^\infty |\inp{\e^{-\i tH}u,R\e^{-\i tH}u}|\,\d t =o(t_0^0)\text{ uniformly in } N\geq1,\\
&\tfrac{\mathrm{d}}{\mathrm{d}t}\inp{\e^{-\i tH}u,Q_4Q_5(I-T_N^{-1})Q_5Q_4\e^{-\i tH}u}
\le \inp{\e^{-\i tH}u,R\e^{-\i tH}u}.
\label{eq:10.6.4.13.13k}
\end{align}
  \end{subequations} We compute the derivative in
  \eqref{eq:10.6.4.13.13k}. The contribution from
  $\D_HT_N^{-1}=\D_{H_0}T_N^{-1}+\i[V,T_N^{-1}]$ is treated as before
  (note the trivial bound $Q_5\i[V,T_N^{-1}]Q_5=O(t^{-1-\eta})$). It
  remains to consider the contribution 
  \begin{equation*}
   2\Re \inp{\e^{\i tH}u,\parb {\D_HQ_4Q_5}(I-T_N^{-1}(t))Q_5Q_4\e^{-\i tH}u}. 
  \end{equation*} For that we compute
  $\D_HQ_4Q_5=(\D_HQ_4)Q_5+Q_4\D_HQ_5$, invoke \eqref{eq:18w} and
  Lemmas \ref{lem:comm-comp} and \ref{lem:comm-compss}, and use
  \eqref{eq:12d} and  \eqref{eq:12} to treat the contributions from
  $\D_HQ_4$ and $\D_HQ_5$, respectively. Note that here the implementation of
  \eqref{eq:12d} and  \eqref{eq:12}  requires symmetrization. For that
  part we use also Corollary \ref{cor:furth-comm-comp}.

As for the property
\ref{item:10.3.23.17.22} we use again use the proof of Lemma
\ref{lemma:10.6.4.23.38}. The contribution from $\D_HT_{N(t)}^{-1}
=\D_{H_0}T_{N(t)}^{-1}+\i[V,T_{N(t)}^{-1}]$
does not need elaboration. As for the contribution from $\D_HQ_4Q_5$
we compute as above using \eqref{eq:18w}. We claim that this
contribution indeed is $O(t^{-1-\delta'})$ for $\delta'\leq
(1+\delta_1)/2$, as may be  seen by using the
localization provided by the factor $Q_6^2$. Indeed due to Lemma
\ref{lem:comm-comp} we can bound for any $\chi\in C_{\rm c}^\infty (\R_+)$
\begin{equation*}
  \|\parb{\chi(r/t)(p_r-r/t)+\text {h.c.}} Q_6^2\| \leq C t^{-1/2-\delta_1/2}.
\end{equation*}

The property \ref{item:10.3.23.17.23} is proved by commuting as in the
proof of Lemma \ref{lemma:10.5.31.23.24} \ref{item:10.5.31.23.27}.
Thus the lemma is proved.
\endproof

%\section{Proof of Corollaries  \ref{cor:asymw} and \ref{cor:asymwb}}\label{sec:cor12}

\end{document}